\newcommand*{\mailto}[1]{\href{mailto:#1}{\nolinkurl{#1}}}
\theoremstyle{plain}
\newtheorem{theorem}{Theorem}[section]
\newtheorem{corollary}[theorem]{Corollary}
\newtheorem{lemma}[theorem]{Lemma}
\newtheorem{remark}[theorem]{Remark}
\newtheorem{example}[theorem]{Example}
\newtheorem{proposition}[theorem]{Proposition}
\newcommand{\nn}{\nonumber}
\newcommand{\be}{\begin{equation}}
\newcommand{\ee}{\end{equation}}
\numberwithin{equation}{section}
 \DeclareMathOperator{\dom}{dom}
\DeclareMathOperator{\ran}{ran} 
\DeclareMathOperator{\ess}{ess}
\DeclareMathOperator{\ac}{ac}
\DeclareMathOperator{\diag}{diag}
\DeclareMathOperator{\loc}{loc}\DeclareMathOperator{\comp}{comp}
\newcommand\R{{\mathbb{R}}}
\newcommand\C{{\mathbb{C}}}
\newcommand\N{{\mathbb{N}}}
\newcommand\Z{{\mathbb{Z}}}
\newcommand\gH{{\mathfrak{H}}}
\newcommand\gS{{\mathfrak{S}}}
\newcommand{\gG}{{\Gamma}}
\newcommand{\gt}{\mathfrak{t}}
\newcommand{\gb}{\mathfrak{b}}
\newcommand{\gd}{{d}}
\newcommand{\gA}{{\alpha}}
\newcommand{\gB}{{\beta}}
\newcommand\cH{{\mathcal{H}}}
\newcommand\cI{{\mathcal{I}}}
\newcommand\rH{{\mathbf{H}}}
\newcommand\rD{{\rm{d}}}
\begin{document}

\title[Hamiltonians with point interactions]{1--D Schr\"odinger operators\\ with local point interactions: a review}

\author[A.\ Kostenko]{Aleksey Kostenko}
\address{Fakult\"at f\"ur Mathematik\\
Universit\"at Wien\\
Nordbergstr. 15\\
1090 Wien, Austria}
\email{\mailto{Oleksiy.Kostenko@univie.ac.at; duzer80@gmail.com}}

\author[M.\ Malamud]{Mark Malamud}
\address{Institute of Applied Mathematics and Mechanics\\
NAS of Ukraine\\ R. Luxemburg str. 74\\
Donetsk 83114\\ Ukraine}
\email{\mailto{mmm@telenet.dn.ua}}

\dedicatory{Dedicated with great pleasure to Fritz Gesztesy on the occasion of his 60th birthday.}
\thanks{The research was funded by the Austrian Science Fund (FWF) under project No.\ M1309--N13}
\thanks{{\it "Spectral Analysis, Integrable Systems, and Ordinary Differential Equations"}, H. Holden et
al. (eds), Proceedings of Symposia in Pure Mathematics {\bf 87}, Amer. Math. Soc. (to appear). }


\begin{abstract}
We review recent developments  in the theory of 1-D Schr\"odinger operators with local point interactions on a discrete set. The progress in this area was stimulated by recent advances in the extension theory of symmetric operators and in the theory of ordinary differential operators with distributional coefficients. 
\end{abstract}

\maketitle

\tableofcontents

\section{Introduction}\label{intro}

Schr\"odinger operators with potentials supported on a finite or a discrete set of points are known as solvable models of quantum mechanics. These models called "solvable" since their resolvents can be computed explicitly in terms of the interaction strengths and the location of the sources. As a consequence the spectrum, the eigenfunctions, and further spectral properties can be determined explicitly. Models of this type have been extensively discussed in the physical literature, mainly in atomic, nuclear and solid state physics. A comprehensive treatment of Schr\"odinger operators with point interactions as well as a detailed list of references can be found in the monograph \cite{AGHH88} published in 1988. In its second edition \cite{AGHH05}, published in 2005 by the American Mathematical Society, an account on the progress in the field for the period from 1988 until 2005 was summarized by Pavel Exner (see Appendix K "Seize ans apr$\grave{\rm{e}}$s" in \cite{AGHH05}). Our main aim is to review recent advances in the spectral theory of 1--D Schr\"odinger operators with local point interactions on a discrete set of points.

Historically, the first influential paper on 1--D Schr\"odinger operators with point interactions was the paper by Kronig and Penney \cite{KroPen31}. They considered the Hamiltonian 
\be\label{eq:KP}
\rH_{KP}=-\frac{d^2}{dx^2}+\sum_{k\in\Z}a\delta(x-k),
\ee
where $a\in\R$ is fixed and $\delta$ is  a Dirac delta-function.
This Hamiltonian, now known as "the Kronig--Penney model", describes a nonrelativistic  electron moving in a fixed crystal lattice.
Our main objects of interest are the following operators $\rH_{X,\gA,q}$ and $\rH_{X,\gB,q}$ associated with the formal differential expressions
\begin{eqnarray}\label{I_01A}
\ell_{X,\gA,q}:=-\frac{\rD^2}{\rD x^2}+q(x)+\sum_{x_{n}\in X}\gA_n\delta_n,\\
 \ell_{X,\gB,q}:=-\frac{\rD^2}{\rD x^2}+q(x)+\sum_{x_{n}\in X}\gB_n(\cdot,\delta'_n)\delta'_n, \label{I_01B}
\end{eqnarray}
where $\delta_n:=\delta(x-x_n)$.
These operators describe $\delta$- and $\delta'$-interactions, respectively, on a discrete set $X=\{x_n\}_{n\in I}\subset\cI=(a,b)$, and the coefficients $\gA_n,\ \gB_n\in\R$ are called the strengths of the interaction at the point $x=x_n$. Clearly, \eqref{eq:KP} is a particular case of \eqref{I_01A}  with $\cI=\R$, $X=\Z$, $\gA_n\equiv \gA$, and $q\equiv 0$.

The spectral properties of Hamiltonians associated with \eqref{I_01A} and \eqref{I_01B} are widely studied under the assumption that there is a positive uniform lower bound on the distance between interaction centers,
\be\label{eq:d>0}
\gd_*:=\inf_{i,j}|x_i-x_j|>0.
\ee
A comprehensive account on related results can be found in the monograph \cite{AGHH05}.
One of the main reasons for the assumption \eqref{eq:d>0} is that without this assumption even in the case $q\equiv 0$ the Hamiltonian \eqref{I_01A} might be non-self-adjoint, i.e., symmetric with nontrivial deficiency indices  (the first example was given by Shubin and Stolz in \cite{ShuSto94}). In the last few years this difficulty has been overcome due to recent advances in the extension theory of symmetric operators and in the theory of ordinary differential operators with distributional coefficients.

%

We would like to complete  the introduction with a few words about Fritz's work on point interactions.
It is difficult to overestimate his influence on the theory of Schr\"odinger operators with point interactions.
He is one of the founders and promoters of the spectral theory of Schr\"odinger operators with point interactions.
Under his influence, and with his participation over a long period, the subject has taken its present form.

Roughly speaking, his research in the field can be divided into two parts: (i) the study of Schr\"odinger operators with infinitely many interaction centers, and (ii) a rigorous definition of various classes of point interactions. His papers \cite{GesHol87} and \cite{GesSeb87} written jointly with Holden and \v Seba, respectively, originated a new concept of relativistic
and non-relativistic Hamiltonians with $\delta'$-interactions and had a long-year discussion in both physical and mathematical literature.
 It is also difficult to overestimate the role of the monograph \cite{AGHH88}, which
represents the foundation of a new and rapidly developing branch in the spectral theory of Schr\"odinger operators.

 Happy Birthday, Fritz, and many new  important and influential results!

{\bf Notation.}
$\N, \C, \R$ have the usual meaning; $\R_+=[0,\infty)$.

For a potential $q$ and sequences $\gA$ and $\gB$ we set $q^\pm(x):=(|q(x)|\pm q(x))/2$, $\gA_k^\pm:=(|\gA_k|\pm \gA_k)/2$, and $\gB_k^\pm:=(|\gB_k|\pm \gB_k)/2$.

For a self--adjoint
operator $T = T^*$ acting in a Hilbert space $\gH$,  $E_T(\cdot)$ denotes the spectral measure, $T^-:= TE_T(-\infty, 0)$ and $T^+:= TE_T(0,+\infty)$ are the negative and positive parts of $T$, respectively, and $\kappa_\pm(T):=\dim\big(\ran(T_\pm)\big)$ (if $\kappa_\pm(T)<\infty$, then $\kappa_\pm(T)$ is  the number of negative/positive eigenvalues of $T$ counting multiplicities). Further, $\sigma(T)$ and $\rho(T)$ are the spectrum and the resolvent set of $T$, respectively. By $\sigma_{\rm p}(T)$, $\sigma_{\rm pp}(T)$, $\sigma_{\ess}(T)$, $\sigma_{\ac}(T)$, and   $\sigma_{\rm{sc}}(T)$ we denote point, pure point, essential, absolutely continuous
and singular continuous spectra of $T$.

Let $X$ be a discrete subset of $\R_+$, $X=\{x_k\}_1^\infty$ and $x_k\uparrow +\infty$.
Also we shall use the following Sobolev spaces ($n\in\N$ and $p\in[1,\infty]$)
%
%
\begin{eqnarray*}
&W^{n,p}(\R_+\setminus X):=\{f\in L^p(\R_+): f\in W^{n,p}[x_{k-1},x_k],\  k\in \N,\,  f^{(n)}\in
  L^p(\R_+)\},\\
&W^{n,p}_0(\R_+\setminus X):=\{f\in
W^{n,p}(\R_+): f(x_k)=...=f^{(n-1)}(x_k)=0,\, k\in \N\},\\
&W^{n,p}_{\comp}(\R_+\setminus X):=W^{n,p}(\R_+\setminus X)\cap L^p_{\comp}(\R_+).
\end{eqnarray*}

\section{Hamiltonians with $\delta$-interactions}\label{Sec:II}

\subsection{Definition of $\delta$-interactions}\label{sec:2.1} There are several ways to associate an operator with the differential expression $\ell_{X,\gA,q}$. For example, a $\delta$-interaction at a point $x=x_0$ may be defined using the \emph{form method}, that is the operator $-\frac{\rD^2}{\rD x^2}+\gA_0\delta(x-x_0)$
is defined as an operator associated in $L^2(\R)$ with the quadratic form 
\[
\mathfrak{t}[f]:=\int_{\R}|f'(t)|^2dt+\gA_0|f(x_0)|^2,\qquad \dom(\gt):= W_2^1(\R),
\]
which is closed and lower semibounded by the KLMN Theorem (see \cite[p. 168]{ReeSim75}).
Another way to introduce a local interaction at $x_0$ is to consider a symmetric operator $\rH_{\min}:=\rH_{\min}^-\oplus \rH_{\min}^+$, where $\rH_{\min}^-$ and $\rH_{\min}^+$ are the minimal operators generated by $-\frac{\rD^2}{\rD x^2}$ in $L^2(-\infty,x_0)$ and $L^2(x_0,+\infty)$, respectively, and then to consider its extension subject to \emph{the boundary conditions} connecting $x_0+$ and $x_0-$:\be
f(x_0+)=f(x_0-),\quad f'(x_0+)-f'(x_0-)=\gA_0f(x_0).
\ee

Both these methods have disadvantages if the set $X$ is infinite. The form method works only for the case of lower semibounded operators. If we apply the method of boundary conditions, then the corresponding minimal operator $\rH_{\min}$ has infinite deficiency indices and the description of self-adjoint extensions of $\rH_{\min}$ is a rather complicated problem in this case.

In what follows, without loss of generality we shall consider $\ell_{X,\gA,q}$ on the positive semi-axis $\R_+$ assuming that the sequence $X=\{x_k\}_{k=1}^\infty$ is strictly increasing, $0=x_0<x_1<x_2<..<x_k<x_{k+1}<...$, and satisfies $x_k\uparrow +\infty$. We shall define the Hamiltonian with $\delta$-interactions on $X$ as follows: let
\begin{align}
\rH_{X,\gA,q}^0f:=\tau_qf=&-f''+q(x)f,\quad f\in\dom(\rH_{X,\gA,q}^0),\label{eq:h_a1}\\
\dom(\rH_{X,\gA,q}^0):=\Big\{f\in & W^{2,1}_{\comp}(\R_+\setminus X):\,  f(0)=0, 
\Big. \nonumber
\\ &\Big.\begin{array}{c}
f(x_k+)=f(x_k-)\\ f'(x_k+)-f'(x_k-)=\gA_k f(x_k) \end{array},\, \tau_q f\in L^2(\R_+)\Big\}.\label{eq:h_a2}
\end{align}
Clearly, the operator $\rH_{X,\gA,q}^0$ is symmetric. Let us denote its closure  by $\rH_{X,\gA,q}$:
\be\label{eq:h_a3}
\rH_{X,\gA,q}:=\overline{\rH_{X,\gA,q}^0}.
\ee
 If $q=\bold{0}$, we set $\rH_{X,\gA}:=\rH_{X,\gA,0}$; if either $X=\emptyset$ or $\gA=\bold{0}$, then $\rH_{X,\gA,q}$ will be denoted by $\rH_{q}$.

An alternative approach was proposed by A. Savchuk and A. Shkalikov in \cite{SavShk99} (see also \cite{SavShk03})\footnote{This regularization method was used in \cite{AtkEveZet88} in the particular case
$q(x) = 1/x$ and then further developed for generic $W^{-1,2}_{\loc}$-distributional potentials in \cite{SavShk99}, \cite{SavShk03}.
For further historical remarks we refer to \cite{EckGesNicTes12}, \cite{EckGesNicTes12b}, \cite{GorMih10}, and \cite{GorMih10b}.}.
Namely, they suggested to consider \eqref{I_01A} with the help of quasi-derivatives.
The potential $v(x)=q(x)+\sum_{k=1}^\infty \gA_k\delta(x-x_k)$ is
a derivative of the function $V(x)=\int_0^x q(t)dt+ \sum_{x_k<x}\gA_k$ in the sense of  distributions. Therefore, we can rewrite the differential expression \eqref{I_01A} as follows
\be\label{eq:2.2}
\ell_{X,\gA,q}\, y=\ell_{V'}\, y:=-(y^{[1]})'-V(x)y^{[1]}-V^2(x)y,\quad y^{[1]}:=y'-V(x)y,
\ee
and then define the operator $\rH_{V'}^0$ by \eqref{eq:2.2} on the domain  
\be\label{eq:2.3}
\dom(\rH_{V'}^0)=\{f\in L^2_{\comp}(\R_+):\, f,f^{[1]}\in AC_{\loc}(\R_+),\, f(0)=0,\, \ell_{V'}f\in L^2(\R_+)\}.
\ee
It is straightforward to check that the  operators $\rH_{X,\gA,q}^0$ and $\rH_{V'}^0$ defined by \eqref{eq:h_a1}--\eqref{eq:h_a2} and by \eqref{eq:2.2}--\eqref{eq:2.3}, respectively, coincide (see \cite{SavShk03}).
This definition preserves the main features of the classical Sturm--Liouville theory (for instance, Weyl--Titchmarsh theory \cite{EckTes12}). Moreover, it allows ones to study the direct and inverse spectral problems for 1--D Schr\"odinger operators with potentials distributions. We decided not to discuss this topic here in order to keep our review at a reasonable length (however, see remarks at the end of this section). 

Let us emphasize that definition \eqref{eq:2.2}--\eqref{eq:2.3} is applicable only under the assumption that $V\in L^2_{\loc}(\R_+)$, that is, the original potential is a $W^{-1,2}_{\loc}$-distribution (see discussion in \cite{SavShk03}). Clearly, this excludes the case of $\delta'$-interactions.

Next assuming that $q\in L^\infty(\R_+)$, we specify a description of the domains $\dom(\rH_{X,\gA,q})$ and $\dom(\rH_{X,\gA,q}^*)$
equipped with the graph norms of the operators $\rH_{X,\gA,q}$ and $\rH_{X,\gA,q}^*$, respectively.
   \begin{proposition}\label{prop2.1}
Let $q\in L^\infty(\R_+)$. Then:

\begin{itemize}
\item[(i)]  The operator $\rH_{X,\gA,q}$  is symmetric and its adjoint 
 is given by the same differential expression $\tau_q$ on the domain
 \be
 \dom\big(\rH_{X,\gA,q}^*\big)=\Big\{f\in  W^{2,2}(\R_+\setminus X): f(0)=0, 
\begin{array}{c}
f(x_k+)=f(x_k-)\\ f'(x_k+)-f'(x_k-)=\gA_k f(x_k) \end{array}\Big\}.\label{eq:2.5B}
 \ee
\item[(ii)] Assume additionally that  $\gd_* >0$,  
\be
\gd_*:=\inf_{k}\gd_k,\quad  \text{and} \quad \gd_k:=x_k-x_{k-1},\quad k\in\N.
\ee 
 Then
the embedding
  \begin{align}\label{2.8}
\dom(\rH_{X,\gA,q}) \hookrightarrow \dom(\rH_{X,\gA,q}^*)  \hookrightarrow  W^{1,2}(\R_+)
    \end{align}
holds  and is continuous.
\end{itemize}
   \end{proposition}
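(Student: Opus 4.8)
The plan is to handle the two items separately: (i) is a Lagrange-identity computation that identifies the adjoint and re-proves symmetry, while (ii) rests on an interpolation estimate whose uniformity in $k$ is exactly where the hypothesis $\gd_*>0$ is needed.

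For (i), I would first pin down the action of the adjoint. Since $\dom(\rH_{X,\gA,q}^0)$ contains all smooth functions compactly supported inside a single interval $(x_{k-1},x_k)$, testing the defining identity $\langle\tau_q f,g\rangle=\langle f,h\rangle$ against such $f$ forces $\tau_q g=h$ in the distributional sense on each interval; as $q\in L^\infty(\R_+)$ and $g,h\in L^2(\R_+)$, this gives $g''=qg-h\in L^2(\R_+)$, hence $g\in W^{2,2}[x_{k-1},x_k]$ for every $k$ and $g''\in L^2(\R_+)$, i.e. $g\in W^{2,2}(\R_+\setminus X)$. I would then sum the Lagrange identity $\langle\tau_q f,g\rangle-\langle f,\tau_q g\rangle=\sum_k [f,g]\big|_{x_{k-1}+}^{x_k-}$ over the intervals, with the sesquilinear boundary form $[f,g](x):=f'(x)\overline{g(x)}-f(x)\overline{g'(x)}$. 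Using that $f$ is compactly supported (no contribution at $+\infty$), that $f(0)=0$, and the jump relations satisfied by $f$, the left endpoint at $0$ produces $-f'(0)\overline{g(0)}$, which vanishes for all admissible $f$ only if $g(0)=0$, while the contribution at each $x_k$ reduces to $f'(x_k-)\big(\overline{g(x_k-)}-\overline{g(x_k+)}\big)+f(x_k)\big(\overline{g'(x_k+)}-\overline{g'(x_k-)}-\gA_k\overline{g(x_k+)}\big)$. Since $f'(x_k-)$ and $f(x_k)$ may be prescribed independently by a routine choice of local test functions, this expression vanishes for every $f$ precisely when $g(x_k-)=g(x_k+)$ and $g'(x_k+)-g'(x_k-)=\gA_k g(x_k)$, where the reality of $\gA_k$ is exactly what makes the adjoint conditions coincide with the original ones. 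This identifies $\dom(\rH_{X,\gA,q}^*)$ with the stated set and simultaneously confirms symmetry.

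For (ii), the first embedding is essentially free: since $\rH_{X,\gA,q}\subseteq\rH_{X,\gA,q}^*$, the two operators agree on $\dom(\rH_{X,\gA,q})$, so their graph norms coincide there and the inclusion is isometric. The substance is the continuity of $\dom(\rH_{X,\gA,q}^*)\hookrightarrow W^{1,2}(\R_+)$. Given $f\in\dom(\rH_{X,\gA,q}^*)$ we already know $f$ is continuous on $\R_+$ (first jump condition) and $f,f''\in L^2(\R_+)$, so it remains to bound $\|f'\|_{L^2(\R_+)}$ by the graph norm. I would do this interval by interval, the crucial point being a constant uniform in $k$. Rescaling the fixed inequality $\|g'\|_{L^2(0,1)}^2\le\tfrac12\|g''\|_{L^2(0,1)}^2+C_0\|g\|_{L^2(0,1)}^2$ to an interval of length $\ell$ gives $\|f'\|^2\le\tfrac{\ell^2}{2}\|f''\|^2+\tfrac{C_0}{\ell^2}\|f\|^2$ there; since $\ell=\gd_k$ may be large this is not yet uniform, so I would subdivide each $(x_{k-1},x_k)$ into finitely many pieces of length in $[\gd_*,2\gd_*)$ (possible because $\gd_k\ge\gd_*$) and apply the rescaled estimate on each piece. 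Summing over the pieces and then over $k$ yields the uniform bound
\[
\|f'\|_{L^2(\R_+)}^2\le 2\gd_*^2\,\|f''\|_{L^2(\R_+)}^2+\frac{C_0}{\gd_*^2}\,\|f\|_{L^2(\R_+)}^2 .
\]
Finally $\|f''\|_{L^2}\le\|\tau_q f\|_{L^2}+\|q\|_\infty\|f\|_{L^2}=\|\rH_{X,\gA,q}^*f\|_{L^2}+\|q\|_\infty\|f\|_{L^2}$ converts this into a bound by the graph norm, which together with the continuity of $f$ gives the asserted continuous embedding into $W^{1,2}(\R_+)$.

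The main obstacle is precisely the uniformity in the interpolation step: on a single short interval no estimate $\|f'\|^2\le\alpha\|f''\|^2+\beta\|f\|^2$ can hold with $\alpha,\beta$ independent of the length (as $f(x)=x$ shows when the length tends to $0$), so a separation hypothesis is indispensable, and $\gd_*>0$ is exactly what lets the subdivision produce length-controlled pieces and hence a $k$-independent constant. A secondary technical point, in (i), is the explicit construction of compactly supported test functions realizing independent values of $f(x_k)$ and $f'(x_k-)$ while respecting $f(0)=0$, the prescribed jumps, and $\tau_q f\in L^2$; this is routine but is what justifies reading off the boundary conditions on $g$.
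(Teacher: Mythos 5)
Your proof is correct and follows essentially the same route as the paper: part (i) is the integration-by-parts (Lagrange identity) argument, spelled out in detail, and part (ii) rests on a uniform interpolation estimate on the subintervals, with $\gd_*>0$ supplying the $k$-independent constant. The paper simply invokes Kato's inequality (IV.1.12) for that uniform estimate instead of your rescaling-and-subdivision derivation, but the substance is identical.
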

\begin{proof}
(i) follows from integration by parts of the expression $(\rH_{X,\gA,q}f, g)$.

(ii) If $\gd_* >0$, then applying the Sobolev embedding theorem to the spaces $W^{2,2}[x_{k-1},x_k]$,  $k\in\N$ (see \cite[inequality (IV.1.12)]{Kato66}  and also the proof of \cite[Proposition 2.1(ii)]{KosMal12}), we conclude that $W^{2,2}(\R_+\setminus X) = \bigoplus^{\infty}_{k=1}W^{2,2}[x_{k-1},x_k]$ is continuously embedded into
$W^{1,2}(\R_+\setminus X) = \bigoplus^{\infty}_{k=1}W^{1,2}[x_{k-1},x_k]$. 
The description \eqref{eq:2.5B} of $\dom(\rH_{X,\gA,q}^*)$ completes the proof.
   \end{proof}
\begin{remark}\label{rem:2.1}
Let us stress that in  the case $d_*=0$ the embedding \eqref{2.8} depends on  $\gA$ and might be false (see Example \ref{ex:2.2} and also Remark \ref{rem:2.16}(iii) below).
\end{remark}

\begin{example}\label{ex:2.2}
Let $X=\{x_k\}_{k=1}^\infty$ be such that $\gd_{2k-1}=\gd_{2k}=\frac{1}{k}$, $k\in\N$. Let also  $\gA_k= \frac{2}{\gd_k}$, $k\in\N$. Consider the Hamiltonian
\be
\rH:= \rH_{X,\gA,q}= -\frac{d^2}{dx^2}-\sum_{k=1}^\infty \frac{2}{\gd_k}\delta(x-x_k)
\ee
Define the function $f:\R_+\to \R$ as follows: $f(x)=x$ on $[0,1]$; $f(x)=x_{4k-2}-x$ if $x\in [x_{4k-3},x_{4k-1}]$ and $f(x)=x-x_{4k}$ if $x\in[x_{4k-1},x_{4k+1}]$, $k\in\N$. Clearly, $f''(x)=0$ for all $x\in \R_+\setminus X$ and
\[
\int_0^\infty |f(x)|^2dx=\sum_{k=1}^\infty\frac{\gd_k^3}{3} =\sum_{k=1}^\infty\frac{2}{3k^3}<\infty.
\]
Moreover, it is straightforward to check that the function $f$ satisfies boundary conditions \eqref{eq:h_a2} for all $k\in\N$. Therefore,
$f\in \dom(\rH^*)$.  However, $f'(x)=1$ for all $x\in\R_+\setminus X$ and hence $f\notin W^{1,2}(\R_+\setminus X)$.

Note that the operator  $\rH$ is not lower semibounded (see Theorem \ref{th:brinck}(ii)).
Moreover, it can be checked that it is symmetric with $n_{\pm}({\rH}) =1.$

Apparently, it is possible to construct examples of self-adjoint Hamiltonians ${\rH}_{X,\gA} = {\rH}^*_{X,\gA}$ such that the embedding \eqref{2.8} fails.
  \end{example}
   {\em Further references:} an extension of the Savchuk--Shkalikov approach to the case of more general Sturm--Liouville equations, as well as to operators with matrix-valued coefficients, can be found in  
 \cite{EckTes12}, \cite{EckGesNicTes12}, \cite{EckGesNicTes12b}, \cite{GorMih10}, \cite{GorMih10b}, \cite{MirSaf11}.
 
 Sturm--Liouville operators on finite intervals with singular potentials have also been considered in the framework of the inverse spectral theory. In particular, the inverse spectral problems of reconstruction of the potential from the corresponding spectral data (from two spectra or one spectrum and the set of norming constants) have successfully been solved in the paper \cite{Zhi67}
for potentials that are signed measures and in \cite{HryMyk03} and \cite{SavShk05} for potentials that are distributions in $W_2^{-1}$. Sturm--Liouville operators in impedance form, i.e., of the form
\[
-a^{-2}(x)\frac{d}{dx}a^2(x)\frac{d}{dx}
\]
with a positive impedance function $a$, were discussed in the papers \cite{And88a}, \cite{And88b}, \cite{RunSac92}, \cite{ColMcL93}, \cite{AlbHryMyk05}. For a regular enough function $a$, such an operator (under, say, the Dirichlet boundary conditions) is unitarily equivalent to a Sturm--Liouville operator in a potential form with the potential $q = a''/a$. The inverse spectral problem for impedance Sturm--Liouville  operators with $a$ of bounded variation was studied in \cite{And88a}; note that then the corresponding $q$ formally contains  singularities of the form $\delta'$. The case $a\in W_2^1$ was also completely analyzed in \cite{And88b}, \cite{RunSac92}, \cite{ColMcL93}, while $a\in W_p^1$ with $p\ge1$ in \cite{And88a} (partially) and in \cite{AlbHryMyk05}. In the papers \cite{SavShk10} and \cite{Hry11} the global uniform stability in the inverse spectral problem of reconstruction of singular Sturm--Liouville operators from either two  spectra or one spectrum and the norming constants is established; the potentials are from the Sobolev spaces $W_2^{\theta}$ with $\theta\ge-1$.

\subsection{Self-adjointness} \label{sec:2.2}

In the seminal paper \cite{GesKir85}, Gesztesy jointly with Kirsch proved the following very important result.

\begin{theorem}[\cite{GesKir85}]\label{th:gesztesy}
Let the Hamiltonian $\rH_{X,\gA,q}$ be defined by \eqref{eq:h_a1}--\eqref{eq:h_a2}. Assume that the set $X$ satisfies \eqref{eq:d>0} and the potential $q\in L^1_{\loc}(\R)$ is such that for any $\varepsilon<\gd_*/2$ the negative part of the potential
\be
q_\varepsilon(x):=q(x)\chi_{\varepsilon}(x),\quad \chi_\varepsilon(x):=\begin{cases}
1, & x\in \cup_{k=0}^\infty(x_k+\varepsilon,x_{k+1}-\varepsilon)\\
0, & x\notin \cup_{k=0}^\infty(x_k+\varepsilon,x_{k+1}-\varepsilon)
\end{cases},
\ee
is form-bounded with respect to the free Hamiltonian $H_0=-\frac{d^2}{dx^2}$ with relative bound $a_\varepsilon<1$. Then $\rH_{X,\gA,q}$ is self-adjoint.
\end{theorem}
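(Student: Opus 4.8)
The plan is to reduce the problem to a Weyl-type endpoint classification at $+\infty$ and to treat $q$ as a perturbation of the pure $\delta$-interaction operator. First I would exploit that $\gd_*>0$ forces $X$ to be uniformly discrete, so that $X$ has no finite accumulation point and every bounded subinterval of $\R_+$ contains only finitely many interaction centers. Since $q\in L^1_{\loc}$, on each interval $[x_{k-1},x_k]$ the expression $\tau_q=-d^2/dx^2+q$ is a regular Sturm--Liouville expression, the endpoint $x=0$ is regular, and the coupling conditions in \eqref{eq:h_a2} (continuity together with the prescribed jump of the derivative) are precisely the self-adjoint interface conditions attached to a $\delta$-interaction. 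By the Glazman decomposition principle the deficiency of $\rH_{X,\gA,q}$ is then concentrated at the single singular endpoint $+\infty$: the operator is self-adjoint if and only if $\tau_q$, equipped with these interior couplings, is in the limit point case at $+\infty$. The entire statement thus reduces to establishing this limit point property.

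To localize the role of $q$ I would first dispose of the pure $\delta$-interaction operator $\rH_{X,\gA}$ (the case $q\equiv\mathbf{0}$). Solving $-u''=\I u$ on each interval and propagating the Cauchy data across the elementary transfer matrices $\left(\begin{smallmatrix} 1 & 0\\ \gA_k & 1\end{smallmatrix}\right)$ furnished by \eqref{eq:h_a2}, one checks that, because each interval has length at least $\gd_*>0$, no nontrivial solution satisfying $f(0)=0$ can be square integrable near $+\infty$; hence $\rH_{X,\gA}$ is in the limit point case and self-adjoint (note that it need not be lower semibounded, since the $\gA_k$ are unrestricted). I would then reinstate $q$. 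Splitting $q=q_\varepsilon+(q-q_\varepsilon)$ with the cut-off $\chi_\varepsilon$ of the statement, and taking $\varepsilon<\gd_*/2$ so that the removed $\varepsilon$-balls around distinct centers are disjoint, the hypothesis says exactly that $q_\varepsilon^-$ is $H_0$-form bounded with relative bound $a_\varepsilon<1$. On the bulk the point values $f(x_k)$ do not enter, so this bound transfers to a bound of $q$ relative to the kinetic form with constant $<1$, which I would feed into a Kato--Rellich/KLMN-type perturbation argument to conclude that adding $q$ preserves both the domain structure at the interior points and the limit point character at $+\infty$, yielding the self-adjointness of $\rH_{X,\gA,q}$.

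The hard part will be the limit point step in the genuinely non-semibounded regime. When the strengths $\gA_k$ are large and negative, the operator has bound states tending to $-\infty$, so no semiboundedness-based limit point criterion is available and one must argue at the level of solutions of the ODE (Weyl circles), showing that the interior couplings cannot conspire to render both fundamental solutions square integrable at $+\infty$. This is exactly where $\gd_*>0$ is indispensable: it guarantees that between consecutive interactions a solution evolves freely over a length bounded below, so it cannot decay fast enough for both solutions to lie in $L^2$. Example \ref{ex:2.2}, where $\gd_*=0$ and $n_\pm=1$, shows that the conclusion genuinely fails without this separation, so any correct argument must use $\gd_*>0$ in an essential, quantitative way. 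The secondary difficulty is to transfer the $H_0$-relative form bound on $q_\varepsilon^-$ into a perturbation that respects the singular couplings; here the cut-off $\chi_\varepsilon$ around the centers is the right device, as it separates the control of $q$ from the behavior of the wavefunctions at the interaction points.
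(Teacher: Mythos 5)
The paper itself does not prove this theorem --- it is quoted from \cite{GesKir85} --- so your proposal must be judged against what a complete argument requires, and there it has a genuine gap. Your reduction is fine: with $\gd_*>0$, $q\in L^1_{\loc}$ and the self-adjoint interface conditions \eqref{eq:h_a2}, Glazman's decomposition principle does reduce self-adjointness to the limit-point property at $+\infty$, i.e.\ to showing that the (unique up to scalars) solution of $\tau_q u=\I u$ with $u(0)=0$ propagated through the interfaces is not in $L^2$ near $+\infty$. The flaw is the step that reinstates $q$. The hypothesis controls only $q^-\chi_\varepsilon$, the negative part of $q$ \emph{away} from the centers; within distance $\varepsilon$ of the centers $q$ is merely $L^1_{\loc}$, with no uniformity in $k$. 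Hence your claim that the bound ``transfers to a bound of $q$ relative to the kinetic form with constant $<1$'' is false: take $q=-\sum_k c_k\chi_{(x_k-\eta_k,x_k+\eta_k)}$ with $\eta_k\downarrow 0$ and $c_k\eta_k^2\to\infty$; for every fixed $\varepsilon<\gd_*/2$ the cut-off part $q_\varepsilon^-$ is a compactly supported $L^1$ function (hence infinitesimally form bounded), so the hypothesis holds, yet $q^-$ itself is not form bounded with respect to $H_0$ with \emph{any} constants. The whole point of the theorem is that near the centers the potential may be as wild (locally $L^1$) as the $\delta$-couplings themselves. Moreover, even where a form bound were available, the device you invoke is inapplicable: KLMN requires a lower semibounded unperturbed form, and $\rH_{X,\gA}$ is not semibounded for general $\gA$ (cf.\ Corollary \ref{cor:2.13}) --- a point you yourself make one paragraph earlier --- while Kato--Rellich requires relative \emph{operator} boundedness, which a form bound does not give; and there is no black-box statement that form-bounded perturbations preserve the limit-point property of a non-semibounded operator. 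Since the step you correctly label ``the hard part'' is deferred rather than proved, what remains is a plan whose only concrete mechanism is invalid.

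For the record, here is how such a proof can actually be closed, and where each hypothesis is consumed. Suppose $u\in L^2(\R_+)$ solves $\tau_q u=\I u$ on $\R_+\setminus X$ with the conditions \eqref{eq:h_a2} and $u(0)=0$. Then $W:=\im(\bar u u')$ satisfies $W'=-|u|^2$ between the centers, and $W$ is \emph{continuous across each $x_k$} because its jump equals $\im\bigl(\gA_k|u(x_k)|^2\bigr)=0$ ($\gA_k$ real); thus $W(x)=-\int_0^x|u|^2\to-\|u\|_{L^2}^2<0$ if $u\not\equiv 0$, so $|u(x)||u'(x)|\ge c>0$ near $+\infty$. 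On the other hand, for cutoffs $\phi_k$ supported in the bulk intervals $(x_{k-1}+\varepsilon,x_k-\varepsilon)$ (length $\ge\gd_*-2\varepsilon>0$, uniform derivative bounds), integration by parts in $\int\phi_k^2\bar u\,\tau_q u$ gives $\int\phi_k^2|u'|^2+\int q\phi_k^2|u|^2=-2\re\int\phi_k\phi_k'\bar u u'$; the term $\int q^-\phi_k^2|u|^2=\int q_\varepsilon^-|\phi_k u|^2\le a_\varepsilon\int|(\phi_k u)'|^2+b\|\phi_k u\|^2$ can be absorbed precisely because $a_\varepsilon<1$ (the part $q^+$ has the right sign, and $q$ near the centers never appears). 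This yields $\int\phi_k^2|u'|^2\le C\int_{I_k}|u|^2$ with $C$ independent of $k$, and Cauchy--Schwarz on each bulk interval produces points $\xi_k\to\infty$ with $|u(\xi_k)u'(\xi_k)|\to 0$, contradicting $|u||u'|\ge c$. This argument also subsumes your $q\equiv 0$ step (where ``one checks'' via transfer matrices is less transparent than the Wronskian identity), and it uses $\gd_*>0$, the reality of $\gA_k$, and $a_\varepsilon<1$ exactly once each. Without this (or an equivalent) limit-point argument, the proposal does not establish the theorem.
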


\begin{corollary}[\cite{GesKir85}]\label{cor:2.2}
If $q$ is lower semibounded, $q(x)\ge -c$ a.e. on $\R_+$, and \eqref{eq:d>0} holds true, then the operator $\rH_{X,\gA,q}$ is self-adjoint.
\end{corollary}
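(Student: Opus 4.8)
The plan is to derive this statement as a direct consequence of Theorem \ref{th:gesztesy}: since the separation condition $\gd_*>0$ is assumed outright, the only hypothesis that needs checking is the form-boundedness of the negative part of the truncated potential $q_\varepsilon$, and the lower semiboundedness of $q$ makes this essentially automatic.

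First I would translate the assumption $q(x)\ge -c$ a.e.\ into a statement about the negative part. By definition $q^-(x)=(|q(x)|-q(x))/2$, and $q(x)\ge -c$ forces $0\le q^-(x)\le c$ a.e., so $q^-\in L^\infty(\R_+)$ with $\|q^-\|_\infty\le c$. Next, because the cutoff $\chi_\varepsilon$ takes values only in $\{0,1\}$, for every $\varepsilon<\gd_*/2$ the negative part of $q_\varepsilon=q\chi_\varepsilon$ satisfies $(q_\varepsilon)^-=\chi_\varepsilon\,q^-\le q^-\le c$, so $(q_\varepsilon)^-$ is again a bounded multiplication operator with norm at most $c$.

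It then remains to observe that a bounded multiplication operator is form-bounded relative to $H_0=-\frac{d^2}{dx^2}$ with relative bound zero: for $f$ in the form domain one has $\big((q_\varepsilon)^- f,f\big)\le c\|f\|^2=0\cdot(H_0 f,f)+c\|f\|^2$, so one may take $a_\varepsilon=0<1$ uniformly in $\varepsilon$. With this, all hypotheses of Theorem \ref{th:gesztesy} are in force and the self-adjointness of $\rH_{X,\gA,q}$ follows immediately.

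There is no genuine obstacle here; the entire content is carried by Theorem \ref{th:gesztesy}, and the only point requiring the slightest care is the elementary fact that an $L^\infty$ potential has vanishing relative form bound (so that the strict inequality $a_\varepsilon<1$ is met with room to spare). The role of lower semiboundedness is precisely to bound $q^-$, while the positive part $q^+$ is irrelevant to the self-adjointness criterion.
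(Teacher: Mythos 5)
Your proposal is correct and is precisely the intended derivation: the paper states this corollary without proof, as an immediate consequence of Theorem \ref{th:gesztesy}, and your argument (that $q\ge -c$ forces $(q_\varepsilon)^-\le c$, so it is form-bounded with respect to $H_0$ with relative bound $a_\varepsilon=0<1$) is exactly the step being left to the reader. Nothing is missing.
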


\begin{remark}
If $X$ is unbounded, previous investigations of  Hamiltonians with $\delta$-interactions
either used the resolvent of $\rH_{X,\gA,q}$ (see \cite[\S III.2]{AGHH05} and references therein) or the technique of
local partitions \cite{Mor79} in order to define $\rH_{X,\gA,q}$ by the method of forms. In the
one-dimensional case, Theorem \ref{th:gesztesy} provides a powerful alternative to such methods
which even applies if Hamiltonians $\rH_{X,\gA,q}$ unbounded from below are involved.
\end{remark}

It turned out that both assumptions on the negative part of the potential and on the set $X$ are essential. If the potential $q$ is "very negative", then one needs to make an additional assumption on interaction strengths $\gA$ in order to ensure the self-adjointness of the Hamiltonian $\rH_{X,\gA,q}$. 

\begin{theorem}[\cite{ShuSto94}]\label{th:stolz}
Let the Hamiltonian $\rH_{X,\gA,q}$ be defined by \eqref{eq:h_a1}--\eqref{eq:h_a2}. Assume that the set $X$ satisfies \eqref{eq:d>0} and there are positive constants $C_1$, $C_2$, $C_3$, $C_4$ such that
\be
q(x)\ge -C_1x^2-C_2,\quad \gA_k\ge -C_3|x_k|-C_4.
\ee
Then $\rH_{X,\gA,q}$ is self-adjoint.
\end{theorem}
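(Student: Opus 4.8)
The plan is to reduce self-adjointness to Weyl's limit-point condition at $+\infty$ and then to verify that condition by a Sears--Titchmarsh type energy estimate adapted to the jumps produced by the $\delta$-interactions. First I would note that, since $\rH_{X,\gA,q}$ is the closure of a symmetric operator, self-adjointness is equivalent to $\ker(\rH_{X,\gA,q}^*\mp\I)=\{0\}$. The hypothesis $\gd_*>0$ in \eqref{eq:d>0} means that each $x_k$ is an isolated interior point and that only finitely many of them lie in any bounded interval, so $\tau_q$ is a regular Sturm--Liouville expression on every $[0,R]$ and $+\infty$ is the only singular endpoint. A short computation using the interface condition $f'(x_k+)-f'(x_k-)=\gA_kf(x_k)$ from \eqref{eq:h_a2} shows that the Wronskian $W(u,v)=uv'-u'v$ of any two solutions of $\tau_q w=\gl w$ obeying the interface conditions is \emph{continuous} across each $x_k$; hence these conditions are Lagrangian and contribute nothing to the deficiency indices. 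By the classical Weyl limit-point/limit-circle theory, the operator $\rH_{X,\gA,q}$ (built on compactly supported functions with $f(0)=0$) is therefore self-adjoint if and only if $\tau_q$ is in the limit-point case at $+\infty$, so it suffices to exclude the limit-circle case.

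To do so I would argue by contradiction. Since the limit-point property is independent of $\gl$, I fix a real $\gl$; in the limit-circle case all solutions of $\tau_q u=\gl u$ subject to the interface conditions are square integrable near $+\infty$. I then choose two real solutions $u_1,u_2$ with $W(u_1,u_2)=1$ and set $R:=u_1^2+u_2^2$, so that $R\in L^1$ near $+\infty$. On each interval $(x_{k-1},x_k)$ one has $u_j''=(q-\gl)u_j$, whence
\[
R''=2\big((u_1')^2+(u_2')^2\big)+2(q-\gl)R\ \ge\ \frac{2}{R}+2(q-\gl)R,
\]
the last inequality by $1=W^2\le R\big((u_1')^2+(u_2')^2\big)$ (Cauchy--Schwarz). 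Since $R$ is continuous while $R'$ jumps by $2\gA_kR(x_k)$ at each $x_k$, the function $R$ satisfies, in the sense of distributions on $\R_+$,
\[
R''\ \ge\ \frac{2}{R}+2(q-\gl)R+2\sum_{k}\gA_kR(x_k)\,\delta(x-x_k).
\]

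Next I would feed in the growth hypotheses. Using $q(x)\ge-C_1x^2-C_2$ and $\gA_k\ge-C_3x_k-C_4$ together with $\gd_*>0$ to replace each point value $R(x_k)$ by a local average $\gd_*^{-1}\int_{I_k}R$ (where $I_k$ is an interval of length comparable to $\gd_*$ adjacent to $x_k$), the effective potential in the inequality above is bounded below by $-M(x)$ with $M(x)\asymp x^2$. Consequently $\int^\infty M(x)^{-1/2}\,dx=\infty$, and the standard Sears comparison argument then shows that the differential inequality $R''\ge 2R^{-1}-cM(x)R$ is incompatible with $R\in L^1$ near $+\infty$ unless $u_1\equiv u_2\equiv0$. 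This contradiction establishes the limit-point case at $+\infty$, and hence the self-adjointness of $\rH_{X,\gA,q}$.

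The hard part will be the last transfer step: passing from the discrete sum $\sum_k\gA_kR(x_k)$ to the continuous Sears framework. One must dominate the point values $R(x_k)$ by local $L^1$-averages of $R$ --- which is precisely where $\gd_*>0$ is indispensable --- while keeping the remainder terms (involving $R'$ at the jumps) under control, and one must simultaneously check that the linear growth bound $\gA_k\ge-C_3x_k-C_4$ keeps the effective potential within the envelope $M(x)\asymp x^2$ for which $\int^\infty M^{-1/2}=\infty$. A secondary technical point is the low regularity of the solutions $u_j$ (only piecewise $W^{2,2}$), which forces one to interpret all the identities across the $x_k$ in the integrated, distributional sense used above.
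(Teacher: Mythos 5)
The paper itself contains no proof of this statement: it is a survey, and Theorem \ref{th:stolz} is quoted directly from Shubin Christ--Stolz \cite{ShuSto94}. Your overall strategy --- observe that the $\delta$-couplings preserve the Wronskian so that Weyl theory carries over, reduce self-adjointness to the limit-point property at $+\infty$ (cf.\ \cite{BusStoWei95}), and then exclude limit circle by a Sears-type comparison --- is indeed the route of the cited source, and the steps you actually carry out are correct: the Wronskian is continuous across each $x_k$, the jump formula $R'(x_k+)-R'(x_k-)=2\gA_kR(x_k)$ is right, and on each $(x_{k-1},x_k)$ one even has the exact identity $R''=\bigl((R')^2/2+2\bigr)/R+2(q-\gl)R$, which implies your inequality.

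The decisive step, however --- the one you yourself flag as ``the hard part'' --- is not only unexecuted but cannot be carried out in the form you describe, and this is a genuine gap. First, you propose to replace $R(x_k)\,\delta(x-x_k)$ by a local average $\gd_*^{-1}\int_{I_k}R$ \emph{inside the distributional inequality} and then invoke a continuous-coefficient Sears theorem as a black box. That replacement is impossible: the terms $-(C_3x_k+C_4)R(x_k)\,\delta(x-x_k)$ are negative point masses, and a negative Dirac measure has no locally integrable minorant, so no smeared differential inequality follows from the one you wrote. Second, the ``standard Sears comparison'' you invoke --- that $R''\ge 2R^{-1}-cM(x)R$ with $M\asymp x^2$ is incompatible with $0<R\in L^1$ near $+\infty$ --- is essentially the continuous-case limit-point criterion itself, i.e.\ the substance of the theorem, not an off-the-shelf lemma; and in any case what you need is a version of it that tolerates the point masses, so smearing first and citing second cannot work. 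The repair is to keep the $\delta$-terms inside the comparison argument: set $\eta(x):=\tfrac12 M(x)^{-1/2}$, so that on $\{R<\eta\}$ one has $R''\ge\sqrt{M}$ away from $X$; by convexity, each component of $\{R<\eta\}$ free of points of $X$ has length $O(\eta)$ and carries $\int R\gtrsim\eta\cdot(\mathrm{length})$, while --- and this is exactly where the hypotheses enter --- $\eta(x)\lesssim x^{-1}<\gd_*$ for large $x$, so each component meets at most one $x_k$, and there the linear bound $\gA_k^-\le C_3x_k+C_4$ makes the downward kink of $R'$ of size $2\gA_k^-R(x_k)\lesssim x_k\,\eta(x_k)=O(1)$; such kinked components still have length $O(\eta)$, number at most $T/\gd_*$ per block $(T,2T)$, and hence occupy total length $O(1)$ there. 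This yields $\int_T^{2T}R\gtrsim T\,\eta(T)\gtrsim 1$ on every dyadic block, contradicting $R\in L^1$. Until estimates of this kind are supplied, your text is a correct plan with the right structure, but not yet a proof.
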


If the set $X$ does not satisfy \eqref{eq:d>0}, then, as it was first observed  by Shubin and Stolz \cite{ShuSto94}, the Hamiltonian $\rH_{X,\gA,q}$ might be symmetric with the nontrivial deficiency indices even in the case of zero potential $q$. Namely (see \cite[p. 496]{ShuSto94}), they proved that the Hamiltonian
\be\label{eq:2.9}
\rH=-\frac{d^2}{dx^2}-\sum_{k=1}^\infty (2k+1)\delta(x-x_k), \quad x_{k+1}-x_k=\frac{1}{k},
\ee
is symmetric with deficiency indices $n_\pm(\rH)=1$.

\subsection{Connection with Jacobi matrices}\label{ss:2.3}

The progress on the case $\gd_*=0$ was made by the authors in the recent papers \cite{KosMal09}, \cite{KosMal09b}, \cite{KosMal10} in the framework of extension theory of symmetric operators. The main tool in \cite{KosMal09} is the concept of boundary triplets and the corresponding Weyl functions (see \cite{Gor84}, \cite{DM91}, \cite{BruGeyPan07}). The main ingredient of this approach
is the following abstract version of the  Green formula for the adjoint $A^*$ of a symmetric operator $A\in\mathcal{C}(\gH)$,
  \begin{equation}\label{II.1.2_green_fIntro}
(A^*f,g)_\gH - (f,A^*g)_\gH = (\gG_1f,\gG_0g)_\cH -
(\gG_0f,\gG_1g)_\cH, \qquad f,g\in\dom(A^*).
\end{equation}
Here $\cH$ is an auxiliary Hilbert space and  the mapping $\Gamma :=\begin{pmatrix}\Gamma_0\\\Gamma_1\end{pmatrix}:  \dom(A^*)
\rightarrow \cH \oplus \cH$ is required to be surjective.
The mapping $\Gamma$ leads to  a natural  parametrization   of self-adjoint (symmetric)
extensions of $A$ by means of self-adjoint (symmetric) linear relations in $\cH$, see \cite{Gor84, DM91}.
For instance, every extension  $\widetilde A = \widetilde A^*$, which is  disjoint with $ A_0:=A^*\!\upharpoonright\ker(\Gamma_0)$,
admits a representation
   \begin{equation}\label{eq:A_B}
\widetilde A = A_B:=A^*\!\upharpoonright \ker\bigl(\Gamma_1 - B\Gamma_0\bigr)\quad
     \end{equation}
where $\quad B=B^*  \in \mathcal{C}(\mathcal{H})$ is the "boundary" operator and its graph in $\cH$ is given by
 $\Gamma\dom(\widetilde A) := \{\{\Gamma_0f, \Gamma_1f\}: f \in \dom(\widetilde A)\}$.
As distinguished from the J. von Neumann approach, \eqref{eq:A_B}  yields
a natural parametrization  of all self-adjoint (symmetric) extensions
directly in terms of (abstract) boundary condition.

Assuming $q\in L^\infty(\R_+)$, we  consider the Hamiltonian  $\rH_{X,\gA,q}$ as an extension of the minimal symmetric operator
$\rH_{X,q} = \bigoplus_{k\in \N} \rH_{q,k}$, where
\[
\rH_{q,k}f:=\tau_qf = -f'' +qf,\quad \dom(\rH_{q,k})=  W^{2,2}_0[x_{k-1},x_k].
\]
To construct an appropriate  boundary triplet $\Pi=\{\cH,\Gamma_0,\Gamma_1\}$ for the operator $\rH_{X,q}^* = \bigoplus_{k\in \N} \rH_{q,k}^*$ we apply the construction
elaborated in \cite{MalNei_08} and \cite{KosMal09} (note that  a direct sum of boundary triplets is not necessarily a boundary triplet if either  $\gd_*=0$ or $q\notin L^\infty(\R_+)$, see \cite{Koc79} and \cite{KosMal09}). Based on this construction, it is shown that the domain of $\rH_{X,\gA,q}$ admits the following representation 
\begin{align}
&\qquad\, \qquad \dom(\rH_{X,\gA,q})=\ker(\Gamma_1-B_{X,\gA}\Gamma_0),\nn\\
B_{X,\gA}=&\left(\begin{array}{cccc}
r_1^{-2}\bigl(\alpha_1+\frac{1}{\gd_1}+\frac{1}{\gd_2}\bigr) & (r_1r_2\gd_2)^{-1} & 0&   \dots\\
(r_1r_2\gd_2)^{-1} &r_2^{-2}\bigl(\alpha_2+\frac{1}{\gd_2}+\frac{1}{\gd_3}\bigr) & (r_2r_3\gd_3)^{-1} &  \dots\\
0 & (r_2r_3\gd_3)^{-1} & r_3^{-2}\bigl(\alpha_3+\frac{1}{\gd_3}+\frac{1}{\gd_4}\bigr)&  \dots\\
\dots & \dots & \dots& \dots
\end{array}\right), \label{eq:B_xa}
\end{align}
  and $r_n=\sqrt{\gd_n+\gd_{n+1}}$, $n\in\N$.
This parameterization implies that certain spectral properties of the operator  $\rH_{X,\gA,q}$ correlate with the corresponding spectral properties of the Jacobi matrix $B_{X,\gA}$. Namely,  the following result was established in \cite{KosMal09}, \cite{KosMal09b}.

\begin{theorem}[\cite{KosMal09, KosMal09b}]\label{th:km_sa}
Let $q\in L^\infty(\R_+)$ and let $X= \{x_k\}_{k=1}^\infty\subset \R_+$  be strictly increasing and such that $\gd^*:=\sup_k \gd_k<\infty$. Then:
\begin{itemize}
\item[(i)] The deficiency indices of the operators $\rH_{X,\gA,q}$ and $B_{X,\gA}$ coincide and
   \be
   n_\pm(\rH_{X,\gA,q})=n_\pm(B_{X,\gA})\le 1.
   \ee
 In particular,   the operator $\rH_{X,\gA,q}$ is self-adjoint if and only if so is the Jacobi matrix $B_{X,\gA}$. 
\item[(ii)] The operator $\rH_{X,\gA,q}$ is lower semibounded precisely if so is the matrix $B_{X,\gA}$.
\item[(iii)] The self-adjoint operator $\rH_{X,\gA,q}$ has purely discrete spectrum if and only if $\gd_k\to 0$ as $k\to \infty$ and the spectrum of $B_{X,\gA}$ is purely discrete.
\item[(iv)] If the operator $\rH_{X,\gA}$ is self-adjoint, then
\be
\kappa_-(\rH_{X,\gA})=\kappa_-(B_{X,\gA}).
\ee
In particular, the operator $\rH_{X,\gA}$ is nonnegative if and only if so is the matrix $B_{X,\gA}$.
\item[(v)] If the operator $\rH_{X,\gA}$ is self-adjoint, then for any $p\in [1,\infty]$
\be
\rH_{X,\gA}^-\in \gS_p(L^2) \Longleftrightarrow B^-_{X,\gA}\in \gS_p(l^2).
\ee
In particular, $\sigma_{\ess}(\rH_{X,\gA,q})\subseteq\R_+$ if and only if $\sigma_{\ess}(B_{X,\gA})\subseteq\R_+$.
\end{itemize}
\end{theorem}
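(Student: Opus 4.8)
The plan is to read off all five assertions from the representation $\rH_{X,\gA,q}=\rH_{X,q}^*\upharpoonright\ker(\gG_1-B_{X,\gA}\gG_0)$ established above, i.e.\ to regard $\rH_{X,\gA,q}$ as the extension $A_{B_{X,\gA}}$ of the minimal operator $A:=\rH_{X,q}$ corresponding, in the sense of \eqref{eq:A_B}, to the boundary operator $B_{X,\gA}$ of \eqref{eq:B_xa}. Alongside $A_{B_{X,\gA}}$ I would keep track of the reference extension $A_0:=\rH_{X,q}^*\upharpoonright\ker\gG_0$, which is the orthogonal sum over $k$ of Dirichlet-type problems on the intervals $[x_{k-1},x_k]$, together with the $\gamma$-field $\gamma(\cdot)$ and the Weyl function $M(\cdot)$ of the triplet $\Pi=\{\cH,\gG_0,\gG_1\}$, $\cH=l^2(\N)$. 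The abstract theory of boundary triplets (see \cite{Gor84}, \cite{DM91}, \cite{BruGeyPan07}) then converts every spectral question about $A_{B_{X,\gA}}$ into a question about the triple $(B_{X,\gA},M,A_0)$ through Krein's resolvent formula
\[
(A_{B_{X,\gA}}-z)^{-1}=(A_0-z)^{-1}+\gamma(z)\bigl(B_{X,\gA}-M(z)\bigr)^{-1}\gamma(\bar z)^*,\qquad z\in\rho(A_0)\cap\rho(A_{B_{X,\gA}}).
\]

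The analytic heart of the argument, and the step I expect to be the main obstacle, is to show that $M(z)$ is a \emph{bounded} operator in $l^2(\N)$ for each real $z<c$ (with $c$ of order $(\pi/\gd^*)^2$), locally bounded and operator-monotone (Herglotz) on $(-\infty,c)$. This is precisely where the two standing hypotheses enter: $\gd^*<\infty$ forces a uniform spectral gap $\inf\sigma(A_0)>0$ for $q=0$ (the lowest Dirichlet eigenvalue on $[x_{k-1},x_k]$ is $(\pi/\gd_k)^2\ge(\pi/\gd^*)^2$), so the per-interval Weyl matrices are uniformly bounded on compact subsets of $(-\infty,c)$; and $q\in L^\infty(\R_+)$ is a uniformly bounded perturbation preserving this picture. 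The same estimates guarantee that the countable orthogonal sum of the interval triplets is again a boundary triplet, a point that may fail when $\gd^*=\infty$ or $q\notin L^\infty$ (cf.\ \cite{Koc79}, \cite{KosMal09} and the remark preceding the statement). I would borrow these estimates from the construction in \cite{MalNei_08}, \cite{KosMal09}; one also records $A_0\ge0$ for $q=0$ and computes $M(0)$ explicitly.

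Granting boundedness of $M$, the individual items follow. For (i), the abstract correspondence between $A_B$ and $B$ for a triplet with bounded Weyl function yields $n_\pm(A_{B_{X,\gA}})=n_\pm(B_{X,\gA})$ (see \cite{KosMal09}); since $B_{X,\gA}$ is a Jacobi (three-diagonal) matrix, its deficiency indices are $(0,0)$ or $(1,1)$, whence $n_\pm(\rH_{X,\gA,q})=n_\pm(B_{X,\gA})\le1$, and self-adjointness is the case $n_\pm=0$. For (ii), I would use the resolvent formula together with the Herglotz property of $M$: a point $z$ below $\sigma(A_0)$ lies in $\rho(A_{B_{X,\gA}})$ iff $B_{X,\gA}-M(z)$ is boundedly invertible, and since $M(z)$ stays bounded at each fixed such $z$ while $A_0$ is bounded below, lower semiboundedness of $A_{B_{X,\gA}}$ is equivalent to that of $B_{X,\gA}$. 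For (iii), the resolvent formula shows $(A_{B_{X,\gA}}-z)^{-1}$ is compact iff both $(A_0-z)^{-1}$ and $(B_{X,\gA}-M(z))^{-1}$ are; the first is equivalent to $\gd_k\to0$ (only then do the Dirichlet spectra of the intervals escape every bounded set, making $A_0$ discrete), and the second, since $M(z)$ is bounded, is equivalent to $B_{X,\gA}$ having discrete spectrum.

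Finally, for the quantitative items I would specialize to $q=0$, where $A_0>0$ and $0\in\rho(A_0)$. For (iv), the Birman--Schwinger form of the resolvent identity gives $\kappa_-(A_{B_{X,\gA}})=\kappa_-(B_{X,\gA}-M(0))$, and the explicit evaluation of $M(0)$ for this triplet shows $\kappa_-(B_{X,\gA}-M(0))=\kappa_-(B_{X,\gA})$, so $\kappa_-(\rH_{X,\gA})=\kappa_-(B_{X,\gA})$; nonnegativity is the case $\kappa_-=0$. For (v), the same identity at $z=0$ represents the negative part $\rH_{X,\gA}^-$ through $\gamma(0)(B_{X,\gA}-M(0))^{-1}\gamma(0)^*$ with $\gamma(0)$ bounded and bounded below on the relevant subspace; a trace-ideal estimate then transfers $\gS_p$-membership between $\rH_{X,\gA}^-$ and $B_{X,\gA}^-$, and the case $p=\infty$ (compactness of the negative parts) is exactly the stated equivalence $\sigma_{\ess}(\rH_{X,\gA,q})\subseteq\R_+\Leftrightarrow\sigma_{\ess}(B_{X,\gA})\subseteq\R_+$. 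I expect item (v), with its sharp trace-ideal bookkeeping, to be the most delicate part after the boundedness of $M$ itself.
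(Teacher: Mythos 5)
Your plan follows exactly the route the paper itself indicates (and attributes to \cite{KosMal09, KosMal09b}): view $\rH_{X,\gA,q}$ as the extension of $\rH_{X,q}=\bigoplus_{k}\rH_{q,k}$ parametrized by the Jacobi matrix $B_{X,\gA}$ in the regularized direct-sum boundary triplet of \cite{MalNei_08}, \cite{KosMal09}, and transfer each spectral question through Krein's resolvent formula and the Weyl function $M(\cdot)$. Structurally this is the paper's approach, and you correctly locate where the hypotheses enter ($\gd^*<\infty$ gives $\inf\sigma(A_0)\ge(\pi/\gd^*)^2>0$ for $q=0$; $q\in L^\infty$ keeps this uniform; the direct sum of interval triplets needs regularization to be a triplet at all).

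There is, however, a genuine gap in your justification of item (ii), and it concerns precisely the property that makes (ii) nontrivial. Local (even uniform-on-compacts) boundedness of $M(z)$ on $(-\infty,c)$ together with operator monotonicity gives at best one direction: if $\rH_{X,\gA,q}$ is semibounded one can conclude $B_{X,\gA}\ge M(z)\ge-\|M(z)\|$ at a fixed admissible $z$. It does \emph{not} give the converse. If $B_{X,\gA}\ge-c$, to place $(-\infty,z_0)$ in the resolvent set of $\rH_{X,\gA,q}$ you need $B_{X,\gA}-M(z)$ to be \emph{boundedly invertible} for all $z\ll0$; since $B_{X,\gA}$ may have essential spectrum reaching down to $-c$, this forces $M(z)\le-c-\varepsilon$ for $z$ sufficiently negative, i.e.\ the uniform divergence $M(z)\rightrightarrows-\infty$ as $z\to-\infty$. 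This property is not automatic for boundary triplets --- in general, semiboundedness of the boundary operator does \emph{not} imply semiboundedness of the extension, which is exactly why \cite{DM91}, \cite{KosMal09} isolate uniform divergence of the Weyl function as the condition for (ii), and why its verification for the regularized triplet (using $\gd^*<\infty$) is a named step of the proof rather than a consequence of local boundedness. A smaller but similar gap sits in your item (iii): Krein's formula shows that compactness of $(A_0-z)^{-1}$ and of $(B_{X,\gA}-M(z))^{-1}$ together imply compactness of $(\rH_{X,\gA,q}-z)^{-1}$, but your claimed ``iff'' in the other direction does not follow from the formula alone (a priori the two non-compact terms could sum to a compact operator); the necessity of $\gd_k\to0$ requires a separate argument, e.g.\ a singular sequence of normalized bumps supported in the interiors of intervals of non-shrinking length, which lie in the domain of every extension and keep $\|(\rH_{X,\gA,q}-\I)u_j\|$ bounded.
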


\begin{remark}
(i) In the case $\gd_*>0$, the boundary triplets approach was first applied for the study of spectral properties of Hamiltonians with local point interactions by Kochubei \cite{Koc89} (see also Mikhailets \cite{Mih94}).

(ii) Let us mention that using a different approach the inequality $n_\pm(\rH_{X,\gA,q})\le 1$ was first established by Minami \cite{Min87} for arbitrary potentials $q$, not necessarily bounded (see also \cite{BusStoWei95} and \cite{ShuSto94}).

(iii) In the case $\gd_*>0$, the connection between Hamiltonians with point interactions and Jacobi matrices goes back at least to the papers by Phariseau \cite{Pha60}, \cite{Pha60b} and Bellissard et. al. \cite{Bel82} (for further details we refer to \cite[\S III.2]{AGHH05}).
\end{remark}

Theorem \ref{th:km_sa} allows us to apply the well developed spectral theory of Jacobi operators (see, e.g.,  \cite{Akh}, \cite{Ber68}, \cite{KosMir99}, \cite{KosMir01}, \cite{Tes99}) for the study of spectral properties of operators $\rH_{X,\gA,q}$. For instance, applying the Carleman test (see \cite{Akh}) to the matrix $B_{X,\gA}$, we immediately obtain the following improvement of Corollary \ref{cor:2.2} in the case $q\in L^\infty(\R_+)$.

\begin{corollary}\label{cor:2.6}
Let $q\in L^\infty(\R_+)$ and let $X$ be such that $\sum_{k=1}^\infty \gd_k^2=\infty$. Then $\rH_{X,\gA,q}$ is self-adjoint.
\end{corollary}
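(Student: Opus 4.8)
The plan is to reduce self-adjointness of $\rH_{X,\gA,q}$ to that of the Jacobi matrix $B_{X,\gA}$ from \eqref{eq:B_xa} and then to verify the latter by Carleman's test. Recall that $B_{X,\gA}$ is a Jacobi matrix whose off-diagonal entries are
\[
b_n=(r_n r_{n+1}\gd_{n+1})^{-1},\qquad r_n=\sqrt{\gd_n+\gd_{n+1}},\quad n\in\N.
\]
By Theorem~\ref{th:km_sa}(i) (valid once $\gd^*<\infty$) the operator $\rH_{X,\gA,q}$ is self-adjoint if and only if $B_{X,\gA}$ is; since moreover $n_\pm(B_{X,\gA})\le 1$, it suffices to rule out the deficiency index $1$, which is exactly the content of Carleman's test.

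The heart of the argument is an elementary estimate for $1/b_n$. Since $r_n=\sqrt{\gd_n+\gd_{n+1}}\ge\sqrt{\gd_{n+1}}$ and likewise $r_{n+1}\ge\sqrt{\gd_{n+1}}$, one has $r_n r_{n+1}\ge\gd_{n+1}$, whence
\[
\frac{1}{b_n}=r_n r_{n+1}\gd_{n+1}\ge\gd_{n+1}^2,\qquad n\in\N.
\]
Summing and invoking the hypothesis gives $\sum_{n}1/b_n\ge\sum_{k\ge 2}\gd_k^2=\infty$. By the Carleman test (see \cite{Akh}) the matrix $B_{X,\gA}$ is then in the limit-point case, i.e.\ self-adjoint, and Theorem~\ref{th:km_sa}(i) transfers this conclusion back to $\rH_{X,\gA,q}$. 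This step is routine once the reduction is in place.

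The one point requiring care, and what I expect to be the main obstacle, is that Theorem~\ref{th:km_sa} is stated under $\gd^*=\sup_k\gd_k<\infty$, whereas the hypothesis $\sum_k\gd_k^2=\infty$ by itself permits $\gd^*=\infty$. I would dispose of this by refining the configuration: into each gap $(x_{k-1},x_k)$ with $\gd_k>1$ insert $\lceil\gd_k\rceil$ equally spaced points, subdividing it into pieces of length $\gd_k/\lceil\gd_k\rceil\le1$, and assign them interaction strength $0$. A strength-$0$ condition in \eqref{eq:h_a2} merely imposes continuity of $f$ and $f'$, hence no constraint at all, so the enlarged operator $\rH_{\widetilde X,\widetilde\gA,q}$ coincides with $\rH_{X,\gA,q}$, while now $\widetilde\gd^*\le1<\infty$. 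It then remains to check that the refined gaps still satisfy $\sum_k\widetilde\gd_k^2=\infty$. If $\sum_{\gd_k\le1}\gd_k^2=\infty$, the unchanged short gaps already furnish this. Otherwise $\sum_{\gd_k>1}\gd_k^2=\infty$ forces infinitely many gaps with $\gd_k>1$, and each such gap contributes $\lceil\gd_k\rceil\,(\gd_k/\lceil\gd_k\rceil)^2=\gd_k^2/\lceil\gd_k\rceil\ge\tfrac12$ to the new sum of squares (using $\lceil\gd_k\rceil<2\gd_k$ for $\gd_k>1$), so it again diverges. With $\widetilde\gd^*<\infty$ secured, the two displayed steps apply verbatim to $\widetilde X$, and the only genuinely delicate part is this little case analysis ensuring the refinement preserves $\sum_k\gd_k^2=\infty$.
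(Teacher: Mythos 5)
Your proof is correct and follows essentially the paper's own route: reduce self-adjointness of $\rH_{X,\gA,q}$ to that of the Jacobi matrix $B_{X,\gA}$ via Theorem \ref{th:km_sa}(i) and then apply Carleman's test to \eqref{eq:B_xa}, using the elementary bound $r_nr_{n+1}\gd_{n+1}\ge \gd_{n+1}^2$. The only addition is your reduction to $\gd^*<\infty$ by inserting zero-strength interaction points; this is a correct (and worthwhile) refinement, since the paper's ``immediate'' derivation tacitly relies on the hypothesis $\gd^*=\sup_k\gd_k<\infty$ under which Theorem \ref{th:km_sa} is stated.
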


Let us mention that the condition $\{\gd_k\}\notin l^2$ is sharp (see \cite[Proposition 5.9]{KosMal09}). Let us conclude this subsection with the following example (see \cite[Example 5.12]{KosMal09} and also \cite[Proposition 3]{KosMal09b}).

\begin{example}\label{ex:2.8}
Let $\mathcal{I}=\R_+,\ x_0=0, \ x_{k}-x_{k-1}=\gd_k:=1/k,\ k\in\N $. Consider the operator
\begin{equation}\label{IV.2.2_11}
\rH_{A}:=-\frac{\rD^2}{\rD x^2}+\sum_{k=1}^\infty\alpha_k\delta(x-x_k).
      \end{equation}
Clearly,  $\{\gd_k\}_{k=1}^\infty\in l_2$ and we can not apply Corollary \ref{cor:2.6}. However, the following statements are true:
\begin{enumerate}
\item[(i)]  \ If $\sum_{k=1}^\infty
\frac{|\alpha_k|}{k^3}=\infty$, then the  operator $\rH_{A}$  is
self-adjoint.
\item[(ii)]  \ If $\alpha_k\leq -2(2k+1)+O(k^{-1})$,
then $\rH_{A}$ is self-adjoint.
\item[(iii)]  \ If $\alpha_k\geq
-\frac{C}{k},\ k\in \N,\ C\equiv const>0$, then $\rH_{A}$ is
self-adjoint.
\item[(iv)]  \ If $\alpha_k=-(2k+1)+O(k^{-\varepsilon})$ with some $ \varepsilon>0$, then $\mathrm{n}_\pm(\rH_A)=1$.
\item[(v)] \ If $\gA_k=-a(2k+1)+O(k^{-1})$ with some $a\in (0,2)$, then $\mathrm{n}_\pm(\rH_A)=1$.
    \end{enumerate}
\end{example}

\begin{remark}
Example \ref{ex:2.8} is inspired by the example of Shubin and Stolz (cf. \eqref{eq:2.9}) and its proof is
based on various self-adjointess tests for Jacobi matrices (cf. \cite{Akh} and \cite{Ber68}).
In particular, the proof of (iv) is based on the recent improvement by Kostyuchenko and Mirzoev \cite[Theorem 1]{KosMir01} of the well-known
Berezanskii condition \cite[Theorem VII.1.5]{Ber68}.
\end{remark}

{\em Further notes:} A generalization and further developments of Example \ref{ex:2.8} can be found in \cite{KarTys12}. Using the approach based on quasi-derivatives, it was noticed in  \cite{IsmKos10} and \cite{MirSaf11}, \cite{Kon11} that the analysis of \cite{Ism63}, \cite{Ism85} and \cite{Mir91} extends to the case of Hamiltonians with $\delta$-interactions. In particular, using this approach one can extend  Corollary \ref{cor:2.6} to the case of semibounded potentials $q$, $q(.)\ge -c$ a.e. on $\R_+$.

\subsection{Semiboundedness}\label{sec:2.3} As we already mentioned in Section \ref{sec:2.1}, the Hamiltonian $\rH_{X,\gA,q}$ may  be defined via the energy form
\begin{align}
\gt_{X,\gA,q}^0[f]:=\int_{\R_+}(|f'(x)|^2+q(x)|f(x)|^2)\, dx +\sum_{k=1}^\infty \gA_k|f(x_k)|^2,\label{eq:2.12}\\
\quad \dom(\gt^0_{X,\gA,q})=\{f\in W^{1,2}(\R_+)\cap L^2_{\comp}(\R_+):\, \gt_{X,\gA,q}^0[f]<\infty\}.\label{eq:2.13}
\end{align}
Clearly, this form admits the representation
\be\label{eq:2.14}
\gt_{X,\gA,q}^0[f]=(\rH_{X,\gA,q}^0f,f)_{L^2},\quad f\in 
\dom(\rH_{X,\gA,q}^0).
\ee
So, one is interested in conditions on $X$ and $\gA$ such that the form $\gt_{X,\gA,q}^0$ is lower semibounded (and hence closable) and then to describe its closure.

\begin{theorem}[\cite{AlbKosMal09}]\label{th:povzner}
If the Hamiltonian $\rH_{X,\gA,q}$ is lower semibounded, then it is self-adjoint. In particular, if the form $\gt_{X,\gA,q}^0$ is lower semibounded, then it is closable and the self-adjoint operator associated with its closure $\gt_{X,\gA,q}:=\overline{\gt_{X,\gA,q}^0}$ coincides with $\rH_{X,\gA,q}$.
\end{theorem}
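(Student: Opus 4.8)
The statement is a Povzner--Wienholtz type theorem: although a lower semibounded symmetric operator need not be self-adjoint in general, the realization $\rH_{X,\gA,q}$ is already maximally defined at the regular endpoint $0$ (the condition $f(0)=0$ is built in), so any deficiency can only originate at infinity, and semiboundedness rules out limit-circle behaviour there. My plan is to prove the first assertion directly, by showing that lower semiboundedness forces the deficiency indices to vanish; the ``in particular'' part will then follow from the first part together with the standard form theory of semibounded self-adjoint operators. (One could alternatively transfer the problem to the Jacobi matrix $B_{X,\gA}$ via Theorem \ref{th:km_sa} and invoke a discrete Povzner--Wienholtz statement, but that route needs the extra hypotheses $q\in L^\infty(\R_+)$ and $\sup_k\gd_k<\infty$, whereas the direct argument does not.)

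For the first part I would argue by contradiction. After a shift, assume $\rH_{X,\gA,q}\ge\gamma$. A lower semibounded symmetric operator has equal deficiency indices, and the deficiency subspaces are constant on the connected set $\C\setminus[\gamma,\infty)$; hence if $\rH_{X,\gA,q}$ were not self-adjoint there would exist a real $\lambda<\gamma$ and a nonzero $u\in\ker(\rH_{X,\gA,q}^*-\lambda)\subset L^2(\R_+)$. Being in the domain of the adjoint, $u$ satisfies $u(0)=0$, is continuous at each $x_k$, obeys the jump relation $u'(x_k+)-u'(x_k-)=\gA_k u(x_k)$, and solves $-u''+qu=\lambda u$ on each subinterval. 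The engine of the proof is the energy estimate obtained by inserting cut-off functions $\eta_R$ (equal to $1$ on $[0,R]$, supported in $[0,2R]$, with $|\eta_R'|\le C/R$) into the lower bound. I would first check that $\eta_R u\in\dom(\rH_{X,\gA,q}^0)$: it has compact support, and since $\eta_R$ is smooth it preserves continuity and the jump condition, because $(\eta_R u)'(x_k+)-(\eta_R u)'(x_k-)=\eta_R(x_k)\gA_k u(x_k)=\gA_k(\eta_R u)(x_k)$. Semiboundedness and \eqref{eq:2.14} then give $\gt_{X,\gA,q}^0[\eta_R u]\ge\gamma\,\|\eta_R u\|_{L^2}^2$.

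The heart of the computation is to evaluate $\gt_{X,\gA,q}^0[\eta_R u]$ explicitly. Expanding $|(\eta_R u)'|^2$, integrating $\int\eta_R^2|u'|^2$ by parts on each interval $(x_{k-1},x_k)$ and using $u''=(q-\lambda)u$ there, the interface terms $\sum_k\eta_R(x_k)^2 u(x_k)\,[u'(x_k-)-u'(x_k+)]$ produced by the integration by parts are, by the jump relation and the continuity of $u$, exactly $-\sum_k\gA_k\eta_R(x_k)^2|u(x_k)|^2$, so they precisely cancel the $\delta$-interaction contribution $\sum_k\gA_k|\eta_R(x_k)u(x_k)|^2$ appearing in \eqref{eq:2.12} (the term at $x_0=0$ drops out since $u(0)=0$, and nothing survives at infinity because $\eta_R$ has compact support). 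What remains is the clean identity $\gt_{X,\gA,q}^0[\eta_R u]=\int_{\R_+}|\eta_R'|^2|u|^2\,dx+\lambda\|\eta_R u\|_{L^2}^2$. Combined with the lower bound this yields $\int_{\R_+}|\eta_R'|^2|u|^2\,dx\ge(\gamma-\lambda)\|\eta_R u\|_{L^2}^2$. Letting $R\to\infty$, the left-hand side is bounded by $(C/R)^2\|u\|_{L^2}^2\to0$, whereas the right-hand side tends to $(\gamma-\lambda)\|u\|_{L^2}^2>0$ since $\lambda<\gamma$ and $u\ne0$, a contradiction. Hence $\ker(\rH_{X,\gA,q}^*-\lambda)=\{0\}$, so $n_\pm(\rH_{X,\gA,q})=0$ and $\rH_{X,\gA,q}$ is self-adjoint.

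The ``in particular'' statement then follows quickly: if $\gt_{X,\gA,q}^0$ is lower semibounded, then by \eqref{eq:2.14} so is $\rH_{X,\gA,q}^0$ on its domain, hence so is its closure $\rH_{X,\gA,q}$, which by the first part is self-adjoint; since $\dom(\rH_{X,\gA,q}^0)$ is an operator core for the semibounded self-adjoint operator $\rH_{X,\gA,q}$ it is also a form core, so $\gt_{X,\gA,q}^0$ is closable and the self-adjoint operator associated with $\overline{\gt_{X,\gA,q}^0}$ is $\rH_{X,\gA,q}$. The one genuinely delicate point I expect is the bookkeeping of the interface terms in the integration by parts: one must track the contribution at every $x_k$ and verify that the continuity of $u$ together with the jump condition makes these cancel the point-interaction energy exactly. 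This cancellation is precisely what shows that the classical Povzner--Wienholtz mechanism survives unchanged in the presence of $\delta$-interactions; everything else is the standard cut-off limit.
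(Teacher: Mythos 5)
The paper itself contains no proof of Theorem \ref{th:povzner}: being a review, it quotes the result from \cite{AlbKosMal09} and only remarks that an alternative proof was given by Hryniv and Mykytyuk \cite{HryMyk12}. So your proposal can only be measured against the statement itself. Your proof of the first assertion is correct, and it is precisely the Glazman--Povzner--Wienholtz mechanism that the cited literature adapts to this setting: for a lower semibounded symmetric operator the defect numbers are constant on $\C\setminus[\gamma,\infty)$, so non-self-adjointness produces a nonzero $u\in\ker(\rH_{X,\gA,q}^*-\lambda)$ with real $\lambda<\gamma$; the cut-offs $\eta_R u$ do lie in $\dom(\rH_{X,\gA,q}^0)$ (here you should say explicitly that $x_k\uparrow+\infty$ guarantees only finitely many interaction points meet $\supp\eta_R$, so that $u'$ is bounded there and $\tau_q(\eta_Ru)=\lambda\eta_Ru-2\eta_R'u'-\eta_R''u\in L^2$); and your interface bookkeeping is right: the boundary terms from integrating $\int\eta_R^2|u'|^2$ by parts equal $-\sum_k\gA_k\eta_R(x_k)^2|u(x_k)|^2$ and cancel the interaction sum, yielding $\gt_{X,\gA,q}^0[\eta_Ru]=\int_{\R_+}|\eta_R'|^2|u|^2\,dx+\lambda\|\eta_Ru\|^2$ and the desired contradiction. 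You should also record, as a lemma, the characterization of $\dom(\rH_{X,\gA,q}^*)$ you use; the paper's Proposition \ref{prop2.1}(i) states it only for $q\in L^\infty(\R_+)$, while your argument needs it for general $q\in L^1_{\loc}$ (it is true, by the standard Lagrange-identity argument, but it is not free).

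The ``in particular'' part, however, has a genuine gap. The form $\gt_{X,\gA,q}^0$ is defined by \eqref{eq:2.12}--\eqref{eq:2.13} on the set of \emph{all} compactly supported $W^{1,2}$-functions with finite form, which is strictly larger than $\dom(\rH_{X,\gA,q}^0)$: its elements need not be $W^{2,1}$ on the subintervals, nor satisfy $\tau_qf\in L^2$. Your argument (operator core $\Rightarrow$ form core) only shows that the \emph{restriction} of $\gt_{X,\gA,q}^0$ to $\dom(\rH_{X,\gA,q}^0)$ is closable with closure equal to the form of $\rH_{X,\gA,q}$; it says nothing about vectors in $\dom(\gt^0_{X,\gA,q})\setminus\dom(\rH^0_{X,\gA,q})$, so closability of $\gt^0_{X,\gA,q}$ on its full domain, and the identification of its closure, do not follow. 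This matters: a closed form can properly extend the closure of its restriction to a dense set, and then the associated self-adjoint operator is a different one (Neumann versus Dirichlet forms is the standard example). What is missing is the inclusion $\dom(\gt_{X,\gA,q}^0)\subseteq\dom\big(\overline{\gt^0_{X,\gA,q}\!\upharpoonright\dom(\rH^0_{X,\gA,q})}\big)$ together with agreement of the form values there. This can be supplied by a localization argument: for $f\in\dom(\gt^0_{X,\gA,q})$ with $\supp f\subseteq[0,b_0]$, pass to the regular problem on $[0,b']$, $b'>b_0$, which involves finitely many $x_k$ and whose KLMN form domain is all of $\{g\in W^{1,2}[0,b']:\,g(0)=g(b')=0\}$; approximate $f$ there in $W^{1,2}$ by operator-domain elements, multiply by a fixed smooth cut-off equal to $1$ on $[0,b_0]$, and verify that the resulting sequence lies in $\dom(\rH^0_{X,\gA,q})$ and is Cauchy in the form norm (the potential and interaction terms are controlled since $\|\cdot\|_{L^\infty[0,b']}\lesssim\|\cdot\|_{W^{1,2}[0,b']}$ and the sum over $x_k\in[0,b']$ is finite). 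Only after this step is the second assertion proved; note also that for the closure to reproduce the Dirichlet realization $\rH_{X,\gA,q}$ one must read \eqref{eq:2.13} as including the condition $f(0)=0$.
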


\begin{remark}
Theorem \ref{th:povzner} is the analog of the celebrated Glazman--Povzner--Wienholtz Theorem \cite{Ber68}, \cite{Gla65}, \cite{Wie58} (see also the paper \cite{ClaGes03} by Clark and Gesztesy, where the case of matrix-valued Schr\"odinger operators was treated). An alternative proof of Theorem \ref{th:povzner} has  recently been proposed by Hryniv and Mykytyuk \cite{HryMyk12}. Let us also mention that a connection between lower-semiboundedness and self-adjointness for general Sturm--Liouville operators was first observed by Hartman \cite{Har48} and Rellich \cite{Rel51}. 
Further details as well as a comprehensive list of references can be found in \cite{ClaGes03}.
\end{remark}

The following result was obtained by Brasche in \cite{Bra85}.

\begin{theorem}[\cite{Bra85}]\label{th:brinck}
Assume that the negative parts of the potential $q$ and the sequence $\gA$ satisfy the following conditions
\be\label{eq:brinck}
\sup_{x>0}\int_x^{x+1} q_-(t)\, dt<\infty,\quad \sup_{x>0}\sum_{x_k\in[x,x+1]}\gA_k^-<\infty,
\ee
where $q_-=(|q|-q)/2$ and $\gA_k^-=(|\gA_k|-\gA_k)/2$. Then:
\begin{itemize}
\item[(i)] The form $\gt_{X,\gA,q}^0$ is lower semibounded.
\item[(ii)] If both the potential $q$ and the sequence $\gA$ are negative, then the condition \eqref{eq:brinck} is necessary and sufficient for the form $\gt_{X,\gA,q}^0$ to be lower semibounded.
\end{itemize}
\end{theorem}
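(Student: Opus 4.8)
The plan is to reduce immediately to the negative parts of the data. Writing $q=q_+-q_-$ and $\gA_k=\gA_k^+-\gA_k^-$, the positive parts contribute nonnegatively to $\gt_{X,\gA,q}^0$, so for every $f\in\dom(\gt_{X,\gA,q}^0)$ one has
\[
\gt_{X,\gA,q}^0[f]\ \ge\ \int_{\R_+}|f'|^2\,dx-\int_{\R_+}q_-|f|^2\,dx-\sum_{k}\gA_k^-|f(x_k)|^2 .
\]
Thus it suffices, for (i), to dominate the two negative terms by a fraction of $\int_{\R_+}|f'|^2\,dx$ plus a multiple of $\|f\|_{L^2}^2$; and for (ii), where $q=-q_-$ and $\gA_k=-\gA_k^-$, to show that failure of either bound in \eqref{eq:brinck} destroys lower semiboundedness.

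For part (i) the key tool is the one-dimensional Sobolev (Ehrling-type) estimate: for any interval $I$ of length one and any $\varepsilon>0$,
\[
\|f\|_{L^\infty(I)}^2\ \le\ \varepsilon\,\|f'\|_{L^2(I)}^2+(1+\varepsilon^{-1})\,\|f\|_{L^2(I)}^2 ,
\]
which I would obtain by integrating $|f(x)|^2-|f(y)|^2=2\re\int_y^x\bar f f'$ in $y$ over $I$ and applying Young's inequality. Partitioning $\R_+$ into $I_n:=[n,n+1)$ and setting $M:=\sup_{x>0}\int_x^{x+1}q_-\,dt$ and $N:=\sup_{x>0}\sum_{x_k\in[x,x+1]}\gA_k^-$ (both finite by \eqref{eq:brinck}), I would estimate $\int_{I_n}q_-|f|^2\le\|f\|_{L^\infty(I_n)}^2\int_{I_n}q_-$ and $\sum_{x_k\in I_n}\gA_k^-|f(x_k)|^2\le\|f\|_{L^\infty(I_n)}^2\sum_{x_k\in I_n}\gA_k^-$ and sum over $n$, obtaining
\[
\int_{\R_+}q_-|f|^2+\sum_k\gA_k^-|f(x_k)|^2\ \le\ (M+N)\bigl(\varepsilon\|f'\|_{L^2}^2+(1+\varepsilon^{-1})\|f\|_{L^2}^2\bigr).
\]
Taking $\varepsilon=(2(M+N))^{-1}$ absorbs the kinetic term and yields $\gt_{X,\gA,q}^0[f]\ge\tfrac12\|f'\|_{L^2}^2-C\|f\|_{L^2}^2$, so the form is lower semibounded.

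For the necessity in part (ii) I would argue by contraposition with plateau test functions. Suppose first $M=\infty$. I would extract pairwise disjoint unit intervals $[y_n,y_n+1]$ with $\int_{y_n}^{y_n+1}q_-\to\infty$: such intervals cannot accumulate at a finite point, since $q\in L^1_{\loc}$ would then bound the local integral there, so they escape to $+\infty$ and a disjoint subsequence is available. On each interval take the continuous, piecewise-linear $f_n$ equal to $1$ on $[y_n,y_n+1]$, decreasing linearly to $0$ on the two adjacent unit intervals and vanishing elsewhere; then $f_n\in\dom(\gt_{X,\gA,q}^0)$ (it is compactly supported in $(0,\infty)$ once $y_n>1$, and all integrals are finite by $L^1_{\loc}$ and discreteness), with $\int|f_n'|^2=2$ and $1\le\|f_n\|_{L^2}^2\le\tfrac53$. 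Since every potential and point term is nonpositive, $\gt_{X,\gA,q}^0[f_n]\le 2-\int_{y_n}^{y_n+1}q_-\to-\infty$ while $\|f_n\|_{L^2}^2$ stays bounded, contradicting lower semiboundedness. The case $N=\infty$ is identical: disjoint escaping intervals carrying $\sum_{x_k\in[y_n,y_n+1]}\gA_k^-\to\infty$ exist because $X$ is discrete, and the same $f_n$ gives $\gt_{X,\gA,q}^0[f_n]\le 2-\sum_{x_k\in[y_n,y_n+1]}\gA_k^-\to-\infty$.

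I expect the only genuinely delicate points to lie in the necessity direction, namely the extraction of the disjoint escaping intervals (ruling out finite accumulation via $L^1_{\loc}$-regularity of $q$ and discreteness of $X$) and the verification that the plateau functions belong to $\dom(\gt_{X,\gA,q}^0)$; the sufficiency is a routine consequence of the uniform local bounds \eqref{eq:brinck} together with the Sobolev inequality on unit intervals.
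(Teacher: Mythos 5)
The review paper states Theorem \ref{th:brinck} purely as a citation to Brasche's paper \cite{Bra85} and contains no proof of it, so there is no internal argument to compare against; judged on its own, your proof is correct and complete. The sufficiency half is sound: the pointwise Sobolev bound on unit intervals is derived correctly, the two negative contributions are dominated by $(M+N)\sum_n\|f\|_{L^\infty(I_n)}^2$ because the $I_n$ partition $\R_+$, and the choice $\varepsilon=(2(M+N))^{-1}$ absorbs the kinetic term (the degenerate case $M+N=0$ needs no argument, as the form is then nonnegative). The necessity half is also correct: under $q\le 0$, $\gA\le 0$ every term beyond the plateau is nonpositive, so $\gt_{X,\gA,q}^0[f_n]\le 2-\int_{y_n}^{y_n+1}q_-$ (respectively $2-\sum_{x_k\in[y_n,y_n+1]}\gA_k^-$), while $\|f_n\|_{L^2}^2\le 5/3$, contradicting any lower bound; your use of $q\in L^1_{\loc}$ and the discreteness of $X$ to force the bad intervals to escape to $+\infty$ is exactly what is needed for the test functions to lie in $\dom(\gt_{X,\gA,q}^0)$ (pairwise disjointness of the intervals, which you also arrange, is actually not needed). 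This KLMN-type absorption plus plateau-function argument is the classical Brinck-style route that underlies Brasche's measure-perturbation result, so your write-up in effect supplies the proof the review delegates to the literature.
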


\begin{remark}\label{rem:2.16}
(i) Theorem \ref{th:brinck} immediately implies that the operator $\rH_{X,\gA,q}$ is self-adjoint and lower semibounded if  $q$ is lower semibounded, $q(x)\ge -c$, and $\gA$ is a positive sequence.

(ii) The condition \eqref{eq:brinck} is only sufficient if $q$ and $\gA$ take values of both signs. Examples of $\gA$ and $q$, which do not satisfy \eqref{eq:brinck} but such that the operator is lower semibounded can be found in \cite{Bri59}, \cite[Example 2]{Bra85} (see also Example \ref{ex:2.15} below).

(iii) If conditions \eqref{eq:brinck} are satisfied, then  $\dom(\rH_{X,\gA,q}) $ is continuously embedded into $W^{1,2}(\R_+)$ (cf. Remark \ref{rem:2.1}),
%
%
\begin{equation}\label{2.24}
\dom(\rH_{X,\gA,q}) \hookrightarrow  \dom(\gt_{X,\gA,q}) \hookrightarrow  W^{1,2}(\R_+).
\end{equation}
Apparently  this embedding might be false even in the case of semibounded (hence self-adjoint) Hamiltonians  $\rH_{X,\gA,q}$. For further results and examples in the case of Hamiltonian $\rH_{q}$ with locally integrable potentials we refer to \cite{Eve83}, \cite{EveGieWei73}, \cite{Kal74}.
\end{remark}

\begin{example}[\cite{Bra85}]\label{ex:2.15}
Choose any $a>1$ and set $x_{2k-1}=k$ and $x_{2k}=k+a^{-3k}$. Let also $\gA_{2k-1}=a^k$ and $\gA_{2k}=-a^k$. Clearly,
\[
\sum_{x_k\in [n,n+1]}\gA_k^-=a^n \to +\infty\quad \text{as}\quad n\to \infty,
\]
and hence the second condition in \eqref{eq:brinck} is not fulfilled. However (see \cite[Example 2]{Bra85}), the Hamiltonian
\be
\rH:= \rH_{X,\gA,} =-\frac{d^2}{dx^2}+\sum_{k=1}^\infty \gA_k\delta(x-x_k)
\ee
is lower semibounded and hence  self-adjoint in $L^2(\R_+)$.

Let us also mention that the function $f(x)=a^{-x/2}$ is in the form domain, that is, $\gt_{\rH}[f]<\infty$. Moreover, $f\in W^{1,2}(\R_+)$. However,
\[
\sum_{k\in \N}\gA_k^-|f(x_k)|^2=\sum_{k=1}^\infty a^k a^{-k-a^{-3k}}=\infty.
\]
Note that in this example the embedding \eqref{2.24} holds true and is continuous \cite{Bra85}.
     \end{example}

%

Let us note that in the special case when there is a positive lower bound between interactions centers, i.e., $\gd_*>0$, the criterion obtained in Theorem \ref{th:km_sa}(ii) immediately implies the following statement.

\begin{corollary}\label{cor:2.13}
 Let  $q$ be bounded and $\gd_*>0$. Then the operator $\rH_{X,\gA,q}$ is lower semibounded precisely if so is the sequence $\gA$.
\end{corollary}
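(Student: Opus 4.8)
The plan is to read off everything from Theorem \ref{th:km_sa}(ii): since $q$ is bounded and (as required by that theorem) $\gd^*:=\sup_k\gd_k<\infty$, the operator $\rH_{X,\gA,q}$ is lower semibounded if and only if the Jacobi matrix $B_{X,\gA}$ of \eqref{eq:B_xa} is. As $B_{X,\gA}$ depends only on $\gA$ and on the gaps $\gd_k$ (not on $q$), it remains to prove the purely matrix-theoretic statement: under $0<\gd_*\le\gd_k\le\gd^*<\infty$, the matrix $B_{X,\gA}$ is lower semibounded precisely when $\inf_k\gA_k>-\infty$.

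First I would isolate the $\gA$-dependence by writing $B_{X,\gA}=B_{X,0}+D_{\gA}$, where $B_{X,0}$ is \eqref{eq:B_xa} with $\gA\equiv0$ and $D_\gA:=\diag\big(r_n^{-2}\gA_n\big)$ with $r_n^2=\gd_n+\gd_{n+1}$. The structural fact driving the proof is that $B_{X,0}\ge0$. I would verify this directly: conjugating by $R:=\diag(r_n)$ reduces $B_{X,0}$ to the matrix $J=RB_{X,0}R$ with diagonal $\gd_n^{-1}+\gd_{n+1}^{-1}$ and off-diagonal $\gd_{n+1}^{-1}$, and a telescoping rearrangement shows its quadratic form equals $\sum_{n\ge0}\gd_{n+1}^{-1}|u_{n+1}+u_n|^2\ge0$ under the Dirichlet convention $u_0=0$; hence $B_{X,0}=R^{-1}JR^{-1}\ge0$. (Alternatively, $B_{X,0}\ge0$ follows from Theorem \ref{th:km_sa}(iv) applied to the nonnegative free Dirichlet operator $\rH_{X,0}=-\rD^2/\rD x^2$.)

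The two-sided gap bound then makes the scaling inoffensive: $r_n^{-2}=(\gd_n+\gd_{n+1})^{-1}\in[(2\gd^*)^{-1},(2\gd_*)^{-1}]$ lies between two positive constants. If $\inf_k\gA_k=c_0>-\infty$, this forces $D_\gA\ge c\,I$ for some finite $c$ (namely $c=\min\{0,c_0/(2\gd_*)\}$), so $B_{X,\gA}=B_{X,0}+D_\gA\ge D_\gA\ge c\,I$ is lower semibounded. Conversely, if $\gA_{k_j}\to-\infty$ along a subsequence, testing on the standard basis vectors gives $\langle B_{X,\gA}e_{k_j},e_{k_j}\rangle=r_{k_j}^{-2}\big(\gA_{k_j}+\gd_{k_j}^{-1}+\gd_{k_j+1}^{-1}\big)\to-\infty$, because the bracket is $\gA_{k_j}$ plus a quantity bounded by $2\gd_*^{-1}$ while the prefactor stays $\ge(2\gd^*)^{-1}>0$; thus $B_{X,\gA}$ is unbounded below.

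I expect the only delicate point to be the bookkeeping with the two gap bounds, which carry distinct roles. The lower bound $\gd_*>0$ is what keeps the free diagonal $\gd_n^{-1}+\gd_{n+1}^{-1}$ uniformly bounded, so that a very negative $\gA_{k_j}$ cannot be masked by a large positive diagonal term --- this is the heart of the ``only if'' direction. The upper bound $\gd^*<\infty$ (inherited from Theorem \ref{th:km_sa}) keeps the prefactor $r_n^{-2}$ away from $0$, so the diagonal entries genuinely diverge instead of being damped; without it the equivalence fails, as one sees by taking $\gd_k\to\infty$ and $\gA_k\to-\infty$ at comparable rates. Once both uniform bounds are secured, the two implications follow as above.
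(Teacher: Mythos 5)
Your proposal is correct and follows essentially the same route as the paper: the paper deduces the corollary directly from Theorem \ref{th:km_sa}(ii), treating as immediate the fact that, under $0<\gd_*\le\gd_k\le\gd^*<\infty$, the Jacobi matrix $B_{X,\gA}$ is lower semibounded exactly when $\inf_k\gA_k>-\infty$, and your decomposition $B_{X,\gA}=B_{X,0}+D_{\gA}$ with $B_{X,0}\ge 0$ (verified by the congruence with $R=\diag(r_n)$ and the sum-of-squares identity) together with the diagonal test in the converse direction supplies precisely that omitted verification. The only caveat --- which you flag yourself and which is equally present in the paper's one-line argument --- is that the reduction through Theorem \ref{th:km_sa} tacitly uses $\gd^*<\infty$, a hypothesis not written into the corollary's statement.
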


\begin{remark}
Note that Corollary \ref{cor:2.13} was first established by Brasche \cite{Bra85} by using the form approach. A different proof based on the boundary triplets approach was given in \cite{Mih94} (see also \cite{KosMal09}).
\end{remark}

\subsection{Spectral types}\label{sec:2.4} The literature on characterization of spectral types of Hamiltonians with $\delta$-interactions is enormous and for a comprehensive treatment of operators with periodic potentials, short range perturbations etc. we refer the reader to the monograph \cite{AGHH05}. In this subsection we shall review recent developments in the case $\gd_*=0$.

{\bf Discreteness.} We begin with the criteria for the operator $\rH_{X,\gA,q}$ to have a discrete spectrum. First of all, let us mention that the analog of the classical A.M.~Molchanov  discreteness criterion \cite{Mol53} (see also \cite{Bri59}, \cite{Gla65}) holds true.

\begin{theorem}[\cite{AlbKosMal09}]\label{th:molchanov}
Assume that the negative parts of $q$ and $\gA$ satisfy conditions \eqref{eq:brinck}. Then the lower semibounded operator $\rH_{X,\gA,q}$ has purely discrete spectrum if and only if for every $\varepsilon>0$
\be
\int_x^{x+\varepsilon} q(t)\, dt+\sum_{x_k\in[x,x+\varepsilon]}\gA_k\to +\infty\quad \text{as}\quad x\to+\infty.
\ee
\end{theorem}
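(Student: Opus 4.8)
The plan is to reduce the Schr\"odinger problem to a discreteness criterion for the associated Jacobi matrix $B_{X,\gA}$, exactly as Theorem~\ref{th:km_sa} reduces self-adjointness and semiboundedness. Since conditions \eqref{eq:brinck} guarantee (by Theorem~\ref{th:brinck}) that the form $\gt_{X,\gA,q}^0$ is lower semibounded, Theorem~\ref{th:povzner} tells us $\rH_{X,\gA,q}$ is self-adjoint and equals the operator associated with the closed form. Thus we are in a regime where the spectral machinery applies. I would first recall that discreteness of a lower semibounded self-adjoint operator is equivalent to the compactness of the resolvent, which in turn (by the min-max principle) is governed by the behaviour of the quadratic form on subspaces supported near infinity.

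\medskip\noindent\textbf{Step 1: Localization of the form.} I would split the form $\gt_{X,\gA,q}$ using a partition of unity adapted to intervals of length $\varepsilon$, writing the energy form as a sum of local contributions. The key classical insight, going back to Molchanov, is that discreteness is determined by the asymptotics as $x\to+\infty$ of the minimal form-value over functions localized in a window $[x,x+\varepsilon]$. The quantity $\int_x^{x+\varepsilon}q(t)\,dt+\sum_{x_k\in[x,x+\varepsilon]}\gA_k$ is precisely the ``integrated potential'' over such a window, now including both the absolutely continuous part $q$ and the discrete $\delta$-masses $\gA_k$.

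\medskip\noindent\textbf{Step 2: Reduction to the Jacobi matrix.} Alternatively, and more in the spirit of this review, I would invoke Theorem~\ref{th:km_sa}(iii): the self-adjoint operator $\rH_{X,\gA,q}$ has purely discrete spectrum if and only if $\gd_k\to 0$ and the Jacobi matrix $B_{X,\gA}$ has purely discrete spectrum. The task then becomes translating the Molchanov-type integral condition into a discreteness criterion for $B_{X,\gA}$ in terms of its entries \eqref{eq:B_xa}. One must verify that the windowed sum $\sum_{x_k\in[x,x+\varepsilon]}\gA_k\to+\infty$ corresponds, through the weights $r_n=\sqrt{\gd_n+\gd_{n+1}}$ and the off-diagonal terms $(r_nr_{n+1}\gd_{n+1})^{-1}$, to the growth of the diagonal of $B_{X,\gA}$ dominating the off-diagonal coupling in the appropriate averaged sense. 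Under $\gd_k\to 0$ the weights degenerate, and the discreteness criterion for such Jacobi matrices (see \cite{KosMir99}, \cite{KosMir01}) reproduces exactly the stated averaged condition.

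\medskip\noindent\textbf{Main obstacle.} The hard part will be the necessity direction and the precise matching of constants between the two pictures. For sufficiency one can build a quadratic-form lower bound that forces the essential spectrum to infinity, but for necessity one must construct, whenever the windowed integral fails to tend to $+\infty$ along some sequence $x_n\to\infty$, an infinite orthonormal family of approximate eigenvectors (a ``singular sequence'') with bounded energy, thereby producing essential spectrum below a finite level. Controlling the $\delta$-interaction terms $\gA_k|f(x_k)|^2$ for such test functions, especially when $\gd_*=0$ so that infinitely many interaction points may accumulate in a single window, is delicate: one cannot freely invoke the continuous embedding $\dom(\gt_{X,\gA,q})\hookrightarrow W^{1,2}(\R_+)$, since by Remark~\ref{rem:2.16}(iii) it may fail. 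The route through the Jacobi matrix $B_{X,\gA}$ circumvents this by absorbing the $\delta$-masses into discrete matrix entries, so I expect the cleanest proof to run the equivalence in Theorem~\ref{th:km_sa}(iii) and then apply a known discreteness test for Jacobi operators with vanishing weights.
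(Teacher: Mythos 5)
Your Step~1 is the right skeleton: the paper itself gives no proof (it cites \cite{AlbKosMal09}), and the proof there is exactly the direct Molchanov--Brinck form-localization argument you outline, treating $q\,dx+\sum_k\gA_k\delta_{x_k}$ as a measure, with sufficiency obtained from lower bounds on the localized forms and necessity from singular sequences. But you then discard this route and declare that ``the cleanest proof'' runs through Theorem~\ref{th:km_sa}(iii), and that route genuinely fails. Theorem~\ref{th:km_sa} is established only under the hypotheses $q\in L^\infty(\R_+)$ and $\gd^*=\sup_k\gd_k<\infty$, neither of which is assumed in Theorem~\ref{th:molchanov}: conditions \eqref{eq:brinck} constrain only the \emph{negative} parts of $q$ and $\gA$, so $q$ may be unbounded above (indeed, when $X$ is sparse the windowed condition can only hold because $\int_x^{x+\varepsilon}q\,dt\to+\infty$, which requires unbounded $q$), and no upper bound on the gaps $\gd_k$ is imposed. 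The Jacobi matrix $B_{X,\gA}$ in \eqref{eq:B_xa} does not even see the potential $q$, so no statement about $B_{X,\gA}$ alone can decide discreteness once $q\notin L^\infty$. The reduction you want to invoke is simply unavailable in the generality of the theorem.

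There is a second, independent defect: even in the restricted regime $q\in L^\infty$, $\gd^*<\infty$, the ``known discreteness test for Jacobi operators'' that you hope to cite does not exist as an input. Remark~\ref{rem:2.22} states that the discreteness criterion for Jacobi matrices of the form $B_{X,\gA}$ is \emph{obtained by combining} Theorem~\ref{th:molchanov} with Theorem~\ref{th:km_sa}(iii) --- i.e., it is a corollary of the theorem you are trying to prove, so using it as a lemma would be circular. The references you point to, \cite{KosMir99} and \cite{KosMir01}, concern deficiency indices and complete-indefiniteness (self-adjointness) tests, not discreteness; the discreteness tests actually used elsewhere in the paper (Chihara, Kac--Krein) give sufficient conditions or apply to Krein strings, not the windowed two-sided criterion here. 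So the only viable proof is the one your Step~1 gestures at, and there the real work --- the two-sided local form estimates using \eqref{eq:brinck} to control $q_-$ and $\gA_k^-$, and the construction of an infinite bounded-energy orthonormal family when the windowed quantity fails to diverge (delicate precisely when infinitely many $x_k$ accumulate in one window) --- is left entirely unexecuted in your proposal.
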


In the case $q\in L^\infty$, we immediately arrive at the following result.

\begin{corollary}
If $q\in L^\infty(\R_+)$ and $\gA$ satisfies the second condition in \eqref{eq:brinck}, then the operator $\rH_{X,\gA,q}$ has purely discrete spectrum if and only if for every $\varepsilon>0$
\be
\sum_{x_k\in[x,x+\varepsilon]}\gA_k\to +\infty\quad \text{as}\quad x\to+\infty.
\ee
In particular, the spectrum of $\rH_{X,\gA,q}$ is purely discrete whenever
\be\label{eq:2.24D}
\gd_k\to 0 \quad \text{and}\quad \frac{\gA_k}{\gd_k}\to +\infty\quad \text{as}\quad k\to\infty.
\ee
\end{corollary}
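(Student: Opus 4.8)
The plan is to read the statement as an immediate specialization of the Molchanov-type criterion in Theorem~\ref{th:molchanov}, whose hypotheses become nearly automatic under $q\in L^\infty(\R_+)$. First I would check that Theorem~\ref{th:molchanov} is applicable. Boundedness of $q$ gives $\sup_{x>0}\int_x^{x+1}q_-(t)\,dt\le\|q\|_{L^\infty}$, so the first condition in \eqref{eq:brinck} holds trivially; together with the assumed control of the negative part of $\gA$, Theorem~\ref{th:brinck}(i) makes the form $\gt^0_{X,\gA,q}$ lower semibounded, and Theorem~\ref{th:povzner} then upgrades this to self-adjointness and lower semiboundedness of $\rH_{X,\gA,q}$. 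Hence the operator is exactly of the kind Theorem~\ref{th:molchanov} governs, and its spectrum is purely discrete iff $\int_x^{x+\varepsilon}q(t)\,dt+\sum_{x_k\in[x,x+\varepsilon]}\gA_k\to+\infty$ for every $\varepsilon>0$.

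The first assertion would then follow from a single observation: since $q$ is bounded, $|\int_x^{x+\varepsilon}q(t)\,dt|\le\varepsilon\|q\|_{L^\infty}$ uniformly in $x$. A summand that stays bounded cannot influence whether a sum tends to $+\infty$, so the Molchanov condition collapses to $\sum_{x_k\in[x,x+\varepsilon]}\gA_k\to+\infty$ as $x\to+\infty$, which is precisely the stated criterion.

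For the final ``in particular'' claim I would first verify the standing hypothesis on $\gA$. Since $\gd_k>0$, the assumption $\gA_k/\gd_k\to+\infty$ forces $\gA_k>0$ for all large $k$, so $\gA_k^-=0$ eventually and the second condition in \eqref{eq:brinck} holds trivially; thus the equivalence just proved is available, and it remains only to check the sum condition. Fixing $\varepsilon>0$ and an arbitrary $M>0$, I would pick $K$ so large that $\gA_k\ge 2M\gd_k$ and $\gd_k<\varepsilon/2$ for all $k\ge K$, and then $X_0$ so that for $x\ge X_0$ every point $x_k$ in $[x,x+\varepsilon]$ has index $\ge K$ (possible because $x_k\uparrow+\infty$). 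Writing $x_m,\dots,x_n$ for the points in the window, the gaps telescope, $\sum_{k=m}^n\gd_k=x_n-x_{m-1}$, and combining $x_{m-1}<x$ with $x_n>x+\varepsilon-\gd_{n+1}$ bounds this below by $\varepsilon-\gd_{n+1}>\varepsilon/2$. Multiplying by $2M$ yields $\sum_{x_k\in[x,x+\varepsilon]}\gA_k>M\varepsilon$, and since $M$ was arbitrary the sum diverges to $+\infty$.

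The only genuine work is this last geometric estimate of $\sum_{x_k\in[x,x+\varepsilon]}\gd_k$ from below, and the point I expect to require the most care is the ordering of quantifiers: one fixes $\varepsilon$ first, then treats an arbitrary $M$, using $\gd_k\to0$ both to populate every window of length $\varepsilon$ with points of index $\ge K$ and to make the boundary defect $\gd_{n+1}$ negligible, so that the telescoped gap sum recovers essentially the full length $\varepsilon$. Everything else reduces to direct citations of Theorems~\ref{th:brinck}, \ref{th:povzner}, and \ref{th:molchanov}.
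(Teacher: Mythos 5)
Your proposal is correct and follows exactly the route the paper intends: the corollary is stated as an immediate specialization of Theorem~\ref{th:molchanov}, where boundedness of $q$ makes the first condition in \eqref{eq:brinck} and the $\int_x^{x+\varepsilon}q$ term harmless, and the ``in particular'' part follows from the telescoping gap estimate you give. Your explicit verification of the window estimate $\sum_{x_k\in[x,x+\varepsilon]}\gd_k>\varepsilon/2$ and of the fact that $\gA_k/\gd_k\to+\infty$ forces the second condition in \eqref{eq:brinck} simply fills in details the paper leaves implicit.
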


Note that Theorem \ref{th:molchanov} applies only in the case of lower semibounded operators. Thus Theorem \ref{th:km_sa}(iii) completes Theorem \ref{th:molchanov} in the non lower semibounded case. In particular, applying the Chihara condition to the matrix
$B_{X,\gA}$, we arrive at the following result.

\begin{corollary}[\cite{KosMal09}]\label{th:km_discrA}
Let $q\in L^\infty(\R_+)$ and $X$ be such that $\gd^*<\infty$. Let the matrix $B_{X,\gA}$ be self-adjoint and let also $\gd_k\to 0$ and
\be\label{eq:2.23}
\lim_{k\to \infty}\frac{|\gA_k|}{\gd_k}=\infty\quad \text{and}\quad \lim_{k\to\infty}\frac{1}{\gA_k\gd_k}>-\frac{1}{4}.
\ee
Then the self-adjoint operator $\rH_{X,\gA,q}$ has a purely discrete spectrum.
\end{corollary}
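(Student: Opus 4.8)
The plan is to reduce the assertion to a discreteness statement for the Jacobi matrix $B_{X,\gA}$ and then to verify the latter by a Chihara-type test. Since $q\in L^\infty(\R_+)$, $\gd^*<\infty$, and $B_{X,\gA}$ is assumed self-adjoint (so that $\rH_{X,\gA,q}$ is self-adjoint by Theorem~\ref{th:km_sa}(i)), we are in the setting of Theorem~\ref{th:km_sa}. Because the hypothesis already contains $\gd_k\to0$, part (iii) of that theorem says that $\rH_{X,\gA,q}$ has purely discrete spectrum \emph{if and only if} $B_{X,\gA}$ does. Hence everything reduces to proving that the self-adjoint Jacobi matrix \eqref{eq:B_xa} has purely discrete spectrum.

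First I would write the entries of $B_{X,\gA}$ explicitly. Using $r_n^2=\gd_n+\gd_{n+1}$ one finds the diagonal and off-diagonal sequences
\[
b_n=\frac{\alpha_n}{\gd_n+\gd_{n+1}}+\frac{1}{\gd_n\gd_{n+1}}=\frac{1+\beta_n}{\gd_n\gd_{n+1}},\qquad \beta_n:=\frac{\alpha_n\gd_n\gd_{n+1}}{\gd_n+\gd_{n+1}}=\alpha_n\gd_n\cdot\frac{\gd_{n+1}}{\gd_n+\gd_{n+1}},
\]
\[
a_n=\Bigl(\gd_{n+1}\sqrt{(\gd_n+\gd_{n+1})(\gd_{n+1}+\gd_{n+2})}\Bigr)^{-1}.
\]
The quantity that governs Chihara's criterion is then
\[
\frac{a_n^2}{b_nb_{n+1}}=\frac{\gd_n}{\gd_n+\gd_{n+1}}\cdot\frac{\gd_{n+2}}{\gd_{n+1}+\gd_{n+2}}\cdot\frac{1}{(1+\beta_n)(1+\beta_{n+1})},
\]
together with the growth of $|b_n|$. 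Feeding in the hypotheses, the condition $\lim_k\frac{1}{\alpha_k\gd_k}>-\frac14$ forces $\alpha_k\gd_k$ to avoid the window $[-4,0)$ asymptotically, and hence $|1+\beta_n|$ to stay bounded below by a positive constant; combined with $\gd_k\to0$ this yields $|b_n|=|1+\beta_n|/(\gd_n\gd_{n+1})\to\infty$. In the regular regime $\gd_n\sim\gd_{n+1}\sim\gd_{n+2}$ the geometric factor tends to $\tfrac14$ and $\beta_n\to c/2$ with $c=\lim\alpha_n\gd_n$, so the whole ratio tends to $\tfrac14(1+c/2)^{-2}$, which is strictly below $\tfrac14$ exactly when $c>0$ or $c<-4$ — i.e. precisely under $\lim\frac1{\alpha_k\gd_k}>-\frac14$. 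The constant $\tfrac14$ in the hypothesis is thus the reflection of the chain-sequence threshold $\tfrac14$ for Jacobi matrices.

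By Chihara's discreteness test (a Jacobi matrix with $|b_n|\to\infty$ whose sequence $\{a_n^2/(b_nb_{n+1})\}$ is, for large $n$, a chain sequence with constant below $\tfrac14$ has purely discrete spectrum), $B_{X,\gA}$ then has purely discrete spectrum, and the reduction above finishes the proof. The main obstacle is carrying out this last verification \emph{without any regularity assumption on the gaps} $\gd_k$, and two points require genuine care. First, when $\alpha_n$ is strongly negative ($\alpha_n\gd_n\to c<-4$) one has $b_n\to-\infty$, so $B_{X,\gA}$ need not be lower semibounded; here I would pass to $-B_{X,\gA}$, which after a diagonal $\pm1$ conjugation is a Jacobi matrix with positive off-diagonals and diagonal tending to $+\infty$, and apply the one-sided test (discreteness of $-B_{X,\gA}$ and of $B_{X,\gA}$ being equivalent). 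Second, if consecutive gaps are wildly different the geometric factor can exceed $\tfrac14$ and the crude bound $\limsup_n a_n^2/(b_nb_{n+1})<\tfrac14$ simply fails; in that case one must invoke the sharp chain-sequence formulation of Chihara's criterion rather than the naive ratio bound, exploiting that the large and small values of the ratio alternate (a tiny gap $\gd_{n+1}$ produces one large factor flanked by vanishing ones) and hence still assemble into an admissible chain sequence. Establishing this interplay uniformly in $n$, under only the stated hypotheses, is the crux of the argument.
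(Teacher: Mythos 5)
Your overall route is exactly the paper's: the paper derives this corollary from Theorem \ref{th:km_sa}(iii) by ``applying the Chihara condition to the matrix $B_{X,\gA}$'', with details deferred to \cite{KosMal09}. Your reduction to the Jacobi matrix, your formulas for $b_n$, $a_n$ and for the ratio $a_n^2/(b_nb_{n+1})$, and your identification of the threshold $-\tfrac14$ in \eqref{eq:2.23} with the chain-sequence constant $\tfrac14$ are all correct.

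However, as a proof the proposal has genuine gaps, and they sit precisely where the mathematical content lies. First, the deduction ``$\gA_k\gd_k$ avoids $[-4,0)$ asymptotically, hence $|1+\beta_n|$ is bounded below, hence $|b_n|\to\infty$'' is not valid: since $\beta_n=(\gA_n\gd_n)\,\gd_{n+1}/(\gd_n+\gd_{n+1})$, in the regime where the limit in \eqref{eq:2.23} lies in $(-\tfrac14,0)$ (so that $\gA_n\gd_n\to c<-4$), a gap ratio with $\gd_{n+1}/(\gd_n+\gd_{n+1})$ near $1/|c|$ makes $1+\beta_n$ arbitrarily close to $0$, or even equal to $0$; nothing in \eqref{eq:2.23} or in $\gd_k\to0$ forbids this, so both the claim $|b_n|\to\infty$ and your fallback of passing to $-B_{X,\gA}$ can break down, and ruling such configurations out would require invoking the self-adjointness hypothesis, which you never use at this step. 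Second, the chain-sequence verification, which you yourself call ``the crux'', is never carried out, and it is not a routine detail. Concretely, take $\gd_{2k}=\delta_k$, $\gd_{2k+1}=\delta_k^2$ with $\delta_k=1/k$, and $\gA_n=c/\gd_n$ with $c\in(0,3]$: all hypotheses of the corollary hold (self-adjointness of $\rH_{X,\gA}$ follows from Remark \ref{rem:2.16}(i) since $\gA$ is positive, hence $B_{X,\gA}$ is self-adjoint by Theorem \ref{th:km_sa}(i)), yet a direct computation gives
\begin{equation*}
\frac{a_{2k}^2}{b_{2k}b_{2k+1}}\;\longrightarrow\;\frac{1}{1+c}\;\ge\;\frac14,
\end{equation*}
so the naive $\limsup<\tfrac14$ form of Chihara's test is unavailable, and the entire burden of the proof rests on the alternating chain-sequence construction that you only gesture at. What you have, then, is the correct reduction (identical to the paper's) together with an accurate diagnosis of the difficulty; the actual proof that $B_{X,\gA}$ has discrete spectrum under only the stated hypotheses is missing.
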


Using the approach developed in \cite{Ism85} for smooth potentials, Ismagilov and Kostyuchenko \cite{IsmKos10} obtained the following result.

\begin{proposition}[\cite{IsmKos10}]\label{prop:ik_disc}
Assume that $q\in L^\infty(\R_+)$, $\gA_k<0$ for all $k\in \N$, $\gd_k\to0$ and
\be\label{eq:2.23b}
\frac{|\gA_k|}{\gd_k+\gd_{k+1}}-\frac{2}{\gd_k\gd_{k+1}}\to +\infty,\quad k\to\infty.
\ee
Then the operator $\rH_{X,\gA,q}$ is non lower semibounded and its spectrum is purely discrete.
\end{proposition}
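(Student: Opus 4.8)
The plan is to transfer both assertions to the auxiliary Jacobi matrix $B_{X,\gA}$ of \eqref{eq:B_xa} via Theorem \ref{th:km_sa}, which applies here since $q\in L^\infty(\R_+)$ and $\gd_k\to0$ forces $\gd^*=\sup_k\gd_k<\infty$. First I would record the diagonal entries. Writing $r_n^2=\gd_n+\gd_{n+1}$ and using $\frac1{\gd_n}+\frac1{\gd_{n+1}}=\frac{\gd_n+\gd_{n+1}}{\gd_n\gd_{n+1}}$, the $n$-th diagonal entry is
\[
b_n:=r_n^{-2}\Bigl(\gA_n+\tfrac1{\gd_n}+\tfrac1{\gd_{n+1}}\Bigr)=\frac{\gA_n}{\gd_n+\gd_{n+1}}+\frac1{\gd_n\gd_{n+1}}=\frac1{\gd_n\gd_{n+1}}-\frac{|\gA_n|}{\gd_n+\gd_{n+1}},
\]
the last step using $\gA_n=-|\gA_n|$. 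Hence $-b_n=\bigl(\frac{|\gA_n|}{\gd_n+\gd_{n+1}}-\frac2{\gd_n\gd_{n+1}}\bigr)+\frac1{\gd_n\gd_{n+1}}$, and since $\gd_k\to0$ makes $\frac1{\gd_n\gd_{n+1}}\to+\infty$, assumption \eqref{eq:2.23b} yields $b_n\to-\infty$ (so in particular $|b_n|\ge\frac1{\gd_n\gd_{n+1}}$ eventually).

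The non-semiboundedness is then immediate: evaluating the quadratic form of $B_{X,\gA}$ on the coordinate vectors $e_n$ of $l^2(\N)$ gives $(B_{X,\gA}e_n,e_n)=b_n\to-\infty$, so $B_{X,\gA}$ is unbounded below, and by Theorem \ref{th:km_sa}(ii) the operator $\rH_{X,\gA,q}$ is not lower semibounded.

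For the discreteness I would show that $B_{X,\gA}$ is self-adjoint (its deficiency indices are at most $1$ by Theorem \ref{th:km_sa}(i), and one must rule out the value $1$) and has compact resolvent; combined with $\gd_k\to0$, Theorem \ref{th:km_sa}(iii) then delivers self-adjointness and pure discreteness for $\rH_{X,\gA,q}$. As $B_{X,\gA}$ is not semibounded, compactness of the resolvent is most naturally obtained through $B_{X,\gA}^2$: it would suffice to produce a lower bound $\|B_{X,\gA}u\|^2\ge\sum_n\Phi_n|u_n|^2$ on finitely supported $u$ with a weight $\Phi_n\to+\infty$, since then the form domain of $B_{X,\gA}^2$ embeds compactly into $l^2$. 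The leading contribution is the diagonal term $\sum_n b_n^2|u_n|^2$ with $b_n^2\to+\infty$, and the structural fact driving everything is the geometric-mean bound
\[
c_n^2=\frac{1}{(\gd_n+\gd_{n+1})(\gd_{n+1}+\gd_{n+2})\,\gd_{n+1}^2}\le|b_n|\,|b_{n+1}|,
\]
where $c_n=(r_nr_{n+1}\gd_{n+1})^{-1}$ are the off-diagonal entries; it follows from $|b_n|\ge\frac1{\gd_n\gd_{n+1}}$ together with $\frac{\gd_n}{\gd_n+\gd_{n+1}}\cdot\frac{\gd_{n+2}}{\gd_{n+1}+\gd_{n+2}}\le1$.

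The hard part is precisely this off-diagonal control, and I expect it to be the main obstacle, because no hypothesis restricts the ratios $\gd_n/\gd_{n+1}$ of consecutive gaps. A naive diagonal domination fails: one only obtains $c_n/|b_n|\le\sqrt{\gd_n/\gd_{n+1}}$ and $c_{n-1}/|b_n|\le\sqrt{\gd_{n+1}/\gd_n}$, one of which may blow up, and even the symmetric quantity $c_n/\sqrt{|b_n|\,|b_{n+1}|}$ need not tend to $0$ (it can approach $1$ whenever $\gd_{n+1}$ is tiny compared with both neighbours and the excess in \eqref{eq:2.23b} is small relative to $\frac1{\gd_n\gd_{n+1}}$). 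The point to exploit is that such a large coupling \emph{cannot occur at two consecutive indices}, since $\gd_{n+1}$ cannot be simultaneously much smaller than both $\gd_n$ and $\gd_{n+2}$ for all $n$; thus the troublesome couplings are isolated and amount to a relatively compact (essentially block-diagonal) modification of a diagonal operator whose entries diverge. Making this rigorous is the delicate step, and it is where the strict divergence in \eqref{eq:2.23b} is genuinely used. A cleaner alternative, and the one taken in the original source, bypasses the Jacobi matrix entirely: it is the direct operator-theoretic method of Ismagilov and Kostyuchenko \cite{IsmKos10}, following \cite{Ism85}, in which both non-semiboundedness and discreteness are read off from suitable test-function estimates on the successive intervals $[x_{k-1},x_k]$.
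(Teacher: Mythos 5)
You should first note that this review paper contains no proof of Proposition \ref{prop:ik_disc} to compare against: the result is quoted from \cite{IsmKos10}, where it is proved by the direct interval-by-interval method extending \cite{Ism85} (test-function estimates on $[x_{k-1},x_k]$), exactly as you observe in your closing sentence. Your Jacobi-matrix route via \eqref{eq:B_xa} is therefore genuinely different from the source, and its first half is correct and complete: the computation $b_n=\frac{1}{\gd_n\gd_{n+1}}-\frac{|\gA_n|}{\gd_n+\gd_{n+1}}$ is right, writing $-b_n=E_n+\frac{1}{\gd_n\gd_{n+1}}$ with $E_n$ the quantity in \eqref{eq:2.23b} shows $b_n\to-\infty$, and Theorem \ref{th:km_sa}(ii) (applicable since $\gd_k\to0$ gives $\gd^*<\infty$) transfers non-semiboundedness to $\rH_{X,\gA,q}$.

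The discreteness half, however, has a genuine gap, and it is not only the off-diagonal estimate you flag. The decisive missing piece is self-adjointness. Your proposed mechanism --- a weighted bound $\|B_{X,\gA}u\|^2\ge\sum_n\Phi_n|u_n|^2$ with $\Phi_n\to+\infty$, hence compact embedding of the form domain of $B_{X,\gA}^2$ into $l^2$ --- can never ``rule out the value $1$'' for the deficiency indices: if $B_{X,\gA}$ were in the limit-circle case, every self-adjoint extension would have compact resolvent and the defect spaces are finite-dimensional, so the graph-norm unit ball of the \emph{minimal} operator would be compact in $l^2$ as well; thus compactness of that embedding is fully compatible with $n_\pm(B_{X,\gA})=1$, in which case $\sigma(\rH_{X,\gA,q})=\C$ and the conclusion fails. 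This danger is not hypothetical here: in the family of Example \ref{ex:2.8}(v) (with $\gd_k=1/k$, $\gA_k=-a(2k+1)$) condition \eqref{eq:2.23b} holds precisely for $a>2$, while for $a\in(0,2)$ one has $n_\pm=1$; so the self-adjointness assertion sits exactly at the boundary of the non-self-adjoint regime and requires a dedicated argument. Second, the discreteness estimate itself is only sketched and cannot be replaced by the paper's own Jacobi tool: Corollary \ref{th:km_discrA} assumes self-adjointness and the Chihara-type condition \eqref{eq:2.23}, and the paper states explicitly that \eqref{eq:2.23} and \eqref{eq:2.23b} are incomparable. Your ``isolated bad couplings'' observation is correct in spirit (the ratio $c_n^2/(|b_n|\,|b_{n+1}|)$ can approach $1$ only when $\gd_{n+1}\ll\gd_n,\gd_{n+2}$, which cannot happen at two consecutive indices, and the divergence $E_n\to+\infty$ is what keeps the $2\times2$ block determinants $b_nb_{n+1}-c_n^2\ge E_nE_{n+1}$ large), but turning it into a proof amounts to establishing a new self-adjointness-plus-discreteness criterion for Jacobi matrices --- that is, a new result, not a routine step. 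As written, neither self-adjointness nor discreteness is established.
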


Simple examples show that the condition \eqref{eq:2.23} does not imply  \eqref{eq:2.23b} and visa versa. Thus Corollary \ref{th:km_discrA} and Proposition \ref{prop:ik_disc} complete each other.

Let us mention that Corollary \ref{th:km_discrA} and Proposition \ref{prop:ik_disc} enable us to construct examples of Hamiltonians $\rH_{X,\gA,q}$, which are non lower semibounded, self-adjoint and their spectra are purely discrete.

\begin{example}[\cite{KosMal09b}]
Set $x_k=2\sqrt{k}$ and $\gA_k=-C\sqrt{k}$ with $C\in \R_+\setminus \{4\}$. Then $\gd_k=x_k-x_{k-1}\sim \frac{1}{\sqrt{k}}$, $k\to \infty$. The spectrum of the Hamiltonian
\[
\rH=-\frac{d^2}{dx^2}-C\sum_{k\in\N}\sqrt{k}\delta(x-2\sqrt{x})
\]
is non lower semibounded. Moreover, the spectrum is discrete if and only if $C>4$. The latter, in particular, implies that the second condition in \eqref{eq:2.23} is sharp.

\end{example}

\begin{remark}\label{rem:2.22}
 Combining Theorem \ref{th:molchanov} and Theorem \ref{th:km_sa}(iii) one can obtain a discreteness criterion for Jacobi matrices (for further details see \cite[\S 7]{AlbKosMal09}). This topic has attracted some attention recently, see \cite{CojJan07}, \cite{JanNab01}, \cite{JanNab03} and references therein.
\end{remark}

{\bf Continuous spectrum.} The next result is the extension of Birman's stability result \cite{Bir61} to the case of $\delta$-potentials.

\begin{theorem}[\cite{AlbKosMal09}]\label{th:birman}
Assume that the negative part of the potential $q$ satisfies \eqref{eq:brinck}. Then $\sigma_{\ess}(\rH_{X,\gA,q})=\sigma_{\ess}(\rH_q)$ provided that
\be
\lim_{x\to\infty}\sum_{x_k\in [x,x+1]}|\gA_k|\to 0.
\ee
 In particular, if in addition $q\to 0$ as $x\to \infty$, then $\sigma_{\ess}(\rH_{X,\gA,q})=[0,+\infty)$.
\end{theorem}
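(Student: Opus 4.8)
The plan is to realize Birman's mechanism literally: exhibit the point interaction as a relatively form\nobreakdash-compact perturbation of $\rH_q$ and then invoke the form version of Weyl's theorem on the invariance of the essential spectrum under relatively form-compact perturbations. First I would record that both operators are genuinely available. The hypothesis $\lim_{x\to\infty}\sum_{x_k\in[x,x+1]}|\gA_k|=0$ forces $\sup_x\sum_{x_k\in[x,x+1]}\gA_k^-<\infty$, so together with the assumption on $q_-$ both conditions in \eqref{eq:brinck} hold; by Theorem \ref{th:brinck}(i) and Theorem \ref{th:povzner} the operator $\rH_{X,\gA,q}$ is self-adjoint, lower semibounded, and coincides with the operator of the form $\gt_{X,\gA,q}=\gt_q+\gt_\gA$, where $\gt_q[f]=\int_{\R_+}(|f'|^2+q|f|^2)\,dx$ and $\gt_\gA[f]=\sum_k\gA_k|f(x_k)|^2$; likewise $\rH_q=\rH_q^*$ is semibounded. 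Fixing $c>0$ with $\rH_q+c\ge I$ and $R:=(\rH_q+c)^{-1/2}$, it then suffices to show that $\gt_\gA$ is $\gt_q$-compact, i.e.\ that the bounded operator $S$ on $L^2(\R_+)$ defined by $(Sf,g)=\gt_\gA[Rf,Rg]$ is compact.

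Next I would factorize $S$. Let $J\colon W^{1,2}(\R_+)\to\ell^2(\N)$, $Jg:=(g(x_k))_k$, be the sampling operator, and put $\Lambda:=\diag(\gA_k)$, so that $\gt_\gA[g]=(\Lambda Jg,Jg)_{\ell^2}$. Writing $\Lambda=|\Lambda|^{1/2}\Sigma|\Lambda|^{1/2}$ with $\Sigma:=\diag(\sgn\gA_k)$ bounded and self-adjoint on $\ell^2$, one gets $S=T^*\Sigma T$ with $T:=|\Lambda|^{1/2}JR\colon L^2(\R_+)\to\ell^2(\N)$. Since $\Sigma$ is bounded, the compactness of $S$ reduces to the compactness of $T$.

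The heart of the matter is this compactness, and it rests on the uniform trace inequality coming from the embedding $W^{1,2}[m,m+1]\hookrightarrow C[m,m+1]$ with a constant $C_0$ independent of $m$: for every $g\in W^{1,2}(\R_+)$,
\[
\sum_{x_k\in[m,m+1]}|\gA_k|\,|g(x_k)|^2\le C_0^2\Big(\sum_{x_k\in[m,m+1]}|\gA_k|\Big)\|g\|^2_{W^{1,2}[m,m+1]}.
\]
I would split $T=T_N+(T-T_N)$, where $T_N:=P_NT$ and $P_N$ is the coordinate projection onto the indices with $x_k\le N$. For the tail I sum this estimate over the integer blocks $m\ge N$ and use that $R\colon L^2(\R_+)\to W^{1,2}(\R_+)$ is bounded (the infinitesimal form-boundedness of $q_-$ makes the form-norm control the $W^{1,2}$-norm), obtaining
\[
\|(I-P_N)T\|^2\le C\,\sup_{x\ge N}\sum_{x_k\in[x,x+1]}|\gA_k|\xrightarrow[N\to\infty]{}0
\]
by hypothesis. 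For the finite part, if $f_n\rightharpoonup 0$ in $L^2(\R_+)$ then $Rf_n\rightharpoonup 0$ in $W^{1,2}(\R_+)$, hence $Rf_n\to0$ in $C[0,N]$ by the compact embedding $W^{1,2}[0,N]\hookrightarrow C[0,N]$, so $\|T_Nf_n\|^2_{\ell^2}\le\|Rf_n\|^2_{C[0,N]}\sum_{x_k\le N}|\gA_k|\to0$; thus each $T_N$ is compact. As a norm limit of compact operators, $T$ is compact, and therefore so are $S$ and the perturbation.

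Relative form-compactness then yields $\sigma_{\ess}(\rH_{X,\gA,q})=\sigma_{\ess}(\rH_q)$, and the final assertion follows because $q\to0$ makes $q$ a relatively compact perturbation of $-d^2/dx^2$, whence $\sigma_{\ess}(\rH_q)=[0,+\infty)$. The main obstacle is precisely the compactness of $T$ in the regime $\gd_*=0$, where infinitely many centres may accumulate and a naive bound on $\sum_k|g(x_k)|^2$ diverges; the decisive device is that the weighting by $|\gA_k|^{1/2}$ together with the averaged decay hypothesis lets one pull $\sup|g|$ out of each unit block, so the tail is small in norm while on bounded regions the compact Sobolev embedding finishes the argument. (Alternatively, one could prove compactness of the resolvent difference directly via a Krein-type formula with the same operator $T$ sandwiching a boundedly invertible middle factor, but the form route is cleaner.)
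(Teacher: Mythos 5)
Your proof is correct and follows exactly the route the paper has in mind: the survey itself gives no proof of Theorem \ref{th:birman}, presenting it as "the extension of Birman's stability result \cite{Bir61}" with a citation to \cite{AlbKosMal09}, and your argument—using Theorems \ref{th:brinck} and \ref{th:povzner} to identify $\rH_{X,\gA,q}$ with the form operator, then exhibiting $\sum_k\gA_k\delta(\cdot-x_k)$ as a relatively form-compact perturbation of $\rH_q$ via the uniform trace inequality on unit blocks and concluding with the form version of Weyl's theorem—is precisely Birman's mechanism that the cited source adapts to $\delta$-potentials. Two small points worth making explicit in a polished write-up: since $x_k\uparrow+\infty$, every unit interval contains only finitely many $x_k$, so your deduction $\sup_x\sum_{x_k\in[x,x+1]}|\gA_k|<\infty$ from the decay hypothesis is indeed valid; and the identification of $\overline{\gt^0_{X,\gA,q}}$ with the KLMN form sum $\gt_q+\gt_\gA$ on $\dom(\gt_q)$ requires the (routine) cutoff argument showing that compactly supported functions form a core of $\gt_q$.
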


\begin{corollary}
If the negative part of the potential $q$ satisfies \eqref{eq:brinck} and
\be
\lim_{k\to\infty}\frac{\gA_k}{\gd_k}=0,
\ee
then $\sigma_{\ess}(\rH_{X,\gA,q})=\sigma_{\ess}(\rH_q)$.
\end{corollary}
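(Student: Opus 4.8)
The goal is to deduce this corollary from Theorem~\ref{th:birman} by showing that the hypothesis $\lim_{k\to\infty}\gA_k/\gd_k=0$ forces the summability condition $\lim_{x\to\infty}\sum_{x_k\in[x,x+1]}|\gA_k|=0$ appearing in that theorem. The plan is therefore to verify the single implication
\[
\lim_{k\to\infty}\frac{\gA_k}{\gd_k}=0 \quad\Longrightarrow\quad \lim_{x\to\infty}\sum_{x_k\in[x,x+1]}|\gA_k|=0,
\]
after which Theorem~\ref{th:birman} yields $\sigma_{\ess}(\rH_{X,\gA,q})=\sigma_{\ess}(\rH_q)$ verbatim, the negative-part condition \eqref{eq:brinck} on $q$ being assumed in both statements.

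First I would fix $\varepsilon>0$ and use the hypothesis to choose $N$ so that $|\gA_k|\le \varepsilon\,\gd_k$ for all $k\ge N$, equivalently $|\gA_k|\le\varepsilon\,(x_k-x_{k-1})$. Then for $x$ large enough that every $x_k\in[x,x+1]$ has index $\ge N$, I would estimate
\[
\sum_{x_k\in[x,x+1]}|\gA_k|\le \varepsilon\sum_{x_k\in[x,x+1]}(x_k-x_{k-1})\le \varepsilon\sum_{x_{k-1},x_k\in[x-1,x+1]}(x_k-x_{k-1}).
\]
The key observation is that the lengths $x_k-x_{k-1}$ of the consecutive subintervals whose right endpoint lies in $[x,x+1]$ are disjoint and all contained in $[x-1,x+1]$, so their sum telescopes and is bounded by the length $2$ of that enclosing interval. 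Hence the right-hand side is at most $2\varepsilon$, and letting $x\to\infty$ gives $\limsup_{x\to\infty}\sum_{x_k\in[x,x+1]}|\gA_k|\le 2\varepsilon$; since $\varepsilon$ was arbitrary the limit is $0$.

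Having established the summability condition, I would invoke Theorem~\ref{th:birman} directly: its hypotheses are exactly \eqref{eq:brinck} for $q_-$ (assumed here) together with $\lim_{x\to\infty}\sum_{x_k\in[x,x+1]}|\gA_k|=0$ (just proved), and its conclusion is $\sigma_{\ess}(\rH_{X,\gA,q})=\sigma_{\ess}(\rH_q)$, which is precisely the claim. I expect the only real subtlety to be the bookkeeping in the telescoping step, namely making sure that each gap $x_k-x_{k-1}$ counted is genuinely disjoint from the others and that the enclosing interval is $[x-1,x+1]$ rather than $[x,x+1]$ (because $x_{k-1}$ may dip below $x$). This is a routine but not entirely automatic point, and it is where I would be most careful; everything else is a direct application of the previously stated theorem.
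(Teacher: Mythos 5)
Your overall strategy---deducing the corollary from Theorem \ref{th:birman} by showing that $\gA_k/\gd_k\to 0$ forces $\sum_{x_k\in[x,x+1]}|\gA_k|\to 0$---is exactly how the paper reads the statement (it gives no separate proof; the corollary is presented as an immediate consequence of Theorem \ref{th:birman}). However, the step you yourself flag as delicate is where the argument breaks. The indices $k$ with $x_k\in[x,x+1]$ form a consecutive block $k_0\le k\le k_1$, and telescoping gives
\[
\sum_{x_k\in[x,x+1]}(x_k-x_{k-1}) \;=\; x_{k_1}-x_{k_0-1}.
\]
You claim the leftmost interval $[x_{k_0-1},x_{k_0}]$ is contained in $[x-1,x+1]$, but all one knows is $x_{k_0-1}\ge x-\gd_{k_0}$, so the correct bound is $1+\gd_{k_0}$, not $2$. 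When the gaps $\gd_k$ are unbounded this is fatal, and not merely for your write-up: the implication you are trying to prove is false. Take $\gd_k=k$ (so $x_k=k(k+1)/2$) and $\gA_k=\sqrt{k}$; then $\gA_k/\gd_k=1/\sqrt{k}\to 0$, yet for $x=x_k-\tfrac12$ the window $[x,x+1]$ contains only the point $x_k$, so $\sum_{x_j\in[x,x+1]}|\gA_j|=\sqrt{k}$, which is unbounded. Thus the hypothesis of Theorem \ref{th:birman} simply does not follow from $\gA_k/\gd_k\to 0$ alone (in this example the conclusion of the corollary happens to hold for other reasons, since $\gA_k>0$ makes the operator nonnegative, but it cannot be obtained along your route).

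The argument becomes correct, and yields the corollary, once one adds the hypothesis $\gd^*:=\sup_k\gd_k<\infty$, which is the standing assumption in the neighboring results of this section (Theorem \ref{th:km_sa}, Corollary \ref{th:km_discrA}, Theorem \ref{th:ac}) and is presumably to be understood here as well: given $\varepsilon>0$ choose $N$ with $|\gA_k|\le\varepsilon\gd_k$ for $k\ge N$; then for $x>x_N$ every index in the window exceeds $N$, and the telescoping bound gives $\sum_{x_k\in[x,x+1]}|\gA_k|\le\varepsilon\bigl(1+\gd^*\bigr)$, so the limit is $0$ and Theorem \ref{th:birman} applies. You should therefore either state this extra assumption explicitly or treat the straddling interval $[x_{k_0-1},x_{k_0}]$ separately; as written, the bound by $2\varepsilon$ is unjustified and cannot be repaired in full generality.
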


\begin{remark}
In the case $\gd_*>0$, the condition $\gA_k\to 0$ as $k\to \infty$ is sufficient for the equality $\sigma_{\ess}(\rH_{X,\gA,q})=\sigma_{\ess}(\rH_q)$ to hold. However, if $\gd_*=0$, then this conclusion is no longer true. It might even happen that $\sigma(\rH_{X,\gA})$ is purely discrete. For example, it suffices to set $x_k=\sqrt{k}$ and $\gA_k=\frac{1}{k^{\varepsilon}}$ with $\varepsilon\in (0,\frac{1}{2})$, $k\in\N$ (cf. condition \eqref{eq:2.24D} and also \cite[Example 5.19]{KosMal09}).
\end{remark}

{\bf Absolutely continuous and singular spectra.} Theorem \ref{th:birman} can be specified under additional assumptions on $\gA$ and $X$.

\begin{theorem}[\cite{ShuSto94, Mih94, KosMal09}]\label{th:ac}
Assume that $\gd^*<\infty$ and $q\in L^\infty(\R_+)$.

\begin{itemize}
\item[(i)] Then $\sigma_{\ac}(\rH_{X,\gA,q})=\sigma_{\ac}(\rH_{q})$ provided that
  \begin{equation}\label{2.28A}
\sum_{k=1}^\infty\frac{|\gA_k|}{\gd_{k+1}}<\infty.
  \end{equation}
If in addition  $q\in L^1(\R_+)$, then
$\sigma_{\ac}(\rH_{X,\gA,q}) = [0,+\infty)$.

\item[(ii)] If  $q\equiv 0$,  $\gd_*>0$ and \eqref{2.28A} is satisfied, then $\sigma(\rH_{X,\gA})$
is purely absolutely continuous in $ (0,+\infty)$.
\end{itemize}
      \end{theorem}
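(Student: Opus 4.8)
The plan is to prove (i) by a trace-class perturbation argument in the language of boundary triplets and then to upgrade it to the purely absolutely continuous statement (ii) by a solution/subordinacy analysis. Since $\gd^*<\infty$ and $q\in L^\infty(\R_+)$, the construction behind Theorem~\ref{th:km_sa} provides a genuine boundary triplet $\Pi=\{\cH,\Gamma_0,\Gamma_1\}$ for $\rH_{X,q}^*$ with a \emph{bounded} $\gamma$-field $\gamma(z)$ and a bounded operator-valued Herglotz Weyl function $M(\cdot)$, and both $\rH_{X,\gA,q}=A_{B_{X,\gA}}$ and $\rH_q=\rH_{X,\mathbf 0,q}=A_{B_{X,\mathbf 0}}$ are realized as extensions $A_B=A^*\!\upharpoonright\ker(\Gamma_1-B\Gamma_0)$ of the \emph{same} minimal operator, with the Jacobi matrices $B_{X,\gA}$ and $B_{X,\mathbf 0}$ given by \eqref{eq:B_xa}. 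The key observation is that $B_{X,\gA}-B_{X,\mathbf 0}$ is the diagonal operator with entries $r_k^{-2}\gA_k=\gA_k/(\gd_k+\gd_{k+1})$.

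First I would record the trace-class bound. Since the difference is diagonal, its trace norm equals
\[
\sum_{k=1}^\infty\frac{|\gA_k|}{\gd_k+\gd_{k+1}}\le\sum_{k=1}^\infty\frac{|\gA_k|}{\gd_{k+1}}<\infty
\]
by \eqref{2.28A}, so $B_{X,\gA}-B_{X,\mathbf 0}\in\gS_1(l^2)$. Krein's resolvent formula in the triplet $\Pi$ then gives, for $z\in\C\setminus\R$,
\[
(\rH_{X,\gA,q}-z)^{-1}-(\rH_q-z)^{-1}=\gamma(z)\big[(B_{X,\gA}-M(z))^{-1}-(B_{X,\mathbf 0}-M(z))^{-1}\big]\gamma(\bar z)^*,
\]
and the second resolvent identity rewrites the bracket as $(B_{X,\gA}-M(z))^{-1}(B_{X,\mathbf 0}-B_{X,\gA})(B_{X,\mathbf 0}-M(z))^{-1}$. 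As $\gamma(z)$, $(B_{X,\gA}-M(z))^{-1}$ and $(B_{X,\mathbf 0}-M(z))^{-1}$ are bounded and $\gS_1$ is a two-sided ideal, the resolvent difference is trace class. By the Kato--Rosenblum theorem the absolutely continuous parts of $\rH_{X,\gA,q}$ and $\rH_q$ are unitarily equivalent, whence $\sigma_{\ac}(\rH_{X,\gA,q})=\sigma_{\ac}(\rH_q)$. If moreover $q\in L^1(\R_+)$, then $(\rH_q-z)^{-1}-(\rH_0-z)^{-1}$ (where $\rH_0:=-\rD^2/\rD x^2$ is the free Dirichlet operator) is again trace class: the free resolvent kernel is bounded, so $|q|^{1/2}(\rH_0-z)^{-1}|q|^{1/2}$ is Hilbert--Schmidt with $\gS_2$-norm controlled by $C\|q\|_{L^1(\R_+)}^2$, and one factorizes the resolvent difference through two Hilbert--Schmidt operators; hence $\sigma_{\ac}(\rH_q)=\sigma_{\ac}(\rH_0)=[0,\infty)$ and therefore $\sigma_{\ac}(\rH_{X,\gA,q})=[0,\infty)$.

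For (ii) the trace-class argument alone is not enough, since Kato--Rosenblum controls only the a.c.\ part and says nothing about embedded eigenvalues or singular continuous spectrum. With $q\equiv0$ and $\gd_*\le\gd_{k+1}\le\gd^*$, condition \eqref{2.28A} is equivalent to the short-range condition $\sum_k|\gA_k|<\infty$, and I would exclude the singular spectrum in $(0,\infty)$ by the Gilbert--Pearson subordinacy criterion (in the form valid for Sturm--Liouville operators with $\delta$-interactions). Fix $\lambda>0$. On each interval $(x_{k-1},x_k)$ the solutions of $\ell_{X,\gA}y=\lambda y$ are plane waves, so $(y,y')$ propagates by a transfer matrix $T_k(\lambda)=J_k\Phi_k(\lambda)$, where $\Phi_k\in SL(2,\R)$ is the free propagator over the length $\gd_k$ (of norm bounded uniformly in $k$ because $\gd_k\le\gd^*$) and $J_k=\left(\begin{smallmatrix}1&0\\\gA_k&1\end{smallmatrix}\right)$ encodes the jump at $x_k$, so that $\|T_k(\lambda)-\Phi_k(\lambda)\|\le C(\lambda)|\gA_k|$. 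Since the free products of the $\Phi_k$ are bounded and $\sum_k\|T_k-\Phi_k\|\le C(\lambda)\sum_k|\gA_k|<\infty$, the products $\prod_k T_k(\lambda)$ stay bounded; hence every solution is bounded and asymptotically bounded away from zero, so no solution is subordinate. Consequently $\sigma_{\rm p}(\rH_{X,\gA})\cap(0,\infty)=\sigma_{\rm sc}(\rH_{X,\gA})\cap(0,\infty)=\emptyset$, and together with $\sigma_{\ac}(\rH_{X,\gA})=[0,\infty)$ from (i) this yields that the spectrum is purely absolutely continuous in $(0,\infty)$.

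The main obstacle is not the algebra of the Krein formula but the two analytic inputs it rests on. For (i) one must justify that the direct-sum construction genuinely produces a boundary triplet with a \emph{bounded} $\gamma$-field and a bounded Weyl function --- this is exactly where the hypotheses $\gd^*<\infty$ and $q\in L^\infty(\R_+)$ are indispensable (cf.\ the discussion preceding Theorem~\ref{th:km_sa} that a direct sum of boundary triplets need not be a boundary triplet when $\gd_*=0$ or $q\notin L^\infty$). For (ii) the delicate point is the rigorous transfer-matrix/subordinacy machinery for operators with point interactions: one needs the boundedness of the free propagators to be uniform in $k$ (guaranteed by $\gd^*<\infty$) and the lower bound $\gd_*>0$ to keep the transfer matrices from degenerating, and one must verify that the classical Gilbert--Pearson criterion applies verbatim to solutions that are only piecewise smooth with prescribed jumps at the points $x_k$.
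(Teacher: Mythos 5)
Your proposal is correct and takes essentially the same route as the paper: part (i) is precisely the content of \cite[Corollary 5.15]{KosMal09} (the trace-class bound on the resolvent difference, obtained through the boundary-triplet/Krein-formula machinery from the fact that $B_{X,\gA}-B_{X,\mathbf{0}}$ is a diagonal trace-class perturbation under \eqref{2.28A}) combined with the Kato--Rosenblum theorem, while part (ii) reproduces the bounded-transfer-matrix/absence-of-subordinate-solutions argument by which this statement was established in \cite{ShuSto94}. The only detail worth making explicit is the self-adjointness of $\rH_{X,\gA,q}$ before invoking Krein's resolvent formula; it follows from your own trace-class bound, since $B_{X,\gA}$ is then a bounded self-adjoint perturbation of the self-adjoint Jacobi matrix $B_{X,\mathbf{0}}$, so Theorem \ref{th:km_sa}(i) applies.
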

\begin{remark}
The first statement of Theorem \ref{th:ac} is immediate by combining \cite[Corollary 5.15]{KosMal09} with the Kato--Rozenblum theorem \cite{Kato66}. Under an additional assumption $\gd_*>0$ this statement was proved in \cite{Mih94}. The second part of Theorem \ref{th:ac} was established in \cite{ShuSto94}.
\end{remark}

Let us also present one result on the absence of absolutely continuous spectrum.

\begin{theorem}[\cite{ShuSto94, Mih96}]\label{th:ac=0}
Let $X$ be such that $\gd_*>0$. Then $\sigma_{\ac}(\rH_{X,\gA,q})=\emptyset$ if at least one of the following conditions is satisfied:
\begin{itemize}
\item[(i)] $q$ is bounded from below, $\gA_k\ge 0$ for all $k\in\N$ and
\[
\limsup_{k\to \infty}\gA_k=+\infty,
\]
\item[(ii)]  $q\in L^\infty(\R_+)$ and
\[
\limsup_{k\to \infty}|\gA_k|=+\infty.
\]
\end{itemize}
\end{theorem}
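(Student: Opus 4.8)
The plan is to run the argument through Gilbert--Pearson subordinacy theory. In both cases the operator $\rH_{X,\gA,q}$ is self-adjoint (case (i) by Corollary \ref{cor:2.2}; case (ii) because $q\in L^\infty$ is in particular bounded from below, while the signs of $\gA_k$ play no role for self-adjointness in Corollary \ref{cor:2.2}). Hence it suffices to show that the equation $-u''+qu=Eu$ subject to the interface conditions $u(x_k+)=u(x_k-)$, $u'(x_k+)-u'(x_k-)=\gA_ku(x_k)$ admits a subordinate solution for Lebesgue-a.e.\ $E\in\R$: by the Gilbert--Pearson theorem this forces $\mu_{\ac}=0$, i.e.\ $\sigma_{\ac}(\rH_{X,\gA,q})=\emptyset$. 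I would detect such solutions through the growth of the transfer matrices.

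First I would pass to modified Pr\"ufer variables at energy $E=k^2>0$: on each free interval $(x_{n-1},x_n)$ write $u=R\sin\theta$, $u'=kR\cos\theta$ and let $R,\theta$ evolve under $-u''+qu=Eu$, while across each point $x_n$ the interface condition induces the explicit jump governed by the transfer matrix $J_n=\bigl(\begin{smallmatrix}1&0\\ \gA_n&1\end{smallmatrix}\bigr)$. A direct computation gives the multiplicative change of the Pr\"ufer radius at a barrier, $R_{n+}^2=R_{n-}^2\bigl(1+(\gA_n/k)\sin2\theta_{n-}+(\gA_n/k)^2\sin^2\theta_{n-}\bigr)$. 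The role of the hypothesis $\gd_*>0$ is that on each free interval the phase $\theta$ genuinely advances, so the phases seen by successive barriers are not frozen; and since $\limsup_k|\gA_k|=\infty$ there is a subsequence $k_j$ with $|\gA_{k_j}|\to\infty$. Whenever $\sin^2\theta_{k_j-}$ is bounded away from zero the radius is amplified by a factor of order $\gA_{k_j}^2\to\infty$. Comparing two linearly independent solutions, this repeated amplification drives the ratio of their Pr\"ufer radii to $0$ along $k_j$, which is precisely subordinacy.

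The hard part is that at a fixed energy the phases $\theta_{k_j-}$ could conceivably cluster in the bad set $\{\sin\theta\approx0\}$ for all large $j$, so that no amplification occurs; making this impossible is the main obstacle. The remedy is spectral averaging. One regards $\theta_{k_j-}=\theta_{k_j-}(E)$ as a smooth function of $E$ and uses that, because $\gd_*>0$, the total accumulated phase up to $x_{k_j}$ is of order $k\,x_{k_j}\to\infty$, whence $\partial_E\theta_{k_j-}(E)\to\infty$ and the map $E\mapsto\theta_{k_j-}(E)$ winds more and more. Consequently the set of $E$ with $|\sin\theta_{k_j-}(E)|<\eta$ has small measure, and a Borel--Cantelli estimate over $E$ (together with a Fubini argument in the boundary parameter at $0$) shows that for a.e.\ $E$ the favourable event $|\sin\theta_{k_j-}(E)|\ge\eta$ occurs for infinitely many $j$. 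For such $E$ one solution is amplified infinitely often while a competing solution stays comparatively small, producing the required subordinate solution. This averaging step, controlling the barrier phases for almost every energy, is the technical heart; it is exactly what breaks down when $\gd_*=0$ (cf.\ the phenomena behind Example \ref{ex:2.2} and Remark \ref{rem:2.1}), which is why the hypothesis cannot be dropped.

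Finally I would settle the two cases and record an alternative. In case (ii) the assumption $q\in L^\infty$ guarantees that for $E>\|q\|_\infty$ the free flow on each interval is a nondegenerate rotation, so the averaging applies on $(\|q\|_\infty,\infty)$; the remaining energies are non-oscillatory, where a subordinate (minimal-growth) solution exists automatically, so they carry no a.c.\ spectrum, and since only $\gA_k^2$ enters the amplification the signs of $\gA_k$ are irrelevant. In case (i) repulsive barriers $\gA_k\ge0$ together with $q$ bounded from below keep $\rH_{X,\gA,q}$ bounded from below, confining any a.c.\ spectrum to a half-line on which the same amplification runs, so boundedness of $q$ from above is not needed. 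As an alternative route, when in addition $\gd^*<\infty$ one may invoke Theorem \ref{th:km_sa}: the Jacobi matrix $B_{X,\gA}$ then has off-diagonal entries bounded above and below while $\limsup_k|b_k|=\infty$, and the absence of absolutely continuous spectrum for Jacobi matrices with bounded off-diagonal and unbounded diagonal (a Simon--Spencer-type barrier argument, the discrete analogue of the above) transfers back to $\rH_{X,\gA,q}$.
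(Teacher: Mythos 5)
Your proposal takes a genuinely different route from the paper, and as written it has real gaps precisely where the theorem's hypotheses are weakest. The paper does not argue via subordinacy at all: as remarked right after the statement, parts (i) and (ii) were proved in \cite{ShuSto94} and \cite{Mih96} by a trace-class decoupling technique in the spirit of Simon--Spencer \cite{SimSpe89}. One picks a subsequence with $|\gA_{k_j}|\ge 2^j$ and replaces the $\delta$-interaction at each $x_{k_j}$ by a Dirichlet decoupling; each replacement changes the resolvent by a rank-one operator of norm $O(1/|\gA_{k_j}|)$, so the total resolvent difference is trace class, and the Kato--Rosenblum theorem gives $\sigma_{\ac}(\rH_{X,\gA,q})=\sigma_{\ac}(\rH^D)$, where $\rH^D$ is the orthogonal sum of operators on the finite intervals $(x_{k_j},x_{k_{j+1}})$. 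Since $\gd_*>0$, each such interval contains only finitely many interactions, so every summand has purely discrete spectrum and $\sigma_{\ac}(\rH^D)=\emptyset$. This argument is indifferent to the signs of $\gA_k$, to $q$ being unbounded above (allowed in case (i)), and, crucially, to the completely uncontrolled interactions lying \emph{between} the chosen barriers.

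Your subordinacy argument is not indifferent to these, and that is where it breaks. First, the spectral-averaging step rests on the claim that $\partial_E\theta_{k_j-}(E)\to\infty$ because the accumulated phase is of order $k\,x_{k_j}$. But the energy derivative of a Pr\"ufer angle equals (up to $k$-dependent constants) $R(x)^{-2}\int_0^x R(t)^2\sin^2\theta(t)\,dt$, which is of order $x_{k_j}$ only when the evolution up to $x_{k_j}$ is essentially free. The hypotheses allow \emph{every} $|\gA_k|$ to be large (e.g.\ $\gA_k=(-1)^k k$ in case (ii)) and allow $q$ to be unbounded above (case (i)); then transfer matrices can grow so fast that this derivative stays bounded, and a large-norm $SL(2,\R)$ transfer matrix moreover concentrates phases (also when averaged over the boundary parameter at $0$) into a short arc that may lie inside your bad set $\{\sin\theta\approx 0\}$. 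So neither the winding-in-$E$ nor the Fubini-in-boundary-condition claim is justified. Second, the passage from amplification to subordinacy is a non sequitur: if two independent solutions both have ``good'' phases at a barrier, both are amplified and the ratio of their Pr\"ufer radii does not decay; the subordinate candidate is the exceptional solution whose phase lies in the thin non-amplified cone at \emph{every} strong barrier, and its existence and its smallness in the $L^2([0,L])$-sense required by Gilbert--Pearson (including over the long, uncontrolled stretches between barriers) need a separate argument you do not supply. Third, your fallback via Theorem \ref{th:km_sa} assumes $\gd^*<\infty$, which is not a hypothesis of the theorem, and that theorem as stated transfers deficiency indices, semiboundedness, discreteness and negative parts, but not absence of absolutely continuous spectrum. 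Ironically, that fallback is the discrete shadow of the paper's actual proof: running the Simon--Spencer decoupling directly on $\rH_{X,\gA,q}$ removes all of these difficulties at once.
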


The first and the second parts of Theorem \ref{th:ac=0} were established in \cite{ShuSto94} and \cite{Mih96}, respectively, by using a trace class technique similar to \cite{SimSpe89}.
   \begin{remark}[\cite{Mih96}]
Let $\omega$ be a Gaussian measure on the set of all real sequences. Then the subset of sequences
which are semibounded (below or above) has a zero measure (see \cite[\S 3.5]{ShiTyn67}). Therefore,  Theorem \ref{th:ac=0}
implies that for any fixed $X$ with $\gd_*>0$ the set of Hamiltonians $\rH_{X,\gA,q}$ having nonempty
absolutely continuous spectrum is of measure zero.
   \end{remark}

The following result was obtained by Lotoreichik \cite{Lot11}.

\begin{theorem}[\cite{Lot11}]\label{th:lot}
Assume that the set $X$ is sparse, that is 
\be
\lim_{k\to\infty}\frac{\gd_k}{\gd_{k-1}}=\infty.
\ee
Assume also that the intensities $\{\gA_k\}_{1}^\infty$ are such that $\gA_k\to \infty$ and
\be
\liminf_{k\to \infty}\frac{\gd_k}{\gd_{k-1}\gA_k^2}=:a\in (0,\infty)\cup\{\infty\}.
\ee
If $a\in(0,\infty)$, then:
\begin{itemize}
\item[(i)] $\sigma_{\ac}(\rH_{X,\gA})=\emptyset$,
\item[(ii)] $\sigma_{\rm pp}(\rH_{X,\gA})\subseteq [0,a^{-1}]$,
\item[(iii)] $[a^{-1},\infty]\subseteq \sigma_{\rm sc}(\rH_{X,\gA})\subseteq[0,+\infty)$.
\end{itemize}
If $a=\infty$ and all $\gA_k>0$, then $\sigma(\rH_{X,\gA})=\sigma_{\rm sc}(\rH_{X,\gA})=[0, +\infty)$.
\end{theorem}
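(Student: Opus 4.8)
The plan is to reduce the problem to a direct transfer-matrix analysis of the eigenvalue equation $\ell_{X,\gA}u=Eu$ on $\R_+$ and to read off the spectral type from the growth of solutions via Gilbert--Pearson subordinacy theory. First I would record the structural facts that do not depend on $a$. Since the set is sparse, the gaps $\gd_k$ tend to $+\infty$, so $\gd_*=\inf_k\gd_k>0$ and $\sum_k\gd_k^2=\infty$; hence $\rH_{X,\gA}$ is self-adjoint by Corollary \ref{cor:2.6}. Placing normalized bumps supported in the interiors of the large gaps $(x_{k-1},x_k)$, where the operator coincides with the free Hamiltonian, produces Weyl sequences for every $E\ge0$, so $\sigma_{\ess}(\rH_{X,\gA})=[0,+\infty)$ (this also follows from Theorem \ref{th:birman}). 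Since $\gA_k\to+\infty$, only finitely many $\gA_k$ are negative and the operator is bounded below, so the analysis of the continuous part takes place on $(0,+\infty)$.

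On each interval $(x_{k-1},x_k)$ a solution of $\ell_{X,\gA}u=Eu$ with $E>0$ is a free wave of wavenumber $\sqrt E$, and at $x_k$ the interface conditions \eqref{eq:h_a2} act as the jump matrix $J_k=\left(\begin{smallmatrix}1&0\\\gA_k&1\end{smallmatrix}\right)$. Writing $T_k(E)$ for the one-step transfer matrix (free rotation through angle $\sqrt E\,\gd_k$ followed by $J_k$) and $\Phi_n(E)=T_n(E)\cdots T_1(E)$, I would pass to Pr\"ufer variables $(r_k,\theta_k)$. The key input is that, because $\gd_k/\gd_{k-1}\to\infty$, the rotation angles $\sqrt E\,\gd_k\bmod\pi$ equidistribute for Lebesgue-a.e.\ $E>0$; combined with $\gA_k\to\infty$ this forces each $T_k(E)$ to have one expanding direction of size $\asymp\gA_k/\sqrt E$ and one contracting (subordinate) direction of size $\asymp\sqrt E/\gA_k$, so that $\|\Phi_n(E)\|\to\infty$ for a.e.\ $E>0$. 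By the Gilbert--Pearson/Last--Simon characterization of the a.c.\ spectrum as the essential support of the set of energies with bounded transfer matrices, this yields $\sigma_{\ac}(\rH_{X,\gA})=\emptyset$, i.e.\ part (i).

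The dichotomy between pure point and singular continuous spectrum I would extract from the subordinate solution $u_{\mathrm{sub}}$, whose Pr\"ufer amplitude satisfies $r_k^2\asymp\prod_{i<k}(E/\gA_i^2)$. Its $L^2(\R_+)$-norm is, up to constants, the series $\sum_k\gd_k\,r_k^2$, whose consecutive ratio is comparable to $E\,\gd_k/(\gd_{k-1}\gA_k^2)$ (up to an index shift). By the very definition of $a=\liminf_k\gd_k/(\gd_{k-1}\gA_k^2)$, for every $E>a^{-1}$ this ratio exceeds $1$ for all large $k$, so the series diverges and $u_{\mathrm{sub}}\notin L^2(\R_+)$. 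Hence no $E>a^{-1}$ is an eigenvalue, which gives $\sigma_{\mathrm{pp}}\subseteq[0,a^{-1}]$, part (ii); and since such an $E$ lies in $\sigma_{\ess}=[0,+\infty)$ while being excluded from $\sigma_{\ac}$ by (i) and from $\sigma_{\mathrm{pp}}$, it must belong to $\sigma_{\mathrm{sc}}$. Together with $\sigma_{\mathrm{sc}}\subseteq\sigma\subseteq[0,+\infty)$ this is exactly part (iii). When $a=\infty$ the threshold $a^{-1}$ is $0$, so the divergence argument applies to every $E>0$; with all $\gA_k>0$ the operator is nonnegative, there are no eigenvalues and no a.c.\ spectrum, and $\sigma(\rH_{X,\gA})=\sigma_{\mathrm{sc}}(\rH_{X,\gA})=[0,+\infty)$.

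I expect the main obstacle to be the quantitative transfer-matrix estimate underlying the second paragraph: proving, for Lebesgue-a.e.\ $E>0$, that the rotation phases equidistribute and that the per-barrier expansion/contraction factors are genuinely of order $\gA_k/\sqrt E$ and $\sqrt E/\gA_k$, uniformly enough to control the infinite products $\Phi_n(E)$ and to pin the $L^2$ threshold to exactly $a^{-1}$. The delicate points are the exceptional energies at which a rotation phase falls near a zero of the relevant trigonometric factor, so that a barrier is nearly transparent, and the precise index bookkeeping pairing each $\gA_k$ with the ratio $\gd_k/\gd_{k-1}$; both must be handled to turn the heuristic growth rates into the sharp inclusions of the theorem.
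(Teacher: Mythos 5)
The paper you were given contains \emph{no} proof of Theorem \ref{th:lot}: this is a survey, and the result is quoted from \cite{Lot11}, so there is no in-paper argument to compare yours against; I can only judge the proposal on its own merits. Your overall strategy --- one-step transfer matrices, Pr\"ufer variables, Gilbert--Pearson/Last--Simon subordinacy, and a ratio test for the $L^2$-property of solutions --- is the standard route for sparse barriers and is essentially that of \cite{Lot11}, and your structural preliminaries (self-adjointness via Corollary \ref{cor:2.6}, lower semiboundedness, Weyl sequences supported in the long gaps) are sound. But the proposal is a program rather than a proof, and two of its load-bearing steps have genuine gaps. The first concerns part (i): the claim that equidistribution of the angles $\sqrt E\,\gd_k \bmod \pi$ plus the fact that each $T_k(E)$ has singular values $\asymp \gA_k/\sqrt E$ and $\asymp \sqrt E/\gA_k$ forces $\|\Phi_n(E)\|\to\infty$ for a.e.\ $E$ is a non sequitur: a product of $SL(2,\R)$ matrices with arbitrarily large norms can remain bounded when expanding and contracting directions interlace, and equidistribution of the individual angles does not exclude this, because the Pr\"ufer angle entering the $k$-th barrier depends on the whole preceding dynamics. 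What is needed is the quantitative Kiselev--Last--Simon-type estimate (quasi-independence of the angles as functions of $E$, plus Borel--Cantelli) that for a.e.\ $E$ the Pr\"ufer radius of every solution grows like $\prod_k \gA_k/(2\sqrt E)$ up to controlled errors; you name this as ``the main obstacle'' but never carry it out, and it is precisely the technical content of \cite{Lot11}. Moreover your appeal to Theorem \ref{th:birman} is illegitimate, since its hypothesis $\sum_{x_k\in[x,x+1]}|\gA_k|\to 0$ fails here ($\gA_k\to\infty$); and you overlooked that part (i) needs none of this machinery: sparseness gives $\gd_k\to\infty$, hence $\gd_*>0$, and $\gA_k\to\infty$ gives $\limsup_k|\gA_k|=\infty$, so Theorem \ref{th:ac=0}(ii), stated just above in the paper, yields $\sigma_{\ac}(\rH_{X,\gA})=\emptyset$ immediately.

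The second gap is that the ``index bookkeeping'' you defer is not bookkeeping; it changes the theorem. Carried out cleanly, your own computation gives: a solution that has crossed the barriers at $x_1,\dots,x_k$ lives on $(x_k,x_{k+1})$, of length $\gd_{k+1}$, and each crossing can shrink its amplitude at most by the least singular value $\approx \sqrt E/\gA_j$ of the one-step matrix; hence \emph{every} solution satisfies
\[
\|u\|_{L^2}^2\;\gtrsim\;\sum_{k}\gd_{k+1}\prod_{j\le k}\frac{E}{\gA_j^2}\,\bigl(1-o(1)\bigr),
\]
whose consecutive terms have ratio $\frac{\gd_{k+1}}{\gd_k}\cdot\frac{E}{\gA_k^2}$. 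So the threshold your method produces is $1/a'$ with $a':=\liminf_k \gd_{k+1}/(\gd_k\gA_k^2)$, pairing $\gA_k$ with (gap after $x_k$)/(gap before $x_k$), whereas the statement pairs $\gA_k$ with $\gd_k/\gd_{k-1}$, the two gaps adjacent to the \emph{previous} barrier. These liminfs genuinely differ: taking $\gA_k^2:=\gd_k/\gd_{k-1}$ gives $a=1$, while $a'=\liminf_k(\gd_{k+1}/\gd_k)\big/(\gd_k/\gd_{k-1})$ can be made $0$ or $+\infty$ by choosing how fast $\gd_k/\gd_{k-1}$ grows, so the two versions of (ii)--(iii) assert different things. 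A completed version of your argument proves the theorem with $a'$ in place of $a$; nothing in your sketch bridges this discrepancy (which in fact traces to a clash between this review's convention $\gd_k=x_k-x_{k-1}$ and the gap labelling of \cite{Lot11}), and it must be confronted, not filed under bookkeeping. Relatedly, for (ii) you should use the deterministic least-singular-value bound above, valid for every solution and every $E>a'^{-1}$: the two-sided claim $r_k^2\asymp\prod_{i<k}E/\gA_i^2$ for the subordinate solution is a typical-case, a.e.-$E$ statement that your sketch does not justify and that in any case cannot exclude eigenvalues at \emph{every} energy above the threshold.
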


%
%

{\em Further notes:} In \cite{IsmKos10}, Ismagilov and Kostyuchenko constructed a class of operators $\rH_{X,\gA}$ with purely point spectra having precisely two accumulation points  $0$ and  $+\infty$. Note that Hamiltonians with $\delta$-interactions form a good source of examples with exotic spectral properties. For example, Pearson in \cite[\S 14.6]{Pear89} used Schr\"odinger operators with $\delta$-interactions for constructing Hamiltonians with purely singular continuous spectrum.
Let us also mention papers \cite{GorHolMol07} and \cite{GorMolTsa91} for further examples of Schr\"odinger operators  having exotic spectra.

\subsection{Negative spectrum} During the last decade the problem on the number of negative eigenvalues for  Schr\"odinger operators with $\delta$-interactions attracted some attention. It is easy to observe from \eqref{eq:2.12}--\eqref{eq:2.14} that $\kappa_-(\rH_{X,\gA})\le \kappa_-(\gA)$, where $\kappa_-(\gA)$ is the number of negative entries in the sequence $\gA$. However, the converse inequality is, in general, not true.

Albeverio and Nizhnik discovered  in \cite{AlbNiz03} the connection between this problem and certain continued fractions. The latter enabled them to construct the algorithm for computing the number of negative eigenvalues.  Assuming that $\gd_*>0$ and using the boundary triplets approach, their construction has been extended in \cite{GolOri10} to the case of infinitely many $\delta$-interactions (cf. Theorem \ref{th:km_sa}(iv)). Note that a different matrix is used in \cite{AlbKosMalNei12}  for the analysis of $\kappa_-(\rH_{X,\gA})$. For simplicity we restrict our considerations to the case of finitely many point interactions.


%
 \begin{proposition}[\cite{AlbKosMalNei12}]\label{prop:akmn}
 If $X=\{x_k\}_{k=1}^N$ and $\gA=\{\gA_k\}_{k=1}^N$, where $N\in\N$, then
\be\label{eq:2.30}
\kappa_-(\rH_{X,\gA})=\kappa_+(M_{X,\gA}) - \kappa_+(\gA),
\ee
where
\be
M_{X,\gA}=
\left(\begin{array}{ccccc} \frac{1}{\gA_1}+x_1 & x_1 & x_1 & \dots  & x_1\\
x_1 & \frac{1}{\gA_2}+x_2 & x_2 & \dots & x_2\\
x_1 & x_2 & \frac{1}{\gA_3}+x_3 & \dots &x_3\\
 \dots& \dots& \dots& \dots & \dots \\
x_1 & x_2 & x_3 & \dots & \frac{1}{\gA_N}+x_N\\
\end{array}\right).
\ee
\end{proposition}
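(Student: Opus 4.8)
The plan is to count the negative eigenvalues of $\rH_{X,\gA}$ by reducing the bound-state problem to the degeneracy of a $\varkappa$-dependent $N\times N$ matrix and then tracking the inertia of that matrix as the spectral parameter $\varkappa>0$ (with energy $-\varkappa^2$) runs over $(0,\infty)$. Since $X$ is finite, the perturbing form $\sum_{k=1}^N\gA_k|f(x_k)|^2$ is infinitesimally form bounded relative to the Dirichlet form of $H_0:=-\frac{\rD^2}{\rD x^2}$ (point evaluations are bounded functionals on $W^{1,2}(\R_+)$), so by the KLMN theorem $\rH_{X,\gA}=H_0+V$, with $V=\sum_{k=1}^N\gA_k(\cdot,\delta_{x_k})\delta_{x_k}$, is self-adjoint and bounded below; being a finite-rank form perturbation of $H_0\ge0$ it has $\sigma_{\ess}(\rH_{X,\gA})=[0,\infty)$, and hence $\kappa_-(\rH_{X,\gA})<\infty$.

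First I would write the Green's function of $H_0+\varkappa^2$ for $\varkappa>0$,
\[
G_\varkappa(x,y)=\tfrac{1}{\varkappa}\sinh(\varkappa\min(x,y))\,e^{-\varkappa\max(x,y)},
\]
and assemble the symmetric matrix $G_\varkappa^{(N)}:=\big(G_\varkappa(x_i,x_j)\big)_{i,j=1}^N$ together with $D_\gA:=\diag(\gA_1,\dots,\gA_N)$. Using the Krein resolvent formula for the singular rank-$N$ perturbation $V$ (equivalently, by solving $-u''=-\varkappa^2u$ piecewise with $u(0)=0$, $u\in L^2$, and imposing the matching conditions \eqref{eq:h_a2}), one finds that $-\varkappa^2$ is an eigenvalue of $\rH_{X,\gA}$ of multiplicity $m$ if and only if
\[
M_\varkappa:=D_\gA^{-1}+G_\varkappa^{(N)}
\]
is singular with $\dim\ker M_\varkappa=m$. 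Thus $\kappa_-(\rH_{X,\gA})$ is the number of $\varkappa>0$, counted with $\dim\ker M_\varkappa$, at which $M_\varkappa$ degenerates.

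Next I would analyse $M_\varkappa$ at the endpoints and establish monotonicity. As $\varkappa\to0^+$ we have $G_\varkappa(x_i,x_j)\to\min(x_i,x_j)=x_{\min(i,j)}$, so $M_\varkappa\to M_{X,\gA}$; as $\varkappa\to\infty$ we have $G_\varkappa(x_i,x_j)\to0$, so $M_\varkappa\to D_\gA^{-1}$, whose inertia is $(\kappa_+(\gA),\kappa_-(\gA))$. The decisive structural fact is monotonicity: for fixed $\vec c\in\C^N$ the spectral theorem gives $\vec c^{\,*}G_\varkappa^{(N)}\vec c=\int_0^\infty(\lambda+\varkappa^2)^{-1}\,d\mu_{\vec c}(\lambda)$ for a finite positive measure $\mu_{\vec c}$ (finite since $\delta_{x_k}\in W^{-1,2}$, i.e.\ $(H_0+\varkappa^2)^{-1/2}\delta_{x_k}\in L^2$), which is strictly decreasing in $\varkappa$. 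Hence $\varkappa\mapsto G_\varkappa^{(N)}$, and therefore $\varkappa\mapsto M_\varkappa$, is strictly decreasing in the quadratic-form order, so by min--max every eigenvalue $\mu_i(\varkappa)$ of $M_\varkappa$ is a strictly decreasing function of $\varkappa$.

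Finally I would count crossings. Let $n(\varkappa)$ denote the number of negative eigenvalues of $M_\varkappa$. Because the $\mu_i(\varkappa)$ only decrease, $n(\varkappa)$ is non-decreasing and its jump at each $\varkappa>0$ where $M_\varkappa$ degenerates equals $\dim\ker M_\varkappa$, i.e.\ the multiplicity of the eigenvalue $-\varkappa^2$ of $\rH_{X,\gA}$. Summing these jumps and using the endpoint inertias, in the non-degenerate case where $M_{X,\gA}$ is invertible (a zero eigenvalue of $M_{X,\gA}$ corresponds to the threshold $\varkappa=0$ and produces no negative eigenvalue) we obtain
\[
\kappa_-(\rH_{X,\gA})=n(\infty)-n(0^+)=\kappa_-(\gA)-\kappa_-(M_{X,\gA}).
\]
Since $M_{X,\gA}$ and $D_\gA^{-1}$ are invertible $N\times N$ matrices, $\kappa_+$ and $\kappa_-$ of each sum to $N$, and the right-hand side rearranges into \eqref{eq:2.30}. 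I expect the main obstacle to be the rigorous justification of the second step: the perturbation is simultaneously indefinite and singular ($\delta_{x_k}\notin L^2$), so one must invoke the Krein formula for form-bounded ($\cH_{-1}$) rank-$N$ perturbations and verify that the geometric multiplicity of $-\varkappa^2$ matches $\dim\ker M_\varkappa$ exactly; by contrast, the strict form-monotonicity of $G_\varkappa^{(N)}$ and the crossing count are routine once this correspondence is in place.
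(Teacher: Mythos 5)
The paper itself contains no proof of Proposition \ref{prop:akmn}: it is quoted from \cite{AlbKosMalNei12} (the review only uses it to derive Corollary \ref{cor:2.30}), and the surrounding discussion indicates that the cited source works within the boundary-triplet/Weyl-function framework. So your argument can only be compared with that external proof, and it is a genuinely different, self-contained route: a Birman--Schwinger/Krein-formula reduction to the $\varkappa$-dependent matrix $M_\varkappa=D_\gA^{-1}+G_\varkappa^{(N)}$, strict form-monotonicity of $\varkappa\mapsto G_\varkappa^{(N)}$, and a count of zero-crossings of the eigenvalue curves between the endpoint inertias of $M_{X,\gA}$ (at $\varkappa=0^+$, since $G_\varkappa(x_i,x_j)\to\min(x_i,x_j)$) and of $D_\gA^{-1}$ (at $\varkappa=\infty$). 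The core steps check out: the map $u\mapsto(\gA_k u(x_k))_k$ is a bijection between the eigenspace of $\rH_{X,\gA}$ at $-\varkappa^2$ and $\ker M_\varkappa$ (injective because a solution vanishing at every $x_k$ solves the free equation with Dirichlet condition and $L^2$ decay, hence vanishes), and Weyl's inequality turns strict matrix monotonicity into strict decrease of each ordered eigenvalue. Your approach buys explicitness (pure ODE/Green-function analysis, no extension theory); the cited machinery is what permits the generalizations in \cite{AlbKosMalNei12} to Bessel-type operators.

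Three blemishes should be repaired, though none is fatal. First, $\mu_{\vec c}$ is not a \emph{finite} measure (that would force $\delta_{x_k}\in L^2$); it only satisfies $\int_0^\infty(1+\lambda)^{-1}d\mu_{\vec c}(\lambda)<\infty$, which is all the monotonicity argument needs, together with $\mu_{\vec c}\neq0$ for $\vec c\neq0$ (true because $G_\varkappa^{(N)}$ is the Gram matrix of the linearly independent functions $G_\varkappa(\cdot,x_k)$, hence positive definite). Second, $\sigma_{\ess}(\rH_{X,\gA})=[0,\infty)$ does \emph{not} by itself give $\kappa_-<\infty$ (negative eigenvalues could a priori accumulate at $0$); what you need from it is only that the negative spectrum consists of eigenvalues, and finiteness then follows from your own crossing count, or from the variational bound $\kappa_-\le N$ since the form is nonnegative on the codimension-$N$ subspace $\{f:\,f(x_1)=\dots=f(x_N)=0\}$. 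Third, you should not exclude the degenerate case $\det M_{X,\gA}=0$, and in fact it costs nothing: by strict monotonicity every zero eigenvalue of $M_{X,\gA}$ becomes negative for $\varkappa>0$, so $n(0^+)=\#\{i:\,\mu_i(0^+)\le 0\}=N-\kappa_+(M_{X,\gA})$, and hence $\kappa_-(\rH_{X,\gA})=n(\infty)-n(0^+)=\bigl(N-\kappa_+(\gA)\bigr)-\bigl(N-\kappa_+(M_{X,\gA})\bigr)=\kappa_+(M_{X,\gA})-\kappa_+(\gA)$ with no invertibility assumption --- which is presumably why \eqref{eq:2.30} is stated in terms of $\kappa_+$ rather than $\kappa_-$.
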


Next let us present the following extension of the celebrated Bargmann estimate (see, e.g., \cite{ReeSim78}).

  \begin{theorem}[\cite{AlbKosMalNei12}]\label{th:bargmann}
Let $q$, $X$ and $\gA$ be such that the operator $\rH_{X,\gA,q}$ is self-adjoint. If either $\gA^-\neq 0$ or $q_-\neq \mathbf{0}$, then
\be\label{eq:bargmann}
\kappa_-(\rH_{X,\gA,q})<\int_{\R_+} |q_-(x)|\, dx +\sum_{k=1}^\infty |\gA_k^-|x_k.
\ee
 \end{theorem}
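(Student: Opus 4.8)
The plan is to prove the Bargmann‐type bound $\kappa_-(\rH_{X,\gA,q})<\int_{\R_+}|q_-(x)|\,dx+\sum_k|\gA_k^-|x_k$ by a variational (min–max) argument combined with a comparison principle that reduces everything to the \emph{pure point–interaction} case with zeroed‐out positive parts. First I would observe that enlarging the negative part of the potential and of the coupling sequence can only increase the number of negative eigenvalues: if we replace $q$ by $-q_-$ and each $\gA_k$ by $-\gA_k^-$, then by the form inequality $\gt_{X,\gA,q}[f]\ge \gt_{X,-\gA^-,-q_-}[f]$ for all admissible $f$, so by the min–max principle $\kappa_-(\rH_{X,\gA,q})\le\kappa_-(\rH_{X,-\gA^-,-q_-})$. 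Thus it suffices to prove the strict inequality for a purely attractive operator $-\tfrac{d^2}{dx^2}-q_-(x)-\sum_k|\gA_k^-|\delta(x-x_k)$, which lets me assume $q\le 0$ and $\gA_k\le 0$ from the outset and simplifies the bookkeeping considerably.

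\emph{Second}, for the purely attractive operator I would use the classical Bargmann strategy adapted to the half–line Dirichlet problem. The key analytic input is that a negative eigenvalue at energy $-\kappa^2<0$ forces the Dirichlet solution (which equals $\sinh(\kappa x)/\kappa$ near $x=0$) to develop a \emph{node}, or rather that the logarithmic derivative must be driven negative somewhere; each negative eigenvalue corresponds to a sign change in the relevant Prüfer–type phase. Concretely, I would introduce $u$ the solution of $-u''=qu+\sum\gA_k u\,\delta_k$ with $u(0)=0$, $u'(0)=1$, and track the quantity $w(x):=u'(x)/u(x)$ (with the appropriate jump $w(x_k+)-w(x_k-)=\gA_k$ read off from the boundary condition $f'(x_k+)-f'(x_k-)=\gA_k f(x_k)$). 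The number of negative eigenvalues of the self‐adjoint operator is bounded by the number of zeros of $u$ in $(0,\infty)$ (Sturm oscillation), and I would estimate the total downward variation of the phase. Integrating $-u''=qu$ against a comparison with the free solution $x\mapsto x$ and summing the point‐interaction jumps, one extracts that each node ``costs'' at least a unit while the total available attraction is $\int_0^\infty|q_-(x)|\,dx+\sum_k|\gA_k^-|x_k$; the weight $x_k$ appears precisely because the free Dirichlet solution at threshold is the linear function $x$, so a $\delta$‐interaction at $x_k$ contributes with weight equal to the value $x_k$ of that solution.

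\emph{Third}, to turn ``$\le$'' into the \emph{strict} inequality in \eqref{eq:bargmann}, I would exploit the hypothesis that $\gA^-\neq 0$ or $q_-\neq\mathbf 0$, i.e.\ that the attraction is genuinely nontrivial. The standard Bargmann argument gives a non‐strict count; strictness follows because the last (top) negative eigenvalue cannot exhaust the full integral/sum — the comparison solution at the threshold energy $0$ fails to produce an additional node unless the inequality is strict. More carefully, I would compare the Prüfer phase at energy $-\kappa^2$ with its limit as $\kappa\downarrow0$ and note that the limiting (threshold) configuration contributes a phase \emph{strictly} less than the quantity counting one more eigenvalue, using nonnegativity of $q_-$ and $|\gA_k^-|$ together with the fact that at least one of them is positive. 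This nonvanishing hypothesis is exactly what rules out the borderline equality case of the continuous Bargmann bound.

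The \textbf{main obstacle} I anticipate is the rigorous oscillation/node‐counting in the presence of the $\delta$‐interactions when $\gd_*=0$, where the accumulation of interaction points near infinity must be handled with care so that the Prüfer phase is well defined and the summation $\sum_k|\gA_k^-|x_k$ converges or controls the count even when it diverges (in which case the bound is vacuous and there is nothing to prove). The delicate point is justifying that each $\delta$‐jump contributes its full weight $x_k$ to the phase budget and that the infinite sum of these contributions correctly bounds the (necessarily finite, when the right‐hand side is finite) number of negative eigenvalues; establishing a clean monotone relation between the number of sign changes of $u$ and the accumulated attraction, uniformly as $\kappa\downarrow0$, is where the technical weight of the argument lies.
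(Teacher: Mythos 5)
Your route is genuinely different from the one the paper records. The paper itself only displays a proof in the special case $q\equiv 0$, $|X|=N<\infty$ (the proof given for Corollary \ref{cor:2.30}), and that proof is purely algebraic: it uses the inertia identity \eqref{eq:2.30} of Proposition \ref{prop:akmn}, $\kappa_-(\rH_{X,\gA})=\kappa_+(M_{X,\gA})-\kappa_+(\gA)$, where $M_{X,\gA}=\diag(1/\gA_1,\dots,1/\gA_N)+\bigl(\min(x_i,x_j)\bigr)_{i,j}$ is the zero--energy Dirichlet Green function sampled at the interaction points; by Sylvester's law of inertia $\kappa_+(M_{X,\gA})$ equals the number of eigenvalues of the Birman--Schwinger matrix $M_X^\Lambda=\Lambda^{1/2}\bigl(\min(x_i,x_j)\bigr)\Lambda^{1/2}$ exceeding $1$, which is then strictly majorized by the trace $\sum_k|\gA_k|x_k$; the mixed--sign case is handled by the same monotonicity reduction $\kappa_-(\rH_{X,\gA,q})\le\kappa_-(\rH_{X,-\gA^-,-q_-})$ that you propose (the general theorem, with $q$ and infinitely many centers, is only cited from \cite{AlbKosMalNei12}). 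Your proposal keeps that reduction but replaces the linear--algebra core by Bargmann's original oscillation argument: node counting for the zero--energy solution, with the $\delta$--jumps entering through the Riccati variable. In principle this is a legitimate alternative; it treats general $q$ and infinite $X$ in one stroke and explains the weights $x_k$ as values of the free Dirichlet solution, whereas the paper's displayed argument buys exactness (an identity, not just a bound, from which the later examples follow) and the complete avoidance of any ODE/oscillation machinery.

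However, two genuine gaps remain. First, the budget your mechanism produces is not the one in \eqref{eq:bargmann}: comparison with the free Dirichlet solution assigns the weight $x$ to the continuous part for exactly the reason it assigns $x_k$ to the point part, so what your argument can prove is $\kappa_-(\rH_{X,\gA,q})<\int_{\R_+}x\,q_-(x)\,dx+\sum_k x_k|\gA_k^-|$. This discrepancy cannot be argued away, because the unweighted inequality as displayed is false: take $q=-\varepsilon\chi_{[0,L]}$ with $\varepsilon$ small and $L$ large, and all $\gA_k=1$ with $x_k=L+k$; the operator is self-adjoint, $q_-\neq\mathbf{0}$, yet $\kappa_-$ is at least of order $L\sqrt{\varepsilon}/\pi$, which exceeds $\int_{\R_+}q_-\,dx=\varepsilon L$ once $\varepsilon<1/\pi^2$. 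So \eqref{eq:bargmann} must be read with the weight $x$ in the integral, and your write-up should state which inequality it actually establishes rather than echoing the displayed right-hand side. Second, the decisive analytic step --- that for a self-adjoint $\rH_{X,\gA,q}$ with possibly $\gd_*=0$ the quantity $\kappa_-$ is bounded by the number of zeros of the zero--energy solution, and that each interval between consecutive zeros forces its local budget to exceed $1$ --- is exactly the oscillation theory for Schr\"odinger operators with measure coefficients (cf. \cite{EckTes12}); you name it as the ``main obstacle'' but supply no argument: no Sturm comparison adapted to the jump conditions, no treatment of the threshold/limit--point issues, and no justification of the variational step when the comparison operator $\rH_{X,-\gA^-,-q_-}$ fails to be self-adjoint (your min--max reduction tacitly assumes it is; it should instead be run for the forms on the core $\dom(\rH^0_{X,\gA,q})$ via Glazman's lemma). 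These ingredients can be supplied, but as written the proposal defers precisely the steps that carry the weight of the proof --- the steps the paper's matrix route was designed to bypass.
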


Finally, let us mention that combining Theorem \ref{th:km_sa}(iv) with Theorem \ref{th:bargmann}, we arrive at the following estimate for Jacobi matrices.

\begin{corollary}\label{cor:2.30}
Let $X$ and $\gA$ be such that $\gd^*<\infty$ and $\gA^-\neq 0$. Let also $B_{X,\gA}$ given by \eqref{eq:B_xa} be self-adjoint. Then
\be\label{eq:2.33}
\kappa_-(B_{X,\gA})=\kappa_-(\rH_{X,\gA})<\sum_{k=1}^\infty |\gA_k^-|x_k.
\ee
\end{corollary}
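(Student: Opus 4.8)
The plan is to obtain \eqref{eq:2.33} by chaining Theorem \ref{th:km_sa}(iv) with the Bargmann-type bound of Theorem \ref{th:bargmann}, specialized to the case $q\equiv 0$. The only preliminary point is to transfer self-adjointness from the Jacobi matrix $B_{X,\gA}$ to the Hamiltonian $\rH_{X,\gA}$, so that both cited results become applicable.

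First I would set $q\equiv 0$ and record that then $q\in L^\infty(\R_+)$ holds trivially, while $\gd^*<\infty$ is assumed; thus the standing hypotheses of Theorem \ref{th:km_sa} are met. By part (i) of that theorem the deficiency indices satisfy $n_\pm(\rH_{X,\gA})=n_\pm(B_{X,\gA})$, so the assumed self-adjointness of $B_{X,\gA}$ forces $\rH_{X,\gA}=\rH_{X,\gA}^*$. With self-adjointness in hand, part (iv) of Theorem \ref{th:km_sa} yields the left-hand equality $\kappa_-(B_{X,\gA})=\kappa_-(\rH_{X,\gA})$, which is the first half of the claim.

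Next I would apply Theorem \ref{th:bargmann} with the same $q\equiv 0$. The operator $\rH_{X,\gA}=\rH_{X,\gA,0}$ is self-adjoint by the previous step, and since $\gA^-\neq 0$ the nontriviality hypothesis ``$\gA^-\neq 0$ or $q_-\neq\mathbf{0}$'' is fulfilled. Because $q_-\equiv 0$, the integral term in \eqref{eq:bargmann} drops out and the estimate collapses to
\[
\kappa_-(\rH_{X,\gA})<\sum_{k=1}^\infty|\gA_k^-|x_k.
\]
Combining this strict inequality with the equality from the first step gives precisely \eqref{eq:2.33}.

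There is no genuine obstacle here: the entire content resides in the two cited theorems, and the argument is a matter of verifying that their hypotheses align. The one point deserving care is the self-adjointness transfer, which relies on Theorem \ref{th:km_sa}(i) and hence on the assumptions $q\in L^\infty(\R_+)$ and $\gd^*<\infty$; without $\gd^*<\infty$ the identification of the deficiency indices of $\rH_{X,\gA}$ with those of $B_{X,\gA}$ is unavailable, and the chain of implications would break at its first link.
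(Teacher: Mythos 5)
Your argument is correct: with $q\equiv 0$ the standing hypotheses of Theorem \ref{th:km_sa} hold, part (i) transfers self-adjointness from $B_{X,\gA}$ to $\rH_{X,\gA}$, part (iv) then gives $\kappa_-(B_{X,\gA})=\kappa_-(\rH_{X,\gA})$, and Theorem \ref{th:bargmann} with $q_-\equiv 0$ and $\gA^-\neq 0$ gives the strict inequality. This is precisely the derivation the paper announces in the sentence preceding the corollary (``combining Theorem \ref{th:km_sa}(iv) with Theorem \ref{th:bargmann}''), and you make explicit the self-adjointness transfer that the paper leaves implicit. However, the proof the paper actually writes out under the corollary is a genuinely different argument, restricted to finitely many interactions and based on Proposition \ref{prop:akmn}: one first reduces to the case $\gA=\gA^-$ via the monotonicity $\kappa_-(\rH_{X,\gA})\le\kappa_-(\rH_{X,\gA^-})$, then uses \eqref{eq:2.30} to write $\kappa_-(\rH_{X,\gA})=\kappa_+(M_{X,\gA})=\kappa_-\bigl(I_N-\Lambda^{1/2}M_X\Lambda^{1/2}\bigr)$ with $\Lambda=\diag(|\gA_1|,\dots,|\gA_N|)$, and finally bounds the number of eigenvalues of $M_X^\Lambda:=\Lambda^{1/2}M_X\Lambda^{1/2}$ exceeding $1$ strictly by its trace $\sum_{k=1}^N|\gA_k|x_k$. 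The trade-off: your route yields the full statement, including infinitely many interactions, but its entire substance is outsourced to Theorem \ref{th:bargmann}, whose proof lies in \cite{AlbKosMalNei12}; the paper's displayed argument is elementary and self-contained (linear algebra plus a trace estimate) but covers only the case $N<\infty$.
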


Several different proofs of Theorem \ref{th:bargmann} can be found in \cite{AlbKosMalNei12}.
Let us give a proof of Corollary \ref{cor:2.30} for the case of a finite number of $\delta$-interactions based on Proposition \ref{prop:akmn}.

\begin{proof}
Firstly, assume that all $\gA_k$ are negative, that is $\gA=\gA^-$. Then $\kappa_+(\gA)=0$ and hence, by \eqref{eq:2.30}, we get
\[
\kappa_-(\rH_{X,\gA})= \kappa_+(M_{X,\gA}).
\]
Set $\Lambda:=\diag(|\gA_1|,\dots,|\gA_N|)$. Then we obtain from \eqref{eq:2.30}
\[
\kappa_+(M_{X,\gA})=\kappa_-(I_N-\Lambda^{1/2}M_X\Lambda^{1/2}),\quad
M_X=\left(\begin{array}{cccc} x_1 &  x_1 & \dots  & x_1\\
x_1 & x_2  &\dots & x_2\\
 \dots& \dots& \dots&  \dots \\
x_1 & x_2  & \dots & x_N\\
\end{array}\right).
\]
Therefore, denoting $M_X^\Lambda:=\Lambda^{1/2}M_X\Lambda^{1/2}$, we conclude that
\begin{align*}
\kappa_-(\rH_{X,\gA})=\kappa_+(M_{X,\gA})\le \sum_{\lambda_j(M_X^\Lambda)>1}1
<
\sum_{\lambda_j(M_X^\Lambda)>1}\lambda_j(M_X^\Lambda)\le {\rm tr}\, M_X^\Lambda=\sum_{k=1}^N |\gA_k|x_k.
\end{align*}
To prove the statement in the  case $\gA\neq \gA^-$,  it suffices to note
that $\kappa_-(\rH_{X,\gA})\le \kappa_-(\rH_{X,\gA^-})$.  
\end{proof}
\begin{remark}
(i) Let us mention that Theorem \ref{th:km_sa}(iv) and Proposition \ref{prop:akmn} 
enables us to construct the operator $\rH_{X,\gA}$ having a given number of negative eigenvalues (for further details see \cite{Ogu08}, \cite{Ogu10}, \cite{GolOri10}, \cite{AlbKosMalNei12}).

(ii) The above results demonstrate that Bargmann's bound is a one-sided estimate if the number of $\delta$-interactions is greater than $1$ (see \cite[Example 4.10]{AlbKosMalNei12} and also examples below).
\end{remark}

\begin{example}
Let $N\ge 2$. Assume that $\gA_1<0$ and $\gA_k>0$ for all $k\ge 2$. Clearly, $\kappa_+(\gA)=N-1$ and $\kappa_+(M_{X,\gA})\ge N-1$. If $\frac{1}{\gA_1}+x_1< 0$, then we immediately conclude that $\kappa_+(M_{X,\gA})= N-1$ and hence, by \eqref{eq:2.30}, $\kappa_-(\rH_{X,\gA})=0$, i.e., the operator is positive. Note that in this case the positivity also follows from the Bargmann estimate \eqref{eq:2.33}.

Next, if $\frac{1}{\gA_1}+x_1> 0$ and
\be\label{eq:2.34}
\Delta_2:=\det\begin{pmatrix}
\frac{1}{\gA_1}+x_1 & x_1\\
x_1 & \frac{1}{\gA_2}+x_2
\end{pmatrix}< 0,
\ee
then again we conclude $\kappa_+(M_{X,\gA})= N-1$ and hence, by \eqref{eq:2.30}, $\kappa_-(\rH_{X,\gA})=0$. Notice that in this case the Bargmann estimate \eqref{eq:2.33} only gives the inequality $\kappa_-(\rH_{X,\gA})\le 1$. Let us also mention that under the additional assumption $N=2$, the positivity of the determinant in \eqref{eq:2.34} implies that $\kappa_-(\rH_{X,\gA})= 1$.
\end{example}

\begin{example}
Let $N\ge 3$. Assume that $\gA_1<0$, $\gA_2<0$ and $\gA_k>0$ for all $k\ge 3$. Clearly, $\kappa_+(\gA)=N-2$ and $\kappa_+(M_{X,\gA})\ge N-2$. If $\frac{1}{\gA_1}+x_1< 0$ and the determinant in \eqref{eq:2.34} is positive, then $\frac{1}{\gA_2}+x_2< 0$ and  $\kappa_+(M_{X,\gA})= N-2$. Therefore, \eqref{eq:2.30} yields the equality $\kappa_-(\rH_{X,\gA})=0$. On the other hand, the Bargmann estimate \eqref{eq:2.33} only provides the inequality $\kappa_-(\rH_{X,\gA})< 2$.
\end{example}

%

\section{Hamiltonians with $\delta'$-interactions}\label{Sec:III}

\subsection{Definition of $\delta'$-interactions}
The main object of this section is the Hamiltonian formally given by the differential expression \eqref{I_01B}.
The existence of the model \eqref{I_01B} was pointed out in 1980 by Grossmann, Hoegh--Krohn and Mebkhout \cite{GroHoeMeb80}.
However, the first rigorous treatment of \eqref{I_01B} was made by Gesztesy and Holden in \cite{GesHol87}. Namely, they defined the Hamiltonian $\rH_{X,\gB,q}$ by using the method of boundary conditions. To be precise, let us consider \eqref{I_01B} on the interval $[0,b)$, $0<b\le +\infty$, assuming that the sequence $X=\{x_k\}_1^\infty$ is strictly increasing and accumulates at $b$. Then define the operator
 \begin{align}
\rH_{X,\gB,q}^0f:=&\tau_qf=-f''+q(x)f,\quad f\in\dom(\rH_{X,\gB,q}^0),\label{eq:h_b1}\\
\dom(\rH_{X,\gB,q}^0):=&\Big\{f\in W^{2,1}_{\comp}([0,b)\setminus X):\, f(0)=0,\Big. \nonumber
\\ &\Big.  \begin{array}{c}
f'(x_k+)=f'(x_k-)\\ f(x_k+)-f(x_k-)=\gB_k f'(x_k) \end{array},\, \tau_q f\in L^2(\R_+)\Big\}.\label{eq:h_b2}
\end{align}
Clearly, $\rH_{X,\gB,q}^0$ is symmetric. Let us denote its closure by $\rH_{X,\gB,q}$:
\be\label{eq:h_b3}
\rH_{X,\gB,q}:=\overline{\rH_{X,\gB,q}^0}.
\ee
For $q=\bold{0}$ we set $\rH_{X,\gB}:=\rH_{X,\gB,\bold{0}}$. If $\gB_k= \infty$, then the boundary condition at $x_k$ reads as
$f'(x_k+)=f'(x_k-) = 0$. Therefore, the operator $\rH_{X,\infty,q}$ becomes
\be\label{eq:h_N}
 \rH_{X,\infty,q}:=\rH_{X,q}^N = \bigoplus_{k\in \N} \rH_{q,k}^N, \quad  \dom(\rH_{X,q}^N) = \bigoplus_{k\in \N}\dom(\rH_{q,k}^N),
\ee
where $\rH_{q,k}^N$ is the Neumann realization of $\tau_q=-\frac{d^2}{dx^2}+q$ in $L^2(x_{k-1},x_k)$.

Up to now it was not clear how to apply the form approach in order to rigorously define a $\delta'$-interaction on $X$ (cf. \cite[Section 7.2]{Exn08}). Indeed, a very naive guess is to consider  a single $\delta'$-interaction at $x_0$  as the following form sum
\[
\gt'[f]=\int_{\R} |f'(x)|^2\, dx +\gB_0|f'(x_0)|^2
\]
defined on the domain
\[
 \dom(\gt')=\{f\in W^{1,2}(\R):\, f'(x_0) \ \text{exists and is finite}\}.
\]
Clearly, the form $\gt'$ is not closable. However (see \cite{KosMal12}\footnote{In the paper \cite{BehLanLot12}, which appeared during the preparation of \cite{KosMal12}, Hamiltonians with a $\delta'$-interaction supported on a hypersurface  are treated in a similar way.}),
 one needs to consider a $\delta'$-interaction as a form sum of two forms $\gt_N$ and $\gb$, where
\be\label{eq:3.5}
\gt_N[f]:=\int_{\R} |f'(x)|^2\, dx,\quad \dom(\gt_N):=W^{1,2}(\R\setminus\{x_0\}),
\ee
and
\be\label{eq:3.6}
\gb[f]:=\frac{|f(x_0+)-f(x_0-)|^2}{\gB_0},\quad \dom(\gt_N):=W^{1,2}(\R\setminus\{x_0\}).
\ee
Let us note that the operator
\be
\rH_{x_0}^N:=-\frac{d^2}{dx^2}, \quad \dom(\rH_{x_0}^N)=\{f\in W^{2,2}(\R\setminus \{x_0\}):\ f'(x_0+)=f'(x_0-)=0\},
\ee
is associated with the form $\gt_N$. Clearly, $\rH_{x_0}^N$ is the direct sum of Neumann realizations of $-\frac{d^2}{dx^2}$ in $L^2(-\infty,x_0)$ and $L^2(x_0,+\infty)$, respectively. Note that the form $\gb$ is infinitesimally form bounded with respect to the form $\gt_N$ and hence, by the KLMN theorem,  the form
\be\label{eq:3.7}
\gt'[f]:=\gt_N[f]+\gb[f], \quad \dom(\gt'):=W^{1,2}(\R\setminus\{x_0\}),
\ee
is closed and lower semibounded and gives rise to a self-adjoint operator 
\be\label{eq:3.8}
\begin{split}
\rH'&=-\frac{d^2}{dx^2},\\ \dom(\rH')&:=\Big\{f\in W^{2,2}(\R\setminus\{x_0\}): \begin{array}{c}
f'(x_0+)=f'(x_0-)\\
f(x_0+)-f(x_0-)=\gB_0f'(x_0+)\end{array}\Big\}.
\end{split}
\ee
\begin{remark}
(i) Let us emphasize that the definition of a $\delta'$-interaction via the form sum \eqref{eq:3.7} allows to observe the key difference between $\delta$ and $\delta'$-interactions. Namely, $\delta$-interactions are considered as a perturbation of the free Hamiltonian. However, $\delta'$-interactions can be viewed as a perturbation of the operator $\rH_{X,q}^N$ defined by \eqref{eq:h_N}. In particular, in the case of infinitely many interaction centers, the free Hamiltonian has purely absolutely continuous spectrum though the spectrum of $\rH_{X,q}^N$ is purely point. Let us also mention that the idea to consider Hamiltonians with $\delta'$-interactions $\rH_{X,\gB,q}$ as a perturbation of the Neumann realization $\rH_{X,q}^N$ was used by Exner in \cite{Exn95} in order to prove that the spectra of $\delta'$ Wannier--Stark Hamiltonians have no absolutely continuous parts.

(ii) 
 Similar to $\delta$-interactions, Hamiltonians with $\delta'$-interactions can also be considered as quasi-differential operators. For example, set $p(x)=x+\gB_0\chi_{[x_0,+\infty)(x)}$ and consider in $L^2(\R)$ the following differential expression
$
\tau_p:=-\frac{d}{dx}\frac{d}{dp(x)}.
$
It can be shown (cf. \cite[\S 3]{EckTes12} and \cite{EckKosMalTes12}), that the corresponding self-adjoint operator coincides with $\rH'$ given by \eqref{eq:3.8}. Note that this definition enables us to introduce $\delta'$-interaction on an arbitrary set of Lebesgue
measure zero and this will be done in the forthcoming paper \cite{EckKosMalTes12}. Let us also mention that using a different approach these operators have been studied recently by Albeverio and Nizhnik \cite{AlbNiz06} and Brasche and
Nizhnik \cite{BraNiz11}.
\end{remark}

As in the case of $\delta$-interactions, the domain of $\rH_{X,\gB,q}$ can be further specified if $q\in L^\infty(\R_+)$. Let us equip $\dom(\rH_{X,\gB,q})$
with the graph norm of $\rH_{X,\gB,q}$.
   \begin{proposition}\label{prop:emb'}
Let $q\in L^\infty(\R_+)$. Then:

\begin{itemize}
\item[(i)]  The operator $\rH_{X,\gB,q}$ is self-adjoint and its domain is given by
 \be
 \dom\big(\rH_{X,\gB,q}\big):=\Big\{f\in  W^{2,2}(\R_+\setminus X):\,  f(0)=0, 
\begin{array}{c}
f'(x_k+) = f'(x_k-)\\ f(x_k+)-f(x_k-) = \gB_k f(x_k) \end{array}\Big\}.\label{eq:2.5BB}
 \ee
\item[(ii)] The embedding
\begin{align*}
 W^{2,2}(\R_+\setminus X) \hookrightarrow  W^{1,2}(\R_+\setminus X), 
\end{align*}
holds and is continuous  if and only if $\gd_*>0$.
\item[(iii)] If $\gd_*>0$, then
the embedding %
  \begin{align}\label{2.8B}
\dom(\rH_{X,\gB,q}) 
\hookrightarrow  W^{1,2}(\R_+\setminus X) 
    \end{align}
holds  and  is continuous.
\end{itemize}
\end{proposition}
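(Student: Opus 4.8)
For part (i), I would first establish the domain formula \eqref{eq:2.5BB} exactly as in Proposition \ref{prop2.1}(i): integrating $(\rH^0_{X,\gB,q}f,g)_{L^2}$ by parts twice over each subinterval $[x_{k-1},x_k]$ and collecting the boundary contributions at every $x_k$, one reads off that $g\in\dom(\rH^*_{X,\gB,q})$ precisely when $g\in W^{2,2}(\R_+\setminus X)$, $g(0)=0$, and the $\delta'$-conditions $g'(x_k+)=g'(x_k-)$, $g(x_k+)-g(x_k-)=\gB_kg'(x_k)$ hold. For self-adjointness I would follow the boundary-triplet route used in the $\delta$-case (the construction of \cite{MalNei_08}, \cite{KosMal09} and its $\delta'$-variant in \cite{KosMal12}): regard $\rH_{X,\gB,q}$ as an extension of the minimal operator $\bigoplus_k\rH_{q,k}$, choose a boundary triplet for its adjoint in which the common derivative $f'(x_k)$ and the jump $f(x_k+)-f(x_k-)$ are the two boundary maps, and show that in this triplet $\rH_{X,\gB,q}=\ker(\Gamma_1-C_{X,\gB}\Gamma_0)$ for an explicit Jacobi-type matrix $C_{X,\gB}$, the $\delta'$-analogue of $B_{X,\gA}$ in \eqref{eq:B_xa}. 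Then $n_\pm(\rH_{X,\gB,q})=n_\pm(C_{X,\gB})$, and it remains to check that $n_\pm(C_{X,\gB})=0$.

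The crux, and the place where $\delta'$ genuinely differs from $\delta$, is that this matrix $C_{X,\gB}$ is self-adjoint for every $q\in L^\infty(\R_+)$ with no smallness or spacing hypothesis, in contrast to $B_{X,\gA}$, which may fail to be self-adjoint when $\gd_*=0$ (cf. Theorem \ref{th:km_sa} and Example \ref{ex:2.8}). The reason is structural: because the $\delta'$-coupling enters through the continuous value $f'(x_k)$ and the $f$-jump, the entries of $C_{X,\gB}$ remain bounded once $q$ is bounded, whereas for $\delta$ the $1/\gd_k$ terms in \eqref{eq:B_xa} render $B_{X,\gA}$ unbounded as $\gd_k\to0$. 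Verifying this boundedness of $C_{X,\gB}$, equivalently that $-u''+qu=zu$ together with the accumulating $\delta'$-conditions stays in the limit-point case at $+\infty$, is the main technical obstacle. An alternative is to realise $\rH_{X,\gB,q}$ as the self-adjoint operator of the closed, lower-semibounded form $\gt_N[f]+\sum_k\gB_k^{-1}|f(x_k+)-f(x_k-)|^2+\int_{\R_+}q|f|^2\,dx$ obtained by iterating the single-centre KLMN argument of \eqref{eq:3.5}--\eqref{eq:3.8}; there the delicate point is the uniform trace estimates needed to control the jump terms.

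For part (ii), which is a statement about the Sobolev scale alone, I would argue both implications by rescaling. For the \emph{if} direction, on each $[x_{k-1},x_k]$ I would use the interpolation inequality $\|f'\|_{L^2(I_k)}^2\le\tfrac12\|f''\|_{L^2(I_k)}^2+(c_1+c_2\gd_k^{-2})\|f\|_{L^2(I_k)}^2$, a rescaling of the fixed-interval inequality (cf. \cite[inequality (IV.1.12)]{Kato66}); when $\gd_*>0$ the coefficients $c_1+c_2\gd_k^{-2}$ are bounded uniformly in $k$, so summing over $k$ yields $\|f\|_{W^{1,2}(\R_+)}\le C\,\|f\|_{W^{2,2}(\R_+\setminus X)}$. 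For the \emph{only if} direction, when $\gd_*=0$ I would thin a subsequence $\gd_{k_j}\to0$ so that $\sum_j\gd_{k_j}<\infty$, set $f(x)=\gd_{k_j}^{-1}(x-m_{k_j})$ on each $I_{k_j}$ (affine, with $m_{k_j}$ the midpoint) and $f\equiv0$ elsewhere. Then $f''\equiv0$ and $\|f\|_{L^2}^2=\tfrac1{12}\sum_j\gd_{k_j}<\infty$, so $f\in W^{2,2}(\R_+\setminus X)$, whereas $\|f'\|_{L^2}^2=\sum_j\gd_{k_j}^{-1}=\infty$, so $f\notin W^{1,2}(\R_+\setminus X)$; this disproves the embedding, in the spirit of Example \ref{ex:2.2}. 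Note that the affine profiles are exactly the functions saturating the $\gd_k^{-2}$ term, which is what forces the appearance of $\gd_*$.

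Finally, part (iii) follows by composing (i) and (ii). By (i) one has $\dom(\rH_{X,\gB,q})\subset W^{2,2}(\R_+\setminus X)$, and since $q\in L^\infty(\R_+)$ the bound $\|f''\|_{L^2}\le\|\rH_{X,\gB,q}f\|_{L^2}+\|q\|_\infty\|f\|_{L^2}$ shows that the graph norm dominates the $W^{2,2}(\R_+\setminus X)$-norm $\|f\|_{L^2}^2+\|f''\|_{L^2}^2$. As $\gd_*>0$, part (ii) then provides a continuous embedding of $W^{2,2}(\R_+\setminus X)$ into $W^{1,2}(\R_+\setminus X)$, and chaining the two estimates gives the continuous embedding \eqref{2.8B}.
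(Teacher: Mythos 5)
Your treatment of parts (ii) and (iii) is correct and is essentially the argument the paper indicates (it defers the proof to \cite{KosMal12}, but for the $\delta$-analogue, Proposition \ref{prop2.1}(ii), it uses exactly this scheme): a rescaled Kato-type interpolation inequality on each $[x_{k-1},x_k]$ with constants uniform precisely when $\gd_*>0$, an affine-profile counterexample in the spirit of Example \ref{ex:2.2} when $\gd_*=0$, and for (iii) the bound $\|f''\|_{L^2}\le\|\rH_{X,\gB,q}f\|_{L^2}+\|q\|_\infty\|f\|_{L^2}$ composed with (ii). The integration-by-parts computation of the adjoint domain in (i) is also fine.

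The genuine gap is your self-adjointness argument in (i). You assert that the $\delta'$-analogue $C_{X,\gB}$ of the Jacobi matrix has entries that ``remain bounded once $q$ is bounded'' and that this is the structural reason why no spacing hypothesis is needed. This is false, and the paper displays the matrix explicitly in \eqref{eq:B_xa'}: its entries are $\gd_k^{-2}$, $\frac{\gd_k^{-1}}{\beta_k}+\gd_k^{-2}$ and $\frac{\gd_k^{-1/2}\gd_{k+1}^{-1/2}}{\beta_k}$, which blow up exactly in the regime of interest $\gd_k\to 0$ (and also when $\gB_k\to 0$). So $B_{X,\gB}$ is just as unbounded as $B_{X,\gA}$, and your proposed verification that $n_\pm(C_{X,\gB})=0$ has no proof behind it. The correct mechanism, which the paper records in Section \ref{ss:3.2}, is different: $B_{X,\gB}$ admits the factorization \eqref{eq:fac1}--\eqref{eq:fac2} identifying it with a Krein--Stieltjes string, and Hamburger's theorem (Theorem \ref{prop_01}(i)) yields the limit-point case whenever the total length $\sum_n\gd_n=\lim_n x_n$ is infinite, i.e.\ automatically on $\R_+$, regardless of the signs or sizes of the $\gB_k$; adding bounded $q$ is harmless. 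Alternatively, \cite{BusStoWei95} obtains self-adjointness directly from Weyl limit-point/limit-circle theory, which is also the route that avoids the hypothesis $\gd^*<\infty$ required by the boundary-triplet equivalence $n_\pm(\rH_{X,\gB,q})=n_\pm(B_{X,\gB})$ (Theorem \ref{th:km_sa'}) but absent from Proposition \ref{prop:emb'}. Your fallback via iterated KLMN cannot close the gap either: for infinitely many centers with $\gd_k\to0$ and negative intensities the form $\gt_{X,\gB,q}^0$ need not be lower semibounded at all (the necessary conditions of Theorem \ref{th:km_sb'}(ii) and Lemma \ref{lem:nec} fail for suitable $\gB$), while the operator is nonetheless self-adjoint; so the ``uniform trace estimates'' you hope for genuinely do not exist, and the form method cannot prove (i) in the stated generality.
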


\begin{remark}
Self-adjointness of $\rH_{X,\gB,q}$ was established in \cite{BusStoWei95} (see also Section \ref{ss:3.2}). The proof of Proposition \ref{prop:emb'} can be found in \cite{KosMal12}.
%
\end{remark}

{\em Further remarks:}  There is one more approach to define $\delta'$-interactions. Namely, a single $\delta'$-interaction can be treated as an $\cH_{-2}$-perturbation of the free Hamiltonian. For further details and results we refer to the monographs \cite{AlbKur00} and \cite{Kos93}.

Let us also mention that there is a difference between $\delta'$-interactions and $\delta'$-potentials. During the last few years there was some activity in understanding the Hamiltonians with $\delta'$-potentials. In this connection we refer to the recent papers \cite{GolMan09}, \cite{GolHry11}, \cite{GolHry11b} and \cite{BraNiz11} (see also the references therein).

\subsection{Self-adjointness and connection with Jacobi matrices}\label{ss:3.2} 
The first results on the self-adjointness for Hamiltonians with $\delta'$-interactions were obtained by Gesztesy and Holden \cite{GesHol87} (see also \cite[\S III.3]{AGHH05}). Using the approach introduced by Phariseau in \cite{Pha60} for $\delta$-interactions, Gesztesy and Holden \cite{GesHol87} established self-adjointness in the case  $q\equiv 0$ and $\gd_*>0$. Let us stress that the analysis becomes much more complicated if  either $\gd_*=0$ or $q\notin L^\infty$.

The next step was made by Buschmann, Stolz and Weidmann \cite{BusStoWei95}. Namely,
in contrast to Hamiltonians with $\delta$-interactions, it was observed in \cite{BusStoWei95} that the Hamiltonian $\rH_{X,\gB,q}$
is always self-adjoint provided that $q\in L^\infty$ and $b=+\infty$. However, as in the case of $\delta$-interactions, Buschmann, Stolz and Weidmann \cite{BusStoWei95} proved that $n_\pm(\rH_{X,\gB,q})\le 1$ and the deficiency indices can be characterized by using Weyl's limit point/limit circle criterion.

Using the boundary triplets approach, it was shown in \cite{KosMal09} that in the case $q\in L^\infty(0,b)$ certain spectral properties of $\rH_{X,\gB,q}$ are closely connected with those of the following Jacobi matrix
\be\label{eq:B_xa'}
B_{X,\gB}:=\left(\begin{array}{cccccc}
\gd_1^{-2} & \gd_1^{-2} & 0& 0& 0 & \dots\\
\gd_1^{-2} & \frac{\gd_1^{-1}}{\beta_1}+\gd_1^{-2} & \frac{\gd_1^{-1/2}\gd_2^{-1/2}}{\beta_1}& 0& 0 & \dots\\
0 & \frac{\gd_1^{-1/2}\gd_2^{-1/2}}{\beta_1} & \frac{\gd_2^{-1}}{\beta_1}+\gd_2^{-2}& \gd_2^{-2}& 0 & \dots\\
0 & 0 & \gd_2^{-2}& \frac{\gd_2^{-1}}{\beta_2}+\gd_2^{-2}& \frac{\gd_2^{-1/2}\gd_3^{-1/2}}{\beta_2} & \dots\\
0 & 0 & 0& \frac{\gd_2^{-1/2}\gd_3^{-1/2}}{\beta_2}& \frac{\gd_3^{-1}}{\beta_2}+\gd_3^{-2} & \dots\\
\dots & \dots & \dots& \dots& \dots & \dots
\end{array}\right).
\ee
More precisely, under a suitable choice of a boundary triplet $\Pi=\{l^2(\N),\Gamma_0,\Gamma_1\}$ for the operator $\rH_{X,q}^*$ (see Section \ref{ss:2.3}), the operator $\rH_{X,\gB,q}$ admits the following representation
\[
\dom(\rH_{X,\gB,q})=\{f\in \dom(\rH_{X,q}^*):\ \Gamma_1=B_{X,\gB}\Gamma_0\}.
\]
The next  result was established in \cite{KosMal09}, \cite{KosMal10}.

\begin{theorem}[\cite{KosMal09, KosMal10}]\label{th:km_sa'}
Let $\rH_{X,\gB,q}$ be given by \eqref{eq:h_b1}--\eqref{eq:h_b3} and let $B_{X,\gB}$ be the matrix \eqref{eq:B_xa'}. Let also $q\in L^\infty$ and $\gd^*<\infty$. Then:
\begin{itemize}
\item[(i)] $n_\pm(\rH_{X,\gB,q})=n_\pm(B_{X,\gB})$. In particular, $\rH_{X,\gB,q}$ is self-adjoint if and only if so is $B_{X,\gB}$.
\item[(ii)] The operator $\rH_{X,\gB,q}$ is lower semibounded if and only if so is $B_{X,\gB}$.
\item[(iii)] If $\rH_{X,\gB,q}$ is self-adjoint, then its spectrum is purely discrete if and only if $\gd_k\to 0$ and the spectrum of $B_{X,\gB}$ is purely discrete.
\item[(iv)] If $\rH_{X,\gB}$ is self-adjoint, then
\be
\kappa_-(\rH_{X,\gB})=\kappa_-(B_{X,\gB}).
\ee
\item[(v)] If $\rH_{X,\gB}$ is self-adjoint, then for any $p\in [1,\infty]$
\be
\rH_{X,\gB}^-\in \gS_p(L^2) \Longleftrightarrow B_{X,\gB}^-\in \gS_p(l^2).
\ee
In particular, $\sigma_{\ess}(\rH_{X,\gA})\subseteq\R_+$ if and only if $\sigma_{\ess}(B_{X,\gA})\subseteq\R_+$.
\end{itemize}
\end{theorem}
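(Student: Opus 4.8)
The plan is to derive every assertion from the abstract correspondence between the extension $A_B=\rH_{X,\gB,q}$ and its boundary parameter $B=B_{X,\gB}$ supplied by the boundary triplet $\Pi=\{l^2(\N),\Gamma_0,\Gamma_1\}$ recorded above, for which $\rH_{X,\gB,q}=\rH_{X,q}^*\!\upharpoonright\ker(\Gamma_1-B_{X,\gB}\Gamma_0)$ and the reference extension $A_0=\rH_{X,q}^*\!\upharpoonright\ker\Gamma_0$ is exactly the Neumann direct sum $\rH_{X,q}^N$ of \eqref{eq:h_N}. The bridge is the Weyl function $M(\cdot)$ of $\Pi$ together with its $\gamma$-field $\gamma(\cdot)$, and the genuinely technical first step is to establish the analytic properties of $M$. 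Since a direct sum of the natural interval triplets is not a boundary triplet when $\gd_*=0$, one must use the rescaled triplet of \cite{MalNei_08,KosMal09} (the rescaling is precisely what produces the weights $\gd_k^{-1},\gd_k^{-2}$ in \eqref{eq:B_xa'}); the hypotheses $q\in L^\infty$ and $\gd^*<\infty$ then guarantee that $M(z)\in[l^2(\N)]$ is a \emph{bounded} operator for every $z\in\C\setminus\R$. This is the step I expect to be the main obstacle, because all five spectral transfers hinge on controlling $M$ uniformly, and boundedness of $M$ is exactly where the assumption $\gd^*<\infty$ is used.

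Granting this, (i) is immediate from the Malamud--Neidhardt framework: when $M(z)$ is bounded, the map $\Gamma$ carries the defect subspaces of $A_B$ bijectively onto those of the relation $\Gamma_1-B\Gamma_0$, i.e. onto $\ker(B^*\mp\I)$, so that $n_\pm(A_B)=n_\pm(B)$ and $A_B$ is self-adjoint precisely when $B_{X,\gB}$ is. In passing, since a Jacobi matrix has deficiency indices at most one, this recovers the Buschmann--Stolz--Weidmann bound $n_\pm(\rH_{X,\gB,q})\le 1$.

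For (ii) and (iii) I would invoke the comparison theorems for boundary triplets. Lower semiboundedness of $A_B$ is equivalent to that of $B_{X,\gB}$ once $A_0=\rH_{X,q}^N$ is itself semibounded (which holds for bounded $q$), the two bounds differing by a shift governed by $M$. For discreteness one cannot simply demand that $A_0$ be discrete, since each Neumann block carries the eigenvalue $0$, so $0\in\sigma_{\ess}(A_0)$ always; the statement therefore splits. The nonzero Neumann eigenvalues $(n\pi/\gd_k)^2$ have no finite accumulation point other than $0$ exactly when $\gd_k\to0$, so $\sigma_{\ess}(A_0)=\{0\}$ in that case, whereas the organization of this residual cluster at $0$ is governed entirely by $B_{X,\gB}$. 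Hence $\sigma(A_B)$ is purely discrete if and only if both $\gd_k\to0$ and $\sigma(B_{X,\gB})$ is purely discrete, which is (iii).

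Finally, parts (iv) and (v) (where $q=0$) follow from Krein's resolvent formula $(A_B-z)^{-1}-(A_0-z)^{-1}=\gamma(z)(B-M(z))^{-1}\gamma(\bar z)^*$ together with $A_0=\rH_{X,q}^N\ge0$. Because the unperturbed operator is nonnegative, the Birman--Schwinger principle at energy zero puts the negative eigenvalues of $A_B$ in multiplicity-preserving bijection with those of $B_{X,\gB}$, giving $\kappa_-(\rH_{X,\gB})=\kappa_-(B_{X,\gB})$. Since $\gamma(z)$ is bounded and boundedly invertible onto the defect subspace, the same formula shows that the transformation linking $\rH_{X,\gB}^-$ and $B_{X,\gB}^-$ preserves membership in $\gS_p$ in both directions, whence $\rH_{X,\gB}^-\in\gS_p(L^2)\Leftrightarrow B_{X,\gB}^-\in\gS_p(l^2)$; the assertion on $\sigma_{\ess}$ is the case $p=\infty$. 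Throughout, the structure parallels the already-established $\delta$-case of Theorem \ref{th:km_sa}, only the explicit triplet and the matrix $B_{X,\gB}$ being different.
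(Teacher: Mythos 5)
Your global strategy---the regularized boundary triplet of \cite{MalNei_08}, \cite{KosMal09}, with spectral data transferred through the Weyl function and Krein's resolvent formula---is indeed the route of \cite{KosMal09}, \cite{KosMal10}, and your part (i) is fine: $n_\pm(A_B)=n_\pm(B)$ holds for \emph{any} boundary triplet once the parametrization $\dom(\rH_{X,\gB,q})=\ker(\Gamma_1-B_{X,\gB}\Gamma_0)$ is in place. The first genuine error is your identification $\ker\Gamma_0=\rH^N_{X,q}$. For the triplet whose boundary operator is the matrix \eqref{eq:B_xa'}, the reference extension $A_0=\ker\Gamma_0$ is the \emph{Dirichlet} direct sum $\bigoplus_k\rH^D_{q,k}$: the entries of \eqref{eq:B_xa'} contain $1/\gB_k$ (see also the factorization \eqref{eq:fac1}--\eqref{eq:fac2}), i.e.\ the $\delta'$-conditions are encoded as derivative data $=B_{X,\gB}\,\cdot$ function data, so $\ker\Gamma_0$ annihilates function values; the Neumann realization \eqref{eq:h_N} is the extension whose boundary parameter is obtained by setting $1/\gB_k\equiv 0$, not $\ker\Gamma_0$. (If $\ker\Gamma_0$ were Neumann, the parametrizing matrix would carry $\gB_k$ linearly, exactly as $\gA_k$ enters \eqref{eq:B_xa}.) This error breaks your argument for (iii): $0$ is an eigenvalue of infinite multiplicity of $\rH^N_{X,q}$, so its resolvent is never compact, and Krein's formula with a non-compact unperturbed resolvent cannot produce discreteness of $A_B$; saying that $B_{X,\gB}$ ``organizes the cluster at $0$'' is the assertion to be proved, not a proof. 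With the Dirichlet reference the mechanism closes: $A_0$ has compact resolvent iff $\gd_k\to 0$; granted this, Krein's formula plus boundedness of $M$ gives that $A_B$ is discrete iff $(B_{X,\gB}-M(z))^{-1}$ is compact iff $\sigma(B_{X,\gB})$ is discrete, while the necessity of $\gd_k\to0$ follows from a Weyl singular sequence supported in intervals of non-shrinking length.

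The second gap: the property of $M$ you single out (boundedness at a fixed nonreal $z$) is not what (ii), (iv), (v) require; the needed facts concern $M$ on the real axis, and they are precisely what the regularization and the hypotheses $q\in L^\infty$, $\gd^*<\infty$ are there to secure. For (ii), semiboundedness of $B_{X,\gB}$ is \emph{not} equivalent to that of $A_B$ merely because $A_0$ is semibounded: the implication ``$B$ semibounded $\Rightarrow A_B$ semibounded'' holds for all $B$ iff $M(x)$ tends to $-\infty$ \emph{uniformly} as $x\to-\infty$ \cite{DM91}, and verifying this uniform decay for the concrete triplet is exactly where $\gd^*<\infty$ is used. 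For (iv) and (v), the abstract results give $\kappa_-(A_B)=\kappa_-\bigl(B-M(0)\bigr)$ and $A_B^-\in\gS_p\Leftrightarrow\bigl(B-M(0)\bigr)^-\in\gS_p$; to replace $B-M(0)$ by $B_{X,\gB}$ one needs the normalization $M(0)=0$, a design feature of the regularized triplet (it is the origin of the additive terms $\gd_k^{-2}$, $\gd_k^{-1}$ in \eqref{eq:B_xa'} and of $\frac{1}{\gd_n}+\frac{1}{\gd_{n+1}}$ in \eqref{eq:B_xa}); your appeal to a ``Birman--Schwinger principle at energy zero'' does not supply this. So the skeleton matches the paper's, but as written the proofs of (ii)--(v) do not close.
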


It is interesting to note that the matrix $B_{X,\gB}$ admits the  representation
\begin{equation}\label{eq:fac1}
B_{X,\gB}=R_X^{-1}(I+U)D_{X,\gB}^{-1}(I+U^*)R_X^{-1},
\end{equation}
where $U$ is the unilateral shift on $l^2(\N)$ and
\be\label{eq:fac2}
R_X=\bigoplus_{k=1}^\infty \begin{pmatrix}
\sqrt{\gd_k} & 0\\
0 & \sqrt{\gd_k}
\end{pmatrix},\quad D_{X,\gB}=\bigoplus_{k=1}^\infty \begin{pmatrix}
\gd_k & 0\\
0 & \gB_k
\end{pmatrix}.
\ee
 This observation immediately implies a connection of the Hamiltonian $\rH_{X,\gB,q}$ with Krein--Stieltjes string operators \cite[Appendix]{Akh}, \cite{KK71} (see also \cite[\S 6]{KosMal09} and \cite{KosMal10} for further details). Namely, if all $\gB_k$ are positive, then setting $l_{2k - 1} := \gd_k$, $l_{2k} := \gB_k$, $m_{2k -1} = m_{2k} := \gd_k$, $k\in\N$, the difference equation associated with the matrix $B_{X,\gB}$ describes the motion of an
inhomogeneous string (Krein--Stieltjes string) with the
mass distribution $\mathcal{M}(y) = \sum_{y_k<y} m_k$, where $y_k-y_{k-1}=l_k$ and $y_0=0$.
This class of matrices is studied sufficiently
well. In particular, applying Hamburger's Theorem \cite[Theorem 0.5]{Akh} to the matrix $B_{X,\gB}$,
we arrive at the following self-adjointness criterion.

 \begin{theorem}[\cite{KosMal09}]\label{prop_01}
Deficiency indices of the operator $\rH_{X,\gB}$ are equal and are not greater than one, $\mathrm{n}_+(\rH_{X,\gB})=\mathrm{n}_-(\rH_{X,\gB})\leq1$.
Furthermore, $\rH_{X,\gB}$ is self-adjoint if and only if at least one of the following conditions hold:
\begin{itemize}
\item[(i)] \quad $\sum_{n=1}^{\infty} \gd_n =\infty$, i.e., $b=+\infty$;
\item[(ii)] \quad $\sum_{n=1}^\infty \left[\gd_{n+1} \big|\sum_{i=1}^n (\beta_i+\gd_i)\big|^2\right]=\infty$.
     \end{itemize}
     \end{theorem}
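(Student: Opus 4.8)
The plan is to transfer the entire question to the Jacobi matrix $B_{X,\gB}$ of \eqref{eq:B_xa'} and to exploit the Krein--Stieltjes string structure displayed in \eqref{eq:fac1}--\eqref{eq:fac2}. I would first dispose of the easy regime $b=+\infty$: here condition (i) holds by definition, and self-adjointness of $\rH_{X,\gB}$ is already guaranteed by the result of Buschmann, Stolz and Weidmann \cite{BusStoWei95} recalled in Section \ref{ss:3.2}. It then remains to treat $b<+\infty$, and the useful observation is that in this case $x_k\uparrow b$ forces $\gd_k\to0$, so $\gd^*=\sup_k\gd_k<\infty$ automatically; hence Theorem \ref{th:km_sa'}(i) applies (with $q\equiv0\in L^\infty$) and gives $n_\pm(\rH_{X,\gB})=n_\pm(B_{X,\gB})$. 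In particular, the claim $n_+(\rH_{X,\gB})=n_-(\rH_{X,\gB})\le1$ is inherited from the Jacobi matrix: for nonreal $z$ the deficiency subspace of $B_{X,\gB}$ consists of the $l^2(\N)$-solutions of the second-order difference equation $(B_{X,\gB}-z)u=0$, which are fixed by their initial value, so this space is at most one-dimensional, and equality of the indices is automatic since $B_{X,\gB}$ has real entries.

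For the self-adjointness criterion I would invoke the factorization \eqref{eq:fac1}, which realizes $B_{X,\gB}$ as the difference operator of a Krein--Stieltjes string with lengths $l_{2k-1}=\gd_k$, $l_{2k}=\gB_k$ and masses $m_{2k-1}=m_{2k}=\gd_k$. By Hamburger's theorem \cite[Theorem 0.5]{Akh} (equivalently, Weyl's limit-point/limit-circle alternative at the right endpoint), $B_{X,\gB}$ is self-adjoint if and only if the homogeneous equation at spectral parameter $0$ has a solution lying outside the weighted space $l^2(m):=\{u:\ \sum_j m_j|u_j|^2<\infty\}$. At parameter $0$ this equation has the two explicit solutions $u_j\equiv1$ and $u_j=Y_j$, where $Y_j:=\sum_{i<j}l_i$ is the position of the $j$-th mass; hence $\rH_{X,\gB}$ is self-adjoint precisely when at least one of $\sum_j m_j$ and $\sum_j m_j Y_j^2$ diverges.

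It remains to rewrite these two divergence conditions in terms of $\gd_k$ and $\gB_k$. The first is immediate: $\sum_j m_j=2\sum_k\gd_k=2b$, so $\sum_j m_j=\infty$ is exactly condition (i). For the second, the positions are $Y_{2k-1}=\sum_{i=1}^{k-1}(\gd_i+\gB_i)$ and $Y_{2k}=Y_{2k-1}+\gd_k$, whence $\sum_j m_j Y_j^2=\sum_k\gd_k\bigl(Y_{2k-1}^2+Y_{2k}^2\bigr)$. The odd-indexed part $\sum_k\gd_k Y_{2k-1}^2$ is literally condition (ii) after the reindexing $k=n+1$, so (ii)$\,\Rightarrow\,\sum_j m_jY_j^2=\infty$. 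For the converse I would show that, in the relevant regime $b<\infty$ where condition (ii) fails, the even-indexed part converges too: expanding $Y_{2k}^2=Y_{2k-1}^2+2\gd_k Y_{2k-1}+\gd_k^2$, the term $\sum_k\gd_k^3$ converges because $\gd_k\to0$ and $\sum_k\gd_k<\infty$, while $\sum_k\gd_k^2Y_{2k-1}\le(\sum_k\gd_k^3)^{1/2}(\sum_k\gd_k Y_{2k-1}^2)^{1/2}<\infty$ by Cauchy--Schwarz. Thus $\sum_j m_j Y_j^2<\infty$ exactly when (ii) fails, and the string criterion collapses to ``(i) or (ii)''.

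The step needing genuine care is this last translation. The masses of the string sit at the \emph{left} endpoints of the length-segments, so the position solution is governed by the partial sums $\sum_{i=1}^{k-1}(\gd_i+\gB_i)$ rather than by $\sum_{i=1}^{k}(\gd_i+\gB_i)$; the two differ substantially when the strengths $\gB_k$ grow very rapidly, and only the former yields the dichotomy stated in (ii). Getting this off-by-one bookkeeping right, and verifying that the auxiliary contributions $\sum_k\gd_k^2Y_{2k-1}$ and $\sum_k\gd_k^3$ really are negligible once condition (ii) fails, is the main obstacle; by contrast the reduction to $B_{X,\gB}$, the transfer of deficiency indices, and the bound $n_\pm\le1$ follow routinely from results already in hand.
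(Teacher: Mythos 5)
Your proposal is correct and follows essentially the same route as the paper: reduction of $\rH_{X,\gB}$ to the Jacobi matrix $B_{X,\gB}$ via Theorem \ref{th:km_sa'}(i), then the Krein--Stieltjes string structure \eqref{eq:fac1}--\eqref{eq:fac2} combined with Hamburger's limit-point criterion, which you implement by testing the constant and position solutions at spectral parameter $0$ against the weighted $\ell^2$ condition. Your off-by-one bookkeeping (masses at the left endpoints, so the position solution is $\sum_{i<n}l_i$, matching condition (ii) exactly) is indeed what the factorization \eqref{eq:fac1} produces, and your separate treatment of $b=+\infty$ via \cite{BusStoWei95} correctly covers the case $\gd^*=\infty$, where Theorem \ref{th:km_sa'} is not available.
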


     \begin{remark}
     As distinguished from the case of $\delta$-interactions, by Theorem \ref{prop_01}(i), the operator $\rH_{X,\gB}$ is self-adjoint in $L^2(\R_+)$ for any $\gB\subset \R$ (cf. Example \ref{ex:2.8}).  This fact was first observed in \cite{BusStoWei95}. Let us also mention that statement (ii) provides the self-adjointness criterion in the case of a finite interval $[0,b)$, $b<\infty$.
     \end{remark}

\subsection{Semiboundedness}

Let $b=+\infty$ and $X=\{x_k\}_{k=1}^\infty$ be a strictly increasing sequence accumulating at $+\infty$. Consider  the following energy form
\begin{align}
\gt_{X,\gB,q}^0[f]:=\int_{\R_+}(|f'(x)|^2+q(x)|f(x)|^2)\, dx +\sum_{k=1}^\infty \frac{|f(x_k+)-f(x_k-)|^2}{\gB_k},\label{eq:3.12}\\
\quad \dom(\gt^0_{X,\gB,q})=\{f\in W^{1,2}(\R_+\setminus X)\cap L^2_{\comp}(\R_+):\, \gt_{X,\gB,q}^0[f]<\infty\}.\label{eq:3.13}
\end{align}
Integrating by parts, one gets that the form $\gt_{X,\gB,q}^0$
admits the representation
\be\label{eq:3.14}
\gt_{X,\gB,q}^0[f]=(\rH_{X,\gB,q}^0f,f)_{L^2},\quad f\in 
\dom(\rH_{X,\gB,q}^0),
\ee
and hence the form $\gt_{X,\gB,q}^0$ is closable whenever it is lower semibounded.

Firstly, let us mention that similar to the case of $\delta$-interactions, the analog of the Glazman--Povzner--Wienholtz Theorem holds true in the case of $\delta'$-interactions.

\begin{theorem}[\cite{KosMal12}]\label{th:povzner'}
If the Hamiltonian $\rH_{X,\gB,q}$ is lower semibounded, then it is self-adjoint.
\end{theorem}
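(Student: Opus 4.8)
The plan is to follow the classical Glazman--Povzner--Wienholtz scheme, adapted to the $\delta'$ interface conditions. Write $\rH:=\rH_{X,\gB,q}$ and let $\gamma$ denote its lower bound, so that $(\rH^0 f,f)_{L^2}\ge\gamma\|f\|^2$ for $f\in\dom(\rH^0)$. Recall from the work of Buschmann, Stolz and Weidmann that $n_+(\rH)=n_-(\rH)\le1$; hence $\rH$ is self-adjoint exactly when $n_\pm(\rH)=0$. Since $\rH$ is closed, symmetric and bounded below, for every real $\mu<\gamma$ the operator $\rH-\mu$ is bounded below by $\gamma-\mu>0$, so its range is closed and $L^2(\R_+)=\ran(\rH-\mu)\oplus\ker(\rH^*-\mu)$, whence $\dim\ker(\rH^*-\mu)=n_\pm(\rH)$. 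Thus it suffices to fix one $\mu<\gamma$ and show that every $u\in\ker(\rH^*-\mu)$ vanishes. Such a $u$ lies in $L^2(\R_+)$, solves $\tau_q u=\mu u$ on each subinterval, satisfies $u(0)=0$ together with the $\delta'$ interface conditions in \eqref{eq:h_b2}, and may be taken real-valued because $q$, $\mu$ and the $\gB_k$ are real.

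First I would introduce real cutoffs $\eta_n\in C^\infty_c[0,\infty)$ with $0\le\eta_n\le1$, $\eta_n\equiv1$ on $[0,n]$, $\supp\eta_n\subset[0,2n]$ and $\|\eta_n'\|_\infty\le C/n$. The heart of the argument is the identity
\be
\gt^0_{X,\gB,q}[\eta_n u]-\mu\,\|\eta_n u\|_{L^2(\R_+)}^2=\int_{\R_+}\eta_n'(x)^2\,u(x)^2\,dx ,
\ee
obtained by evaluating $(\tau_q u,\eta_n^2 u)_{L^2}=\mu\|\eta_n u\|^2$ through integration by parts on each interval $(x_{k-1},x_k)$ and comparing the outcome with the explicit expression \eqref{eq:3.12} for $\gt^0_{X,\gB,q}[\eta_n u]$. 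The crucial point is that, because $u$ satisfies the $\delta'$ conditions, the jump of $\eta_n u$ at $x_k$ equals $\eta_n(x_k)\gB_k u'(x_k)$, so the jump term $|(\eta_n u)(x_k+)-(\eta_n u)(x_k-)|^2/\gB_k$ in the form reproduces precisely the interface contribution $\eta_n(x_k)^2\gB_k u'(x_k)^2$ coming from the boundary terms of the integration by parts; once these interface terms and the commutator terms are collected, everything cancels except $\int_{\R_+}\eta_n'^2 u^2$. Since $|\eta_n'|\le C/n$ and $u\in L^2(\R_+)$, the right-hand side is bounded by $(C/n)^2\|u\|^2\to0$.

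To close the argument I would show that the lower bound propagates to the cutoff, i.e. $\gt^0_{X,\gB,q}[\eta_n u]\ge\gamma\|\eta_n u\|^2$. This does \emph{not} follow directly from semiboundedness of $\rH$, and it is the main obstacle: the function $\eta_n u$ belongs to the form domain $\dom(\gt^0_{X,\gB,q})\subset W^{1,2}(\R_+\setminus X)$ but not to $\dom(\rH^0)$, since multiplication by $\eta_n$ destroys the continuity of the derivative at those $x_k\in[n,2n]$ with $\eta_n'(x_k)\neq0$. I would resolve this by a localized approximation: as $\eta_n u$ has compact support it involves only finitely many interaction points, and near each of them one may modify $\eta_n u$ on a shrinking neighbourhood so as to restore the $\delta'$ conditions (forcing the common derivative value $[(\eta_n u)(x_k+)-(\eta_n u)(x_k-)]/\gB_k$) at a cost tending to zero in the form norm. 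The modified functions lie in $\dom(\rH^0)$, so the bound $(\rH^0\cdot,\cdot)\ge\gamma\|\cdot\|^2$ applies to them and, passing to the limit, yields $\gt^0_{X,\gB,q}[\eta_n u]\ge\gamma\|\eta_n u\|^2$; equivalently, one invokes closability of the semibounded quadratic form $f\mapsto(\rH^0 f,f)$ on $\dom(\rH^0)$ and identifies its closure with $\gt^0_{X,\gB,q}$ on compactly supported elements.

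Combining the two estimates gives $(\gamma-\mu)\|\eta_n u\|^2\le\int_{\R_+}\eta_n'^2 u^2\to0$, while $\|\eta_n u\|^2\to\|u\|^2$; since $\gamma-\mu>0$ this forces $u=0$. Hence $\ker(\rH^*-\mu)=\{0\}$, so $n_\pm(\rH)=0$ and $\rH_{X,\gB,q}$ is self-adjoint. I expect the interface bookkeeping in the key identity and, above all, the transfer of the lower bound from $\dom(\rH^0)$ to $\eta_n u$ to be the delicate points; the remaining steps are routine. Note that this approach uses neither $\gd^*<\infty$ nor $q\in L^\infty$, which is why it applies in the full generality of the statement.
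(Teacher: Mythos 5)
Your overall strategy --- reduce self-adjointness to $\ker(\rH_{X,\gB,q}^*-\mu)=\{0\}$ for a single real $\mu$ below the lower bound and run a Glazman--Povzner--Wienholtz cutoff argument --- is the right one (the review itself contains no proof of this theorem, only the citation of \cite{KosMal12}, but this is the expected scheme), and the parts you wrote out are correct: the identification $\dim\ker(\rH^*-\mu)=n_\pm(\rH)$ for $\mu<\gamma$, the description of elements of $\ker(\rH^*-\mu)$ via the maximal operator with the same interface conditions, and above all the key identity, whose bookkeeping I checked: integration by parts produces the boundary contribution $-\eta_n(x_k)^2\gB_k u'(x_k)^2$ at each $x_k$, while the jump term of the form produces $|\eta_n(x_k)\gB_k u'(x_k)|^2/\gB_k=+\eta_n(x_k)^2\gB_k u'(x_k)^2$, so everything cancels except $\int_{\R_+}\eta_n'^2u^2\,dx$. (Incidentally, you do not need the Buschmann--Stolz--Weidmann bound $n_\pm\le1$ at all: a semibounded symmetric operator automatically has $n_+=n_-$, and you prove the common value is zero.)

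The gap sits exactly in the step you flag as the main obstacle, and neither of your two proposed repairs closes it as stated. To invoke the hypothesis $(\rH^0f,f)\ge\gamma\|f\|^2$ you need approximants lying in $\dom(\rH^0_{X,\gB,q})$, and membership there requires not only the $\delta'$ interface conditions but also $\tau_qf=-f''+qf\in L^2(\R_+)$. The theorem lives at the generality $q\in L^1_{\loc}(\R_+)$ (for $q\in L^\infty$ self-adjointness is automatic by \cite{BusStoWei95} and there is nothing to prove); for such $q$, an additive correction by a generic smooth bump $w$ supported in a shrinking neighbourhood of $x_k$ generally has $qw\notin L^2$, so the modified function is \emph{not} in $\dom(\rH^0)$ no matter how small its form-norm cost. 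Your alternative --- closability of $f\mapsto(\rH^0f,f)$ plus ``identification of its closure with $\gt^0_{X,\gB,q}$ on compactly supported elements'' --- assumes precisely what must be proved, namely that $\eta_nu$ can be reached from $\dom(\rH^0)$ with the correct limiting form value. The fix is to make the correction multiplicative rather than additive: deform the cutoff, not the function. Concretely, choose $\eta_n$ from the outset to be constant on a neighbourhood of every $x_k$ (this is compatible with $\|\eta_n'\|_\infty\le C/n$, since only finitely many $x_k$ lie in $[n,2n]$). Then $(\eta_nu)'(x_k\pm)=\eta_n(x_k)u'(x_k\pm)$, so both $\delta'$ conditions are inherited from $u$, and $\tau_q(\eta_nu)=\mu\eta_nu-2\eta_n'u'-\eta_n''u\in L^2$ because $u,u'$ are bounded on $[0,2n]$; hence $\eta_nu\in\dom(\rH^0_{X,\gB,q})$ outright, the bound $\gt^0_{X,\gB,q}[\eta_nu]=(\rH^0\eta_nu,\eta_nu)\ge\gamma\|\eta_nu\|^2$ needs no limiting procedure, and your identity gives $(\gamma-\mu)\|\eta_nu\|^2\le C^2n^{-2}\|u\|^2\to0$, forcing $u=0$ exactly as you concluded.
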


Combining this theorem with the representation \eqref{eq:3.14}, we immediately arrive at the following result.

\begin{corollary}
 If the form $\gt_{X,\gB,q}^0$ is lower semibounded, then it is closable and the self-adjoint operator associated with its closure $\gt_{X,\gB,q}:=\overline{\gt_{X,\gB,q}^0}$ coincides with $\rH_{X,\gB,q}$.
\end{corollary}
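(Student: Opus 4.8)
The plan is to identify the operator produced by the first representation theorem for the closed form with $\rH_{X,\gB,q}$, using Theorem \ref{th:povzner'} to collapse the a priori gap between the form closure and the operator closure. Closability itself is already in hand: by \eqref{eq:3.14} the form $\gt_{X,\gB,q}^0$ is the quadratic form of the symmetric operator $\rH_{X,\gB,q}^0$, so lower semiboundedness makes it closable. Write $\gt_{X,\gB,q}=\overline{\gt_{X,\gB,q}^0}$ for the closure and let $T=T^*$ be the unique lower semibounded self-adjoint operator associated with $\gt_{X,\gB,q}$ by the first representation theorem (\cite[Theorem VI.2.1]{Kato66}). The whole task is then to prove $T=\rH_{X,\gB,q}$.

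The first step is the operator inclusion $\rH_{X,\gB,q}\subseteq T$. Here I would upgrade the diagonal identity \eqref{eq:3.14} to the sesquilinear integration-by-parts identity $\gt_{X,\gB,q}^0[f,v]=(\rH_{X,\gB,q}^0 f,v)_{L^2}$, valid for every $f\in\dom(\rH_{X,\gB,q}^0)$ and every $v$ in the form core $\dom(\gt_{X,\gB,q}^0)$; this is the same integration by parts that produced \eqref{eq:3.14}, now carried out against a general test function and using the interface conditions in \eqref{eq:h_b2} together with $f(0)=0$ to annihilate all boundary terms. By the characterization of $T$ in the representation theorem this forces $f\in\dom(T)$ and $Tf=\rH_{X,\gB,q}^0 f$, that is, $\rH_{X,\gB,q}^0\subseteq T$. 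Since $T$ is self-adjoint, hence closed, passing to the closure gives $\rH_{X,\gB,q}=\overline{\rH_{X,\gB,q}^0}\subseteq T$.

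With the inclusion in place the argument becomes soft. The operator $\rH_{X,\gB,q}$ is symmetric, being the closure of a symmetric operator, and it is a restriction of the lower semibounded operator $T$, so it is itself lower semibounded with the same lower bound. Theorem \ref{th:povzner'} then applies and tells us that $\rH_{X,\gB,q}$ is in fact self-adjoint. Finally, a self-adjoint operator admits no proper symmetric extension, so two self-adjoint operators satisfying $\rH_{X,\gB,q}\subseteq T$ must coincide; this yields $\rH_{X,\gB,q}=T$ and identifies the self-adjoint operator associated with $\gt_{X,\gB,q}$ as $\rH_{X,\gB,q}$.

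I expect the only genuinely delicate point to be the sesquilinear identity in the first step, namely checking that the boundary and interface contributions from integration by parts cancel for test functions $v$ ranging over the full form domain \eqref{eq:3.13} rather than merely over $\dom(\rH_{X,\gB,q}^0)$; this is exactly where the $\delta'$-matching conditions and the Dirichlet condition at the origin are actually used. Everything downstream requires no estimates: self-adjointness of $\rH_{X,\gB,q}$ is handed to us by Theorem \ref{th:povzner'}, and the final identification is just maximality of self-adjoint operators.
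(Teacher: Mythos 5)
Your proposal is correct and takes essentially the same route as the paper: the paper obtains this corollary precisely by combining Theorem~\ref{th:povzner'} with the representation \eqref{eq:3.14} (closability and lower semiboundedness of $\rH_{X,\gB,q}$ read off from \eqref{eq:3.14}, self-adjointness from Theorem~\ref{th:povzner'}, identification of the form operator by maximality), which are exactly your two ingredients. Your write-up simply makes explicit the standard form-theoretic steps (first representation theorem, the polarized integration-by-parts identity, maximality of self-adjoint operators) that the paper compresses into the word ``immediately''.
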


Next we state the counterpart of Theorem \ref{th:brinck}.

\begin{theorem}[\cite{KosMal12}]\label{th:km_sb'}
Assume that $d^*<\infty$ and  there exist positive constants $C_0$, $C_1>0$ such that
%
%
      \begin{equation}\label{eq:b_klmn}
\frac{1}{\gd_k}\int_{x_{k-1}}^{x_k}q_-(x)dx\le C_0, \quad
\frac{1}{\gB_k^-}\le C_1 \min\{\gd_{k},\gd_{k+1}\},\quad k\in\N.\footnote{Here we formally set $\frac{1}{\gB_k^-}:=0$ if $\gB_k^-=0$, i.e., the corresponding inequality automatically holds true if $\gB_k$ is positive.}
   \end{equation}
 Then: 
\begin{itemize}
\item[(i)] The form $\gt_{X,\gB,q}^0$ is lower semibounded and the Hamiltonian $\rH_{X,\gB,q}$ is lower semibounded and self-adjoint,

\item[(ii)] If both the potential $q$ and the sequence $\gB$ are negative, then the conditions \eqref{eq:b_klmn} are also necessary for the form $\gt_{X,\gB,q}^0$ (and hence for the operator $\rH_{X,\gB,q}$) to be lower semibounded.
\end{itemize}
\end{theorem}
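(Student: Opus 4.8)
The plan is to prove (i) via the KLMN theorem, taking the Neumann kinetic form as the unperturbed reference form, and then to upgrade lower semiboundedness to self-adjointness by Theorem~\ref{th:povzner'}. Concretely, I would write $\gt_{X,\gB,q}^0 = \gt_N + \gt_+ - \gt_-$, where $\gt_N[f]=\int_{\R_+}|f'|^2\,dx$ on $W^{1,2}(\R_+\setminus X)$ is the closed, nonnegative form of the Neumann realization $\rH_{X,0}^N$, the nonnegative remainder is
\be
\gt_+[f]:=\int_{\R_+}q_+(x)|f(x)|^2\,dx+\sum_{k}\frac{|f(x_k+)-f(x_k-)|^2}{\gB_k^+},
\ee
and the negative part (using the footnote convention) is
\be
\gt_-[f]:=\int_{\R_+}q_-(x)|f(x)|^2\,dx+\sum_{k}\frac{|f(x_k+)-f(x_k-)|^2}{\gB_k^-}.
\ee
Everything reduces to showing that $\gt_-$ is infinitesimally form bounded with respect to $\gt_N$: then KLMN gives that $\gt_N+\gt_+-\gt_-$ is lower semibounded and closed, whence by \eqref{eq:3.14} and the corollary following Theorem~\ref{th:povzner'} the operator $\rH_{X,\gB,q}$ is lower semibounded and self-adjoint.

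For the potential term I would note that the first hypothesis in \eqref{eq:b_klmn} together with $\gd^*<\infty$ forces $\sup_{x>0}\int_x^{x+1}q_-(t)\,dt\le C_0(1+2\gd^*)<\infty$, since every unit interval meets intervals of total length at most $1+2\gd^*$. This is exactly the Brasche/Birman--Schwinger condition ensuring infinitesimal form-boundedness of $q_-$ with respect to $-d^2/dx^2$, hence with respect to $\gt_N$, just as in the $\delta$-case of Theorem~\ref{th:brinck}. The new ingredient is the jump term, which I would handle with the one-dimensional trace inequality: for $f\in W^{1,2}[x_{k-1},x_k]$ and $0<t\le\gd_k$,
\be
|f(x_k-)|^2\le \frac{2}{t}\int_{x_k-t}^{x_k}|f|^2\,dx+2t\int_{x_k-t}^{x_k}|f'|^2\,dx,
\ee
together with the mirror estimate for $|f(x_k+)|^2$ on $[x_k,x_{k+1}]$. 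Multiplying by $\tfrac{1}{\gB_k^-}\le C_1\min\{\gd_k,\gd_{k+1}\}$, using $|f(x_k+)-f(x_k-)|^2\le 2|f(x_k+)|^2+2|f(x_k-)|^2$, and choosing $t=\varepsilon\min\{\gd_k,\gd_{k+1}\}$ with $\varepsilon\in(0,\tfrac12]$ (so that the subintervals $[x_k-t,x_k]$ and $[x_k,x_k+t]$ are pairwise disjoint), the derivative coefficient becomes $2\varepsilon\big(\min\{\gd_k,\gd_{k+1}\}\big)^2\le 2\varepsilon(\gd^*)^2$ and the $L^2$ coefficient becomes $2/\varepsilon$. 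Summing over $k$ (each interval carries two traces) yields
\be
\sum_{k}\frac{|f(x_k+)-f(x_k-)|^2}{\gB_k^-}\le 8C_1\varepsilon(\gd^*)^2\,\gt_N[f]+\frac{8C_1}{\varepsilon}\,\|f\|_{L^2}^2,
\ee
and letting $\varepsilon\downarrow 0$ gives the required infinitesimal bound.

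The main obstacle, and the point where $\gd^*<\infty$ is indispensable, is making the coefficient of $\gt_N$ uniformly small: the trace estimate produces a derivative coefficient proportional to $\big(\min\{\gd_k,\gd_{k+1}\}\big)^2$, and only the bound $\gd_k\le\gd^*$ lets one absorb it into $\varepsilon(\gd^*)^2$ uniformly in $k$. Without an upper bound on the spacings this step breaks down, consistent with the role of $\gd^*<\infty$ already seen in Theorems~\ref{th:km_sa} and \ref{th:km_sa'}. A secondary point is the disjointness bookkeeping, which is exactly why I take $\varepsilon\le\tfrac12$.

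For the necessity part (ii), with $q\le 0$ and all $\gB_k<0$, I would argue by contradiction with localized trial functions. If the second condition in \eqref{eq:b_klmn} fails, choose a subsequence along which $\big(\gB_k^-\min\{\gd_k,\gd_{k+1}\}\big)^{-1}\to\infty$ and take $f_k$ piecewise linear, supported on $[x_{k-1},x_{k+1}]$, vanishing at the endpoints and equal to $+1$ at $x_k-$ and $-1$ at $x_k+$ (jump $2$). Then $\gt_N[f_k]=\gd_k^{-1}+\gd_{k+1}^{-1}\le 2/\min\{\gd_k,\gd_{k+1}\}$ and $\|f_k\|_{L^2}^2\asymp \gd_k+\gd_{k+1}$, while the jump term equals $-4/\gB_k^-$; since $-4/\gB_k^-$ dominates $\gt_N[f_k]$ along the subsequence, $\gt_{X,\gB,q}^0[f_k]/\|f_k\|_{L^2}^2\to-\infty$, contradicting lower semiboundedness. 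If instead the first condition fails, i.e. $\tfrac{1}{\gd_k}\int_{x_{k-1}}^{x_k}q_-\to\infty$ along a subsequence, I would use single-interval bumps vanishing at $x_{k-1},x_k$ (so no jump is activated) and invoke the classical Molchanov-type argument to make $\int q_-|f|^2$ dominate $\int|f'|^2+\|f\|_{L^2}^2$, again forcing the form below any bound. I expect the jump estimate in step (i) to be the technically most delicate part, whereas (ii) is a routine, if careful, construction of test functions.
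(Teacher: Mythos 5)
Your part (i) follows essentially the route the paper's framework suggests (and that \cite{KosMal12} takes): split off the Neumann form $\gt_N$ on $W^{1,2}(\R_+\setminus X)$, prove infinitesimal form boundedness of the negative part by the one-dimensional trace inequality, apply KLMN, and upgrade lower semiboundedness to self-adjointness via Theorem~\ref{th:povzner'} and \eqref{eq:3.14}. Your jump-term estimate and the role you assign to $\gd^*<\infty$ are correct. One step is stated with the implication backwards, though: you reduce the potential term to Brasche's condition \eqref{eq:brinck} and conclude form boundedness with respect to $-d^2/dx^2$, ``hence with respect to $\gt_N$''. Since $\dom(\gt_N)=W^{1,2}(\R_+\setminus X)$ is strictly \emph{larger} than $W^{1,2}(\R_+)$, an estimate proved for jump-free functions says nothing about functions with jumps (the unit-interval Sobolev estimates behind \eqref{eq:brinck} cannot be run across the points $x_k$). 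The repair is immediate with your own tools: apply the trace inequality on each interval $[x_{k-1},x_k]$ separately, using the per-interval bound $\frac{1}{\gd_k}\int_{x_{k-1}}^{x_k}q_-\,dx\le C_0$ — which is exactly why the hypothesis is stated per interval and is strictly stronger than \eqref{eq:brinck}, as the remark following the theorem notes.

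Part (ii) has a genuine gap. Failure of the second condition in \eqref{eq:b_klmn} gives a subsequence with $\bigl(\gB_k^-\min\{\gd_k,\gd_{k+1}\}\bigr)^{-1}\to\infty$, i.e.\ $\frac{1}{\gB_k^-}\gg\min\{\gd_k,\gd_{k+1}\}$; but for your tent functions to force $\gt_{X,\gB,q}^0[f_k]\to-\infty$ you need $\frac{4}{\gB_k^-}\gg\frac{1}{\gd_k}+\frac{1}{\gd_{k+1}}$, i.e.\ $\frac{1}{\gB_k^-}\gg\frac{1}{\min\{\gd_k,\gd_{k+1}\}}$. These are very different once the spacings shrink. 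Concretely, take $q\equiv 0$, $\gB_k\equiv -1$, $\gd_k\to 0$: the condition fails, yet for your trial functions $\gt^0_{X,\gB,q}[f_k]=\frac{1}{\gd_k}+\frac{1}{\gd_{k+1}}-4\to+\infty$, so no contradiction is produced — even though the form really is unbounded below. The missing idea is that $W^{1,2}(\R_+\setminus X)$ contains \emph{piecewise constant} functions, so jumps cost no kinetic energy. Test with $f=\chi_{[x_{k-1},x_k]}$: then $\gt_N[f]=0$, $\|f\|_{L^2}^2=\gd_k$, and for $q\le 0$, $\gB_j<0$, lower semiboundedness $\gt^0_{X,\gB,q}\ge -C$ gives
\be
\gt^0_{X,\gB,q}[f]=-\int_{x_{k-1}}^{x_k}q_-(x)\,dx-\frac{1}{\gB_{k-1}^-}-\frac{1}{\gB_k^-}\ \ge\ -C\gd_k ,
\ee
so that $\int_{x_{k-1}}^{x_k}q_-\,dx+\frac{1}{\gB_{k-1}^-}+\frac{1}{\gB_k^-}\le C\gd_k$. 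Since all three terms on the left are nonnegative, this yields the first condition with $C_0=C$, and combining the estimates from $[x_{k-1},x_k]$ and $[x_k,x_{k+1}]$ yields $\frac{1}{\gB_k^-}\le C\min\{\gd_k,\gd_{k+1}\}$, i.e.\ the second condition with $C_1=C$; in the example above this test function gives $\gt^0[\chi]/\|\chi\|^2=-2/\gd_k\to-\infty$ at once. This is precisely the construction behind Lemma~\ref{lem:nec}. The same remark repairs your treatment of the first condition: bumps vanishing at $x_{k-1},x_k$ can miss $q_-$ concentrated near those endpoints, whereas the characteristic function captures all of $q_-$ at zero kinetic cost.
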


\begin{remark}
(i) Theorem \ref{th:km_sb'} immediately implies that the operator $\rH_{X,\gB,q}$ is self-adjoint and lower semibounded if $q$ is bounded from below,  $q(x)\ge -c$, and $\gB$ is a positive sequence.

(ii) Note also that conditions \eqref{eq:b_klmn} are only sufficient if $q$ and $\gB$ take values of both signs. Let us also mention that \eqref{eq:b_klmn} imply the corresponding conditions \eqref{eq:brinck} for $q$. However, the converse is not true.
\end{remark}

Finally, let us present some simple conditions, which are necessary for the operator $\rH_{X,\gB,q}$ to be lower semibounded (for further conditions see \cite{KosMal09} and \cite{KosMal12}).

\begin{lemma}[\cite{KosMal12}]\label{lem:nec}
Let $q= \bold{0}$.
If the form  $\gt_{X,\gB}^0$ is lower semibounded, that is $\gt_{X,\gB}^0\ge -C$ for some $C\ge 0$, then:
\begin{enumerate}
\item[(i)] for all $\gB_k^-\neq 0$
  \begin{equation}\label{eq:inf_b}
\frac{1}{\gB_k^-}\le 1+\frac{C}{3},\qquad k\in\N,
  \end{equation}
\item[(ii)]
  \begin{equation}\label{eq:nec_1}
\frac{1}{\gB_j^-}:=\frac{1}{|\gB_{k_j}|}\le C\min \{\gd_j^-,\gd_{j+1}^-\},\qquad 
 \quad j\in\N,
  \end{equation}
where $X^-=\{x_j^-\}_{j=1}^\infty:=\{x_{k_j}\}$  is the subsequence supporting negative intensities and $\gd_j^-:=x_j^--x_{j-1}^-=x_{k_j}-x_{k_{j-1}}$.
\end{enumerate}
\end{lemma}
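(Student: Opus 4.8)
The plan is to work directly with the energy form $\gt_{X,\gB}^0$ from \eqref{eq:3.12}--\eqref{eq:3.13} (with $q=\bold{0}$) and extract necessary conditions by testing the semiboundedness inequality $\gt_{X,\gB}^0[f]\ge -C\|f\|_{L^2}^2$ against carefully chosen trial functions $f$ supported near a single interaction center. Since only the centers $x_{k_j}$ carrying negative intensities $\gB_{k_j}$ can destabilize the form (the terms with $\gB_k>0$ contribute nonnegatively), I would restrict attention to one such point $x_j^-=x_{k_j}$ at a time and design $f$ so that the positive Dirichlet energy $\int|f'|^2$ and the $L^2$-mass $\int|f|^2$ are as small as possible while the jump $|f(x_j^-+)-f(x_j^-)|$ is forced to be large. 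This is the standard variational strategy for deriving necessary conditions from a lower bound.

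For part (i), the cleanest choice is a trial function localized to the two subintervals adjacent to $x_j^-$ that produces a prescribed jump. I would take $f$ to be piecewise linear on $[x_{k_j-1},x_{k_j}]$ and $[x_{k_j},x_{k_j+1}]$, say rising linearly from $0$ to some value on the left piece and on the right piece set up to realize a unit jump at $x_{k_j}$, and zero elsewhere. The jump term then contributes $\tfrac{1}{\gB_j^-}$ while the Dirichlet energy contributes a term of order $\tfrac{1}{\gd}$ over each subinterval of length $\gd$ and the $L^2$-norm a term of order $\gd$. Optimizing (or simply normalizing the jump to $1$ and using the worst case) yields an inequality of the form $\tfrac{1}{\gB_j^-}\le 1+\tfrac{C}{3}$, where the constants $1$ and $\tfrac{1}{3}$ arise from the exact integrals $\int_0^1 (t')^2=1$ of the linear ramp and $\int_0^1 t^2\,dt=\tfrac13$ of its square; I expect the factor $3$ in \eqref{eq:inf_b} is precisely this $L^2$-integral of a normalized linear profile. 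The key point is that the Dirichlet energy of a unit-jump linear profile can be made to contribute a fixed constant (here $1$) independent of the interval length, which is what gives the clean uniform bound.

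For part (ii), which is sharper and $\gd$-dependent, I would instead use a trial function that is \emph{constant} on each of the two neighbouring subintervals $[x_{k_{j-1}},x_{k_j})$ and $[x_{k_j},x_{k_{j+1}}]$, so that $\int|f'|^2=0$ on these pieces and only the jump term and the $L^2$-mass survive. With $f$ taking opposite constant values $\pm 1$ on the two sides to produce a jump of size $2$, normalized so the jump term equals $\tfrac{1}{\gB_j^-}$, the $L^2$-mass is of order $\gd_j^-+\gd_{j+1}^-$, and the inequality $\gt^0\ge -C\|f\|^2$ forces $\tfrac{1}{\gB_j^-}\le C\,\bigl(\gd_j^-+\gd_{j+1}^-\bigr)$, or after optimizing the placement of the constant values, the asymmetric bound $\tfrac{1}{\gB_j^-}\le C\min\{\gd_j^-,\gd_{j+1}^-\}$ as in \eqref{eq:nec_1}. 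To get the \emph{minimum} rather than the sum I would put all the $L^2$-mass on the shorter of the two adjacent intervals, i.e. choose the constant nonzero only on the side of smaller length and zero on the other, which still realizes the full jump while minimizing $\int|f|^2$.

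The main obstacle, and the step requiring genuine care, is ensuring the trial functions actually lie in the form domain $\dom(\gt^0_{X,\gB})$ and do not create uncontrolled jump contributions at the \emph{neighbouring} negative centers. Because $X^-$ is only a subsequence of $X$, between two consecutive negative centers $x_{k_{j-1}}$ and $x_{k_j}$ there may be several positive-intensity points, and a piecewise-constant or piecewise-linear $f$ will in general jump at those too, adding nonnegative terms $\tfrac{|[f]|^2}{\gB_k}>0$ that weaken the bound. The remedy is to keep $f$ supported strictly within $(x_{k_{j-1}},x_{k_{j+1}})$ and to arrange $f$ to be \emph{continuous} across every positive center in that range while jumping only at $x_{k_j}$ itself; since continuity can always be imposed by adjusting the profile on each subinterval, this is a matter of bookkeeping rather than a serious difficulty, but it is where the proof must be written carefully so that the extracted constant is genuinely the one appearing in \eqref{eq:inf_b} and \eqref{eq:nec_1}.
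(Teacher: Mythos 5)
Your part (ii) is essentially correct and coincides with the intended argument: testing the inequality $\gt_{X,\gB}^0[f]\ge -C\|f\|_{L^2}^2$ with the characteristic function of the shorter of the two intervals $(x_{k_{j-1}},x_{k_j}]$ and $(x_{k_j},x_{k_{j+1}}]$ between \emph{consecutive negative} centers gives zero Dirichlet energy, $L^2$-mass equal to the length of that interval, and two jump terms which both sit at negative centers and hence are non-positive; dropping one of them yields $\frac{1}{\gB_j^-}\le C\gd_j^-$ and, with the other interval, $\frac{1}{\gB_j^-}\le C\gd_{j+1}^-$, which is \eqref{eq:nec_1}. Note, though, that your closing worry points in the wrong direction: jumps at neighbouring \emph{negative} centers can only help, since they add non-positive terms to the form, and your piecewise-constant trial functions are automatically continuous at every interior point of $X$, so no positive-center jump terms ever appear. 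There is no delicate bookkeeping to do there, and in fact your own trial function for (ii) \emph{must} jump at the neighbouring negative centers rather than be "supported strictly within" the open interval.

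Part (i), however, contains a genuine gap. You confine the trial function to the two subintervals adjacent to the chosen center $x_k$ (of lengths $\gd_k$ and $\gd_{k+1}$) and assert that the Dirichlet energy of a unit-jump profile "can be made to contribute a fixed constant (here $1$) independent of the interval length". That assertion is false: if $f$ vanishes at $x_{k-1}$ and $x_{k+1}$ (as it must, since those may carry positive intensities) and has one-sided values $a_{\pm}$ at $x_k$ with $a_+-a_-=1$, then Cauchy--Schwarz forces
\begin{equation*}
\int_{x_{k-1}}^{x_{k+1}}|f'(x)|^2\,dx\ \ge\ \frac{a_-^2}{\gd_k}+\frac{a_+^2}{\gd_{k+1}}\ \ge\ \frac{1}{\gd_k+\gd_{k+1}},
\end{equation*}
so every inequality extractable from such localized test functions has the form $\frac{1}{\gB_k^-}\le \int|f'|^2+C\|f\|_{L^2}^2$ with right-hand side at least $\frac{1}{\gd_k+\gd_{k+1}}$; already for $C=0$ this gives only $\frac{1}{\gB_k^-}\le\frac{1}{\gd_k+\gd_{k+1}}$, which is strictly weaker than the claimed $\frac{1}{\gB_k^-}\le 1$ whenever $\gd_k+\gd_{k+1}<1$. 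The constants $1$ and $\frac13$ in \eqref{eq:inf_b} pin down the test function that must be used: $f\equiv 0$ on $[0,x_k)$ and $f(x)=\max\{1-(x-x_k),0\}$ for $x\ge x_k$, a ramp of length exactly $1$ \emph{regardless of where the neighbouring points of $X$ lie}. This $f$ crosses every point of $X$ in $(x_k,x_k+1)$ continuously, so those points contribute nothing, while $\int_{\R_+}|f'|^2\,dx=1$, $\|f\|_{L^2}^2=\frac13$, and the single jump contributes $\frac{1}{\gB_k}=-\frac{1}{\gB_k^-}$; hence $1-\frac{1}{\gB_k^-}=\gt_{X,\gB}^0[f]\ge-\frac{C}{3}$, which is \eqref{eq:inf_b}. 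In short, the fix is to abandon the localization in (i): the uniform bound is obtained precisely because continuity across interaction points costs nothing, so the ramp can (and must) be taken of unit length.
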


\subsection{Spectral types}
Hamiltonians with periodically arranged $\delta'$-interactions were first discussed by Gesztesy and Holden in \cite{GesHol87}. Namely, they investigated in great detail the spectral properties of $\rH_{X,\gB}$ in the cases when $\gA_k\equiv \gA\in \R$ and $X=a\Z$ (crystal) or $X=a\N$ (half-crystal). Also, in \cite{GesHol87}, it was studied how the introduction of impurities affects spectral properties of crystals. The analysis of various types of ordered alloys, both deterministic and random, for this model was done in \cite{GesHolKir87} where, e.g., the Saxon--Hunter conjecture \cite{SaxHun49},
 concerning gaps in the spectrum was proved.   For a comprehensive treatment of these models we refer to the monograph \cite{AGHH05}.
 The main aim of this subsection is to review recent developments in the case $\gd_*=0$.

{\bf Discreteness.} 
%
%
 Using Theorem \ref{th:km_sa'}(iii) and the Kac--Krein discreteness criterion \cite{KK58}, one can prove the following result.

 \begin{proposition}[\cite{KosMal09}]\label{prop:discr1}
Let $\cI=\R_+$ and $\gd_k\to 0$. The spectrum of the operator $\rH_{X,\gB}$ is not discrete if at least one of the following conditions hold:
\begin{itemize}
\item[(i)] \quad $\lim_{k\to\infty}x_k\sum_{j=k}^\infty\gd_j^3>0;$
\item[(ii)] \quad $\gB_k\geq -C\gd_k^3$, \quad $k\in\N$, \quad $C> 0;$
\item[(iii)] \quad $\gB_k^{-}\le -C(\gd_k^{-1}+\gd_{k+1}^{-1})$, \quad $k\in\N$, \quad $C> 0$. 
    \end{itemize}
    \end{proposition}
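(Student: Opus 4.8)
The plan is to translate the discreteness question for $\rH_{X,\gB}$ into a discreteness question for the associated Jacobi matrix $B_{X,\gB}$ via Theorem~\ref{th:km_sa'}(iii), and then exploit the string structure of $B_{X,\gB}$ exhibited in the factorization \eqref{eq:fac1}--\eqref{eq:fac2}. Since $q=\mathbf 0$ and $\gd_k\to 0$, part (iii) of Theorem~\ref{th:km_sa'} tells us that the spectrum of $\rH_{X,\gB}$ is discrete if and only if the spectrum of $B_{X,\gB}$ is discrete. Because $B_{X,\gB}$ is (when the $\gB_k$ are positive) unitarily related to the Krein--Stieltjes string operator with lengths $l_{2k-1}=\gd_k$, $l_{2k}=\gB_k$ and masses $m_{2k-1}=m_{2k}=\gd_k$, the Kac--Krein discreteness criterion \cite{KK58} applies directly: the string has discrete spectrum precisely when a certain moment-type integral $\int y\,d\mathcal M(y)$ (equivalently $\lim_{k}\, y_k\sum_{j\ge k} l_j m_j$) is finite. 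Thus each sufficient condition for \emph{non}-discreteness should follow by showing that the corresponding Kac--Krein quantity diverges, or that a suitable test sequence produces infinitely many eigenvalues in a bounded interval.

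First I would treat condition (i) as the model case: rewrite the Kac--Krein integral in terms of the string data and observe that, with $y_k\sim x_k$ (up to the interlaced $\gB$-lengths) and $l_jm_j$ of order $\gd_j^3$, the divergence of the criterion is exactly the statement $\lim_{k\to\infty}x_k\sum_{j\ge k}\gd_j^3>0$. This is the cleanest link and should be essentially a direct application of \cite{KK58} to the factorized matrix. I expect (i) to serve as the structural backbone from which (ii) and (iii) follow by comparison.

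Next I would derive (ii) and (iii) as perturbative/comparison consequences. For (ii), the hypothesis $\gB_k\ge -C\gd_k^3$ forces the negative part of the interaction to be dominated by $\gd_k^3$; the strategy is to compare $B_{X,\gB}$ with the pure Neumann decomposition $\rH^N_{X}$ (whose spectrum is the point spectrum of the Neumann pieces on $[x_{k-1},x_k]$, accumulating because $\gd_k\to 0$) and to show that a $\gB$ bounded below by $-C\gd_k^3$ cannot break the accumulation of these Neumann eigenvalues into a discrete set — essentially a relatively form-small perturbation argument in the string variables. For (iii), the condition $\gB_k^-\le -C(\gd_k^{-1}+\gd_{k+1}^{-1})$ makes the diagonal entries $\frac{\gd_k^{-1}}{\gB_k}+\gd_k^{-2}$ of $B_{X,\gB}$ bounded above in absolute value in a way that prevents the diagonal from escaping to $+\infty$, which is necessary for discreteness of a Jacobi matrix; I would make this precise by exhibiting a bounded sequence of approximate eigenvectors (or by a direct quadratic-form estimate) witnessing essential spectrum in a finite region.

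The main obstacle will be handling the \emph{signs} of $\gB_k$ cleanly: the string interpretation in \eqref{eq:fac1}--\eqref{eq:fac2} and the Kac--Krein criterion are tailored to positive masses/lengths, i.e.\ to $\gB_k>0$, whereas conditions (ii) and (iii) are interesting precisely when $\gB_k$ is negative. The delicate part is therefore to justify that the non-discreteness conclusion survives when some $\gB_k<0$, either by a limiting argument from the positive case, by a form-comparison that isolates $\gB_k^-$, or by invoking Theorem~\ref{th:km_sa'} together with a semiboundedness bookkeeping from Lemma~\ref{lem:nec}. Once the sign issue is controlled, each of the three conditions reduces to verifying divergence of one explicit sum, which is routine.
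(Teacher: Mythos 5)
Your overall frame---passing to $B_{X,\gB}$ via Theorem~\ref{th:km_sa'}(iii) (self-adjointness is automatic here by Theorem~\ref{prop_01}(i), since $b=+\infty$) and then using the string factorization \eqref{eq:fac1}--\eqref{eq:fac2} with the Kac--Krein criterion---is indeed the paper's route, but the execution breaks down precisely at your designated ``model case'' (i). First, the arithmetic is wrong: $l_{2j-1}m_{2j-1}=\gd_j^2$ and $l_{2j}m_{2j}=\gB_j\gd_j$, never $\gd_j^3$, and the Kac--Krein quantity for the string attached to $B_{X,\gB}$ is (accumulated mass)$\,\times\,$(remaining length) $\approx 2x_k\sum_{j>k}(\gd_j+\gB_j)$, which depends on $\gB$. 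Second, and structurally, (i) asserts non-discreteness for \emph{every} $\gB$ of arbitrary sign, so no direct application of Kac--Krein to this $\gB$-dependent string (which in any case requires positive lengths) can prove it; (i) is the delicate case, not the backbone, and the roles you assign are reversed. The mechanism that actually produces $\gd_j^3$ is a $\gB$-\emph{independent} test subspace: take $u$ with $\bigl((I+U^*)R_X^{-1}u\bigr)_{2k}=0$ for all $k$ (in Hamiltonian language: functions continuous at every $x_k$ with $f'(x_k)=0$, which lie in $\dom(\rH_{X,\gB})$ for \emph{every} $\gB$). On this subspace $D_{X,\gB}^{-1}$ acts only through its $\gd$-entries, so $B_{X,\gB}u$ does not depend on $\gB$, and writing $v=R_X^{-1}u$, $c_k:=v_{2k}$, $c_0:=-v_1$, one computes $\|B_{X,\gB}u\|^2=2\sum_k|c_k-c_{k-1}|^2\gd_k^{-3}$ against $\|u\|^2=\sum_k\gd_k(|c_{k-1}|^2+|c_k|^2)$. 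This is the quadratic form of an auxiliary string with lengths $\gd_k^3$ and masses $\approx 2\gd_k$, and Kac--Krein applied to \emph{that} string yields exactly the dichotomy $\lim_k x_k\sum_{j\ge k}\gd_j^3=0$ versus $>0$; an infinite-dimensional subspace on which $\|B_{X,\gB}u\|\le C\|u\|$ then excludes purely discrete spectrum. Note that graph-norm bounds are essential here: $B_{X,\gB}$ need not be semibounded, and a quadratic-form bound on an infinite-dimensional subspace proves nothing about discreteness of a non-semibounded operator.

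Your plans for (ii) and (iii) would also fail as stated. For (ii), comparison with $\rH_{X}^N$ is the wrong comparison ($\gB_k\to 0$ means continuity, i.e.\ closeness to the \emph{free} Laplacian, not to Neumann) and is not ``form-small'': $\gB_k\sim-C\gd_k^3$ gives $1/\gB_k\sim-C^{-1}\gd_k^{-3}$, an enormous negative form perturbation; moreover, if infinitely many $\gB_k$ are negative, (ii) forces $\gB_k^-\le C\gd_k^3\to0$, which violates the necessary condition of Lemma~\ref{lem:nec}(i), so the operator is then not lower semibounded and min--max is unavailable. In fact (ii) is the case where Kac--Krein applies \emph{directly}: $\gB_k+\gd_k\ge\gd_k(1-C\gd_k^2)>0$ for all large $k$, so after a finite-rank correction $B_{X,\gB}$ is a genuine Stieltjes-string matrix whose total length $\sum_j(\gd_j+\gB_j)\ge\tfrac12\sum_j\gd_j=\infty$ and total mass $2\sum_j\gd_j=\infty$ are both infinite, which excludes discreteness (Weyl's theorem handles the finite-rank part). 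For (iii), observe first that as printed the condition is vacuous, since $\gB_k^-\ge0$ by the paper's convention; it must be read as $\gB_k<0$ with $\gB_k^-=|\gB_k|\ge C(\gd_k^{-1}+\gd_{k+1}^{-1})$. Your claim that this keeps the diagonal of $B_{X,\gB}$ bounded is false: the diagonal entries $\gd_k^{-1}\gB_k^{-1}+\gd_k^{-2}$ still tend to $+\infty$ because of the $\gd_k^{-2}$ term (and bounded diagonal would not by itself preclude discreteness when the off-diagonal entries are unbounded). What the condition actually gives is that \emph{all} entries of \eqref{eq:B_xa'} containing $1/\gB_k$ are uniformly $O(1/C)$, so $B_{X,\gB}-B_{X,\infty}$ is a bounded operator, while $B_{X,\infty}$ is block-diagonal with the eigenvalue $0$ of infinite multiplicity; a bounded perturbation cannot produce a compact resolvent, whence $\sigma_{\ess}(B_{X,\gB})\neq\emptyset$. (Alternatively, within this paper: the corrected (iii) is precisely condition \eqref{eq:b_klmn} with $q=0$, so $\rH_{X,\gB}$ is lower semibounded by Theorem~\ref{th:km_sb'} and Proposition~\ref{prop:discr3} applies.)
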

It follows from Proposition \ref{prop:discr1} that discreteness of the spectrum is a very rare property. For instance,  the spectrum of the operator $\rH_{X,\gB}$ is not discrete if either $\gB_n>0$ for all $n\in\N$ or $\{\gd_n\}_{n=1}^\infty\notin l^3(\N)$. However, it is possible to indicate certain conditions on $X$ and $\gB$ which guarantee the discreteness.

\begin{proposition}[\cite{KosMal09}]\label{prop:discr2}
Assume that $\gB_k+\gd_k\ge0$ for all $k\in\N$ and $\cI=\R_+$. Then the spectrum of $\rH_{X,\gB}$ is purely discrete if and only if
\begin{equation}\label{IV.3.4_09}
\lim_{k\to\infty}x_k\sum_{j=k}^\infty\gd_j^3=0\quad
\text{and}\quad
\lim_{k\to\infty}x_k\sum_{j=k}^\infty(\gB_j+\gd_j)=0.
\end{equation}
\end{proposition}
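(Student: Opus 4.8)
The plan is to reduce the assertion to a discreteness criterion for the associated Jacobi matrix $B_{X,\gB}$ and then to read off the two requirements in \eqref{IV.3.4_09} from the Kac--Krein theory of strings. First, since $\cI=\R_+$ we have $\sum_n\gd_n=\infty$, so Theorem \ref{prop_01}(i) guarantees that $\rH_{X,\gB}$ is self-adjoint and the phrase ``purely discrete spectrum'' is meaningful. I would then invoke Theorem \ref{th:km_sa'}(iii): the spectrum of $\rH_{X,\gB}$ is purely discrete if and only if $\gd_k\to 0$ and $B_{X,\gB}$ has purely discrete spectrum. Note that the first condition in \eqref{IV.3.4_09} already forces $\gd_k\to 0$, since $x_k\gd_k^3\le x_k\sum_{j\ge k}\gd_j^3\to 0$ together with $x_k\to\infty$ yields $\gd_k\to 0$ (in particular $\gd^*<\infty$, so Theorem \ref{th:km_sa'}(iii) applies); conversely discreteness yields $\gd_k\to 0$ by the same theorem. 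Hence the statement is equivalent to: $B_{X,\gB}$ has purely discrete spectrum if and only if \eqref{IV.3.4_09} holds.

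To analyze $B_{X,\gB}$ I would use the factorization \eqref{eq:fac1}--\eqref{eq:fac2}, which identifies the difference equation for $B_{X,\gB}$ with that of a Krein--Stieltjes string. The subtlety is that the literal string has threads of length $\gd_k$ and $\gB_k$, and the latter are admissible only when $\gB_k\ge 0$, whereas here we merely assume $\gB_k+\gd_k\ge 0$. The key reorganization is therefore to pass to the nonnegative quantities $\tilde\gB_k:=\gB_k+\gd_k\ge 0$: grouping the $k$-th $\gd$-thread together with the following $\gB$-thread produces a well-defined string whose $k$-th cell has total length $\tilde\gB_k$ and carries mass $2\gd_k$ distributed at the two marked points, so that the accumulated mass at the end of the $k$-th cell equals $2x_k$ and the total length equals $\sum_j\tilde\gB_j$. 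Positivity of $\tilde\gB_k$ is exactly what keeps the Krein correspondence legitimate in the presence of negative interaction strengths.

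With the string in hand I would apply the Kac--Krein discreteness criterion \cite{KK58}, which for a string on $[0,L)$ with mass distribution $\mathcal M$ requires $L<\infty$ together with $(L-y)\mathcal M(y)\to 0$ as $y\uparrow L$. Evaluating this at the cell endpoints, where the remaining length is $\sum_{j> k}\tilde\gB_j$ and the accumulated mass is $\approx 2x_k$, reproduces exactly the second condition in \eqref{IV.3.4_09}, namely $\lim_k x_k\sum_{j\ge k}(\gB_j+\gd_j)=0$. The first condition should then come from the fine structure inside each cell: on the free interval of length $\gd_k$ the sharp Poincar\'e constant is of order $\gd_k^2$, and integrating the resulting local bound over that interval produces the weight $\gd_k^3$; summing these local contributions against the mass already accumulated gives $\lim_k x_k\sum_{j\ge k}\gd_j^3=0$. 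I would establish both directions by a Molchanov-type partition argument for the quadratic form $\gt_{X,\gB}^0$, bounding $\int_{x_{k-1}}^\infty|f|^2$ from above for sufficiency and constructing near-minimizers supported on tails for necessity, the two families of test functions isolating the $\gB$-mechanism and the $\gd$-mechanism respectively.

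The main obstacle is precisely the weak sign assumption $\gB_k+\gd_k\ge 0$. Under the stronger hypothesis $\gB_k>0$ one checks that $\sum_{j\ge k}\gd_j^3\le\sum_{j\ge k}\gd_j\le\sum_{j\ge k}(\gd_j+\gB_j)$ once $\gd_j\le 1$, so the first condition is automatically implied by the second and a single Kac--Krein application suffices. It is only when genuinely negative strengths are present---which, since $\sum_j\gd_j=\infty$, is in fact \emph{forced} by the second condition---that $\sum_{j\ge k}(\gB_j+\gd_j)$ can be far smaller than $\sum_{j\ge k}\gd_j^3$ and the two conditions become independent. Thus the heart of the argument is to show that the first condition is the exact additional requirement controlling the short free intervals once the string threads degenerate, and to produce the matching lower bounds for its necessity (cf.\ Proposition \ref{prop:discr1}); this is also where care is needed to ensure that the negative terms $1/\gB_k<0$ appearing in \eqref{eq:3.12} do not destroy the compactness extracted from the positive part.
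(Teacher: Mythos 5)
Your overall frame coincides with the paper's: reduce to the Jacobi matrix $B_{X,\gB}$ via Theorem \ref{th:km_sa'}(iii) and then invoke the Kac--Krein discreteness criterion through the string correspondence \eqref{eq:fac1}--\eqref{eq:fac2}. But the step on which everything hinges --- converting $B_{X,\gB}$ into a \emph{genuine} Krein--Stieltjes string when some $\gB_k<0$ --- fails as you state it, and this is not a repairable detail but the actual content of the proposition. Note first that under \eqref{IV.3.4_09} negative intensities are unavoidable: $\sum_j\gd_j=\infty$ while the second condition forces $\sum_j(\gB_j+\gd_j)<\infty$, so infinitely many $\gB_j$ are negative, and hence $\kappa_-(B_{X,\gB})=\kappa_-(\gB)=\infty$ by the factorization \eqref{eq:fac1}--\eqref{eq:fac2} (cf.\ the proof of Theorem \ref{th:kappa-}). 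A Krein--Stieltjes string, however, always generates a \emph{nonnegative} Jacobi matrix. Therefore no regrouping of threads whatsoever can produce a ``well-defined string'' whose Jacobi matrix is $B_{X,\gB}$: your grouped string (cells of length $\tilde\gB_k=\gB_k+\gd_k$ carrying mass $2\gd_k$) is a genuinely different, nonnegative operator, and to use it you would have to prove that its spectrum is discrete if and only if that of the indefinite matrix $B_{X,\gB}$ is --- which is precisely the equivalence being proved. Moreover, Kac--Krein applied to your grouped string yields only the \emph{second} condition in \eqref{IV.3.4_09}; the first condition, which is independent and necessary (Proposition \ref{prop:discr1}(i)), is exactly the information your grouping erases.

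Your fallback for recovering the first condition --- the Poincar\'e-constant heuristic producing the weight $\gd_k^3$, plus a promised ``Molchanov-type partition argument'' for $\gt_{X,\gB}^0$ --- does not close this gap. Molchanov-type form arguments presuppose lower semiboundedness, and in the only regime where \eqref{IV.3.4_09} can hold the operator is never lower semibounded: infinitely many negative $\gB_k$ give $\kappa_-(\rH_{X,\gB})=\kappa_-(\gB)=\infty$ by Theorem \ref{th:kappa-}, whereas a lower semibounded operator with purely discrete spectrum has only finitely many negative eigenvalues. (The paper stresses exactly this point: Theorem \ref{th:km_sa'}(iii) is what replaces Theorem \ref{th:molchanov} in the non-semibounded case.) So the heart of the proof --- showing that, under $\gB_k+\gd_k\ge0$, discreteness of the indefinite matrix $B_{X,\gB}$ is equivalent to discreteness of a genuine string whose length elements are comparable to $(\gB_k+\gd_k)+\gd_k^3$, which is where the cube really comes from --- is missing from your argument. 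A minor further point: in the necessity direction you invoke Theorem \ref{th:km_sa'}(iii) to conclude $\gd_k\to0$, but that theorem assumes $\gd^*<\infty$, so you need a separate (easy, Weyl-sequence) argument ruling out $\gd^*=\infty$ before it applies.
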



As it was already mentioned, the  Hamiltonian $\rH_{X,\gB,q}$ can be considered as a form sum perturbation of the operator 
\be\label{eq:HN}
\rH_{X,q}^N:=\bigoplus_{k\in\N} \rH_{q,k}^N,
\ee
where $\rH_{q,k}^N$ is the Neumann realization of $-\frac{d^2}{dx^2}+q(x)$ in $L^2(x_{k-1},x_k)$. The next result provides a discreteness criterion for the operator $\rH_{X,q}^N$.

\begin{theorem}[\cite{KosMal12}]\label{th:disc_N}
Assume that  $d^*<\infty$,  
$q\in L^1_{\loc}(\R_+)$, and $q_-$  satisfies the first condition in \eqref{eq:b_klmn}.  
Then the spectrum of $\rH_{X,q}^N$ is discrete if and only if the following conditions are
satisfied:
    \begin{equation}\label{Intro2.1}
\text{ for every $\varepsilon>0$}\quad \int^{x + \varepsilon}_x q(t)dt\   
\to +\infty \qquad \text{as}\qquad x\to \infty.
    \end{equation}
%
%
       \begin{equation}\label{Intro2.2_N}
\frac{1}{\gd_k} \int_{x_{k-1}}^{x_k} q(x) dx  \to +\infty \qquad \text{as}\qquad k\to \infty
    \end{equation}
           \end{theorem}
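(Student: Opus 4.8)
The plan is to exploit the orthogonal-sum structure $\rH_{X,q}^N=\bigoplus_{k}\rH_{q,k}^N$ and reduce discreteness of the whole operator to the growth of the lowest Neumann eigenvalues of the summands. First I would record that for each $k$ the form $\int_{x_{k-1}}^{x_k}(|f'|^2+q|f|^2)\,dx$ on $W^{1,2}(x_{k-1},x_k)$ is, by the interpolation inequality $\|f\|_\infty^2\le \eta\|f'\|_2^2+C_\eta\|f\|_2^2$ together with $q\in L^1_{\loc}$, infinitesimally form bounded with respect to $\int|f'|^2$; hence $\rH_{q,k}^N$ is self-adjoint, bounded below, and, by the compactness of $W^{1,2}(x_{k-1},x_k)\hookrightarrow L^2(x_{k-1},x_k)$, has purely discrete spectrum. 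The first condition in \eqref{eq:b_klmn} for $q_-$ combined with $d^*<\infty$ yields, via the same inequality, $\int q_-|f|^2\le \tfrac12\|f'\|_2^2+C_1\|f\|_2^2$ with $C_1$ independent of $k$, so that
\[
\int_{x_{k-1}}^{x_k}(|f'|^2+q|f|^2)\,dx\ge \tfrac12\|f'\|_2^2+\int_{x_{k-1}}^{x_k}q_+|f|^2\,dx-C_1\|f\|_2^2;
\]
in particular $\rH_{X,q}^N\ge -C_1$. Since a direct sum of lower-semibounded operators with discrete spectra has discrete spectrum precisely when the number of eigenvalues below each fixed level is finite, I would reduce the theorem to
\[
\rH_{X,q}^N\ \text{is discrete}\iff \lambda_k:=\lambda_{\min}(\rH_{q,k}^N)\to+\infty\quad(k\to\infty).
\]

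For necessity, testing the Rayleigh quotient of $\rH_{q,k}^N$ with the constant function gives $\lambda_k\le \frac{1}{\gd_k}\int_{x_{k-1}}^{x_k}q\,dx$, so $\lambda_k\to+\infty$ forces \eqref{Intro2.2_N}. To obtain \eqref{Intro2.1} I would use form bracketing: the form of $\rH_{X,q}^N$ is the restriction of the form of $\rH_q$ to the larger domain $W^{1,2}(\R_+\setminus X)$, whence $\rH_{X,q}^N\le \rH_q$, and by the min--max principle discreteness of $\rH_{X,q}^N$ implies discreteness of $\rH_q$. Applying Theorem \ref{th:molchanov} with $\gA\equiv0$ (its hypothesis \eqref{eq:brinck} on $q_-$ follows from the first bound in \eqref{eq:b_klmn} and $d^*<\infty$) then gives exactly \eqref{Intro2.1}.

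The remaining and main part is sufficiency: assuming \eqref{Intro2.1} and \eqref{Intro2.2_N} I must show $\lambda_k\to+\infty$. Fix $\Lambda>0$ and $f$ with $\|f\|_2=1$ on $I_k=(x_{k-1},x_k)$. If $\|f'\|_2^2\ge 2(\Lambda+C_1)$ the displayed lower bound already yields energy $\ge\Lambda$, so I may assume $\|f'\|_2\le M:=\sqrt{2(\Lambda+C_1)}$. Then $f$ oscillates little: writing $m$ for its mean, the Wirtinger inequality gives $|m|^2\gd_k\ge 1-\pi^{-2}\gd_k^2M^2$, and $|f(x)-m|\le \sqrt{\gd_k}\,M$ pointwise. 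Here the argument splits by the length $\gd_k$. When $\gd_k$ is small the constant mode dominates, yielding $|f(x)|^2\ge \tfrac14\gd_k^{-1}$ on $I_k$ for large $k$, hence $\int_{I_k}q_+|f|^2\ge \tfrac14\gd_k^{-1}\int_{I_k}q_+\ge \tfrac14\,\frac{1}{\gd_k}\int_{I_k}q\to+\infty$ by \eqref{Intro2.2_N}, using that condition alone. When $\gd_k$ stays bounded away from $0$ the interval is macroscopic, and I would instead run a Molchanov-type estimate on $I_k$: partition $I_k$ into windows $J$ of a small fixed length $\varepsilon\le\inf_k\gd_k$ and use the elementary pointwise bound $\inf_J|f|^2\ge \varepsilon^{-1}\int_J|f|^2-2\|f'\|_{L^2(J)}\|f\|_{L^2(J)}$ together with $\int_J q_+|f|^2\ge(\int_J q_+)\inf_J|f|^2$; since $\|f'\|_2\le M$ is small against the uniform growth $\int_J q\to+\infty$ from \eqref{Intro2.1}, summing over windows forces $\int_{I_k}q_+|f|^2\ge\Lambda+C_1$ for large $k$.

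I expect this last, genuinely two-scale step to be the principal obstacle: one must control the interaction between the possibly large positive part $q_+$ and the (small but nonzero) oscillation of $f$ uniformly in $k$, and it is precisely the uniform local condition \eqref{Intro2.1} — rather than the averaged condition \eqref{Intro2.2_N} alone — that makes the bound work on the non-shrinking intervals, explaining why both hypotheses are indispensable.
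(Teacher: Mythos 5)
Your skeleton is sound and most of it is correct: the reduction of discreteness of $\rH_{X,q}^N=\bigoplus_k\rH_{q,k}^N$ to $\lambda_k:=\inf\sigma(\rH_{q,k}^N)\to+\infty$ is valid (each summand has compact resolvent, and the first condition in \eqref{eq:b_klmn} together with $\gd^*<\infty$ gives the uniform bound $\rH_{X,q}^N\ge -C_1$); the necessity of \eqref{Intro2.2_N} via constant test functions is correct; the necessity of \eqref{Intro2.1} via the form inequality $\rH_{X,q}^N\le\rH_q$ and Theorem \ref{th:molchanov} with $\gA\equiv 0$ is also correct, including the implicit claim that the first condition in \eqref{eq:b_klmn} plus $\gd^*<\infty$ implies \eqref{eq:brinck} (the intervals $I_j:=(x_{j-1},x_j)$ meeting $[x,x+1]$ are disjoint and have total length at most $1+2\gd^*$); and the shrinking-interval case, where the constant mode dominates, is complete. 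Since the review itself defers the proof to \cite{KosMal12}, you are judged on your own argument; the direct-sum reduction plus a Molchanov-type window analysis is indeed the natural route.

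The gap is exactly where you flagged it, and it is genuine: the final ``summing over windows'' in the macroscopic case fails as stated. Your window-wise bound reads $\int_Jq_+|f|^2\ge Q_J\bigl(\varepsilon^{-1}m_J-2\sqrt{m_J}\,g_J\bigr)$ with $Q_J:=\int_Jq_+$, $m_J:=\int_J|f|^2$, $g_J:=\|f'\|_{L^2(J)}$. Condition \eqref{Intro2.1} bounds $Q_J$ from \emph{below}, but nothing bounds it from \emph{above}; on any window where the bracket is negative the product can be arbitrarily negative. Concretely, let $q$ on $I_k$ consist of very tall, very narrow spikes and let $f$ vanish on their supports: then $\inf_J|f|^2=0$, so the bracket is $\le 0$, while $Q_J$ is astronomically large, and your summed lower bound is useless even though the true quantity $\int_Jq_+|f|^2$ is nonnegative; the smallness of $\|f'\|_2\le M$ cannot repair this because $M$ is fixed while $Q_J$ is not. (A second, minor, slip: the window length should be tied to a constant $\delta_0(M)$ separating the two cases, not to $\inf_k\gd_k$, which may be zero.) The repair is a dichotomy rather than a sum: fix $\varepsilon\le (32)^{-1/2}M^{-1}$, call a window $J$ \emph{oscillatory} if $g_J^2\ge\frac{1}{16\varepsilon^2}m_J$, and observe that oscillatory windows carry total mass at most $16\varepsilon^2M^2\le\frac12$; on every non-oscillatory window $2\sqrt{m_J}\,g_J\le\frac{m_J}{2\varepsilon}$, hence $\inf_J|f|^2\ge\frac{m_J}{2\varepsilon}$ and $\int_Jq_+|f|^2\ge\frac{T_k(\varepsilon)}{2\varepsilon}\,m_J$, where $T_k(\varepsilon):=\inf\bigl\{\int_Jq:\ J\subset I_k,\ |J|=\varepsilon\bigr\}\to+\infty$ by \eqref{Intro2.1}. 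Discarding the oscillatory windows (using there only $\int_Jq_+|f|^2\ge 0$) and summing over the rest yields $\int_{I_k}q_+|f|^2\ge T_k(\varepsilon)/(4\varepsilon)$, hence $\lambda_k\ge \frac{1}{4\varepsilon}T_k(\varepsilon)-C_1\to+\infty$ (with harmless adjustments when $\gd_k$ is not a multiple of $\varepsilon$). This trade-off --- pay in $\|f'\|^2$ to avoid the spikes, or pay in $\int q_+|f|^2$ by feeling them --- is the actual analytic content of the sufficiency direction; without it the proof is incomplete.
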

 
It is an immediate corollary of Theorem \ref{th:disc_N} that both conditions \eqref{Intro2.1} and \eqref{Intro2.2_N} are sufficient for the discreteness of the spectrum of $\rH_{X,\gB,q}$. Moreover, \eqref{Intro2.1} remains to be necessary although \eqref{Intro2.2_N} is no longer necessary. 

 \begin{theorem}[\cite{KosMal12}]\label{cor:3.78}
 Assume that  
$q\in L^1_{\loc}(\R_+)$,   $d^*<\infty$  
 and conditions  \eqref{eq:b_klmn} are  satisfied.  
 \begin{itemize}
 \item[(i)]  If $q$ satisfies \eqref{Intro2.1} and \eqref{Intro2.2_N}, then the spectrum of $\rH_{X,\gB,q}$ is discrete.
 \item[(ii)] If the spectrum of the lower semibounded Hamiltonian $\rH_{X,\gB,q}$ is purely discrete, then $q$ satisfies \eqref{Intro2.1} and 
 \be\label{eq:4.5'}
\frac{1}{\gd_k}\Big(\int_{x_{k-1}}^{x_k}q(x)dx+\frac{1}{\gB_{k-1}}+\frac{1}{\gB_k}\Big)\to +\infty.
\ee
 \end{itemize}
 \end{theorem}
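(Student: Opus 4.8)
The plan is to reduce Theorem \ref{cor:3.78} to the discreteness criterion for the Neumann operator $\rH_{X,q}^N$ established in Theorem \ref{th:disc_N}, using the fact that $\rH_{X,\gB,q}$ is a form perturbation of $\rH_{X,q}^N$. First I would record the variational formulation: under the standing assumption \eqref{eq:b_klmn} the form $\gt_{X,\gB,q}^0$ is lower semibounded and closable by Theorem \ref{th:km_sb'}(i), and its closure is associated with the self-adjoint operator $\rH_{X,\gB,q}$. The key observation is that the form $\gt_{X,\gB,q}$ splits as the Neumann form plus the interaction term $\sum_k |f(x_k+)-f(x_k-)|^2/\gB_k$, so that $\rH_{X,\gB,q}$ differs from $\rH_{X,q}^N$ by a perturbation controlled by the sequence $\{1/\gB_k\}$.

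For part (i), assuming \eqref{Intro2.1} and \eqref{Intro2.2_N}, I would invoke Theorem \ref{th:disc_N} to conclude that $\rH_{X,q}^N$ already has purely discrete spectrum. It then suffices to show that passing from $\rH_{X,q}^N$ to $\rH_{X,\gB,q}$ preserves discreteness. The natural tool is a compactness-of-resolvent argument at the level of forms: I would show that the form domain of $\gt_{X,\gB,q}$ embeds compactly into $L^2(\R_+)$. Since \eqref{eq:b_klmn} guarantees that the interaction term is infinitesimally form-bounded with respect to the Neumann form (this is exactly the mechanism behind the KLMN construction in Theorem \ref{th:km_sb'}), the two form domains coincide with equivalent form norms, and compactness of the embedding for $\rH_{X,q}^N$ transfers directly to $\rH_{X,\gB,q}$. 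Hence $\rH_{X,\gB,q}$ has discrete spectrum.

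For part (ii), suppose conversely that $\rH_{X,\gB,q}$ is lower semibounded with purely discrete spectrum. The necessity of \eqref{Intro2.1} should follow by testing the form on functions supported away from the points $x_k$ (on large intervals inside $(x_{k-1},x_k)$ when $\gd_k$ stays bounded below along a subsequence, or more carefully on unions of cells), where the interaction term vanishes and the quadratic form reduces to the ordinary Schrödinger form $\int(|f'|^2+q|f|^2)$; failure of \eqref{Intro2.1} would then produce a singular Weyl sequence, contradicting discreteness, exactly as in the classical Molchanov-type argument. The more delicate point is deriving the modified condition \eqref{eq:4.5'} rather than \eqref{Intro2.2_N}: here I would construct explicit trial functions that are piecewise affine and exploit the jump across each $x_k$, so that the term $1/\gB_{k-1}+1/\gB_k$ enters the Rayleigh quotient together with the cell-average of $q$; discreteness forces the corresponding localized quotients to diverge, yielding \eqref{eq:4.5'}.

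The main obstacle I anticipate is the \emph{necessity} half, specifically identifying the correct combination $\int_{x_{k-1}}^{x_k}q\,dx+\tfrac{1}{\gB_{k-1}}+\tfrac{1}{\gB_k}$ in \eqref{eq:4.5'} and showing it, rather than the unperturbed \eqref{Intro2.2_N}, is what discreteness enforces. This requires trial functions that simultaneously capture the cost of the potential on the $k$-th cell and the cost of the $\delta'$-jumps at both endpoints $x_{k-1}$ and $x_k$, and one must verify that these localized test functions are genuinely $L^2$-normalized and form an almost-orthogonal family so that a divergence failure yields infinitely many eigenvalues below a fixed level. Balancing these competing contributions through the weights $\gd_k$, while staying within the admissible form domain described by \eqref{eq:3.12}--\eqref{eq:3.13}, is where the technical work concentrates; the sufficiency direction, by contrast, is essentially a soft consequence of Theorem \ref{th:disc_N} combined with the relative form-compactness already built into the KLMN setup of Theorem \ref{th:km_sb'}.
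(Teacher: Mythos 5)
Your reduction of part (i) to Theorem \ref{th:disc_N} is indeed the intended route (the paper explicitly calls the sufficiency ``an immediate corollary'' of that theorem), but the mechanism you invoke is false. Conditions \eqref{eq:b_klmn} control only the \emph{negative} parts $q_-$ and $\gB_k^-$; the positive intensities are completely unconstrained, so the full jump form $\sum_k|f(x_k+)-f(x_k-)|^2/\gB_k$ is in general \emph{not} form-bounded (let alone infinitesimally) with respect to the Neumann form, and the form domain of $\gt_{X,\gB,q}$ may be a proper subspace of that of $\rH_{X,q}^N$. If the two form domains really coincided with equivalent norms, discreteness of $\rH_{X,\gB,q}$ and of $\rH_{X,q}^N$ would be \emph{equivalent} --- which is precisely what the theorem denies in part (ii) (condition \eqref{Intro2.2_N} is not necessary) and what Proposition \ref{prop:new} refutes by example: there $\sigma(\rH_{X,\gB,q})$ is discrete while $\sigma(\rH_{X,q}^N)$ is not. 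The correct argument is one-sided: \eqref{eq:b_klmn} makes only the negative jump part (indices with $\gB_k<0$) and $q_-$ infinitesimally form-bounded, whence $\gt_{X,\gB,q}\ge(1-a)\,\gt_N-C$ on $\dom(\gt_{X,\gB,q})\subseteq\dom(\gt_N)$, where $\gt_N$ denotes the form of $\rH_{X,q}^N$. This bounds the inclusion of form domains, and composing it with the compact embedding of $\dom(\gt_N)$ into $L^2(\R_+)$ (which is what discreteness of $\rH_{X,q}^N$, i.e.\ Theorem \ref{th:disc_N}, gives) yields compactness, hence discreteness of $\rH_{X,\gB,q}$; the positive jump part only shrinks the domain and increases the form, so it cannot destroy this.

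For part (ii) your plan is right, but you stop exactly where the proof is: the trial functions are not ``where the technical work concentrates'' --- they are simply the indicators $f_k=\chi_{(x_{k-1},x_k)}$. Each $f_k$ lies in the form domain \eqref{eq:3.12}--\eqref{eq:3.13} (piecewise constant, compactly supported), has $f_k'=0$, $\|f_k\|^2_{L^2}=\gd_k$, and the only contributions to $\gt^0_{X,\gB,q}[f_k]$ are $\int_{x_{k-1}}^{x_k}q\,dx$ and the two jumps $1/\gB_{k-1}+1/\gB_k$; thus the Rayleigh quotient of $f_k$ is \emph{exactly} the quantity in \eqref{eq:4.5'}. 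If \eqref{eq:4.5'} failed along a subsequence, pass to a further subsequence containing no two adjacent indices (adjacent cells share the jump point $x_k$, which would create cross terms); then the form decouples on the span of the chosen $f_k$, is bounded there by a fixed constant, and the min--max principle (Glazman's lemma) places infinitely many points of the spectrum of the lower semibounded operator $\rH_{X,\gB,q}$ below that constant, contradicting discreteness. Necessity of \eqref{Intro2.1} follows, as you indicate, from the classical Molchanov argument using trial functions continuous across the points $x_k$, for which the jump terms vanish identically. With these two repairs your argument becomes complete and coincides with the proof in \cite{KosMal12}.
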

         The next result complements 
         Proposition \ref{prop:discr1}. 
           \begin{proposition}[\cite{KosMal12}]\label{prop:discr3}
Let $b=+\infty$ and $q\in L^\infty(\R_+)$. If the Hamiltonian $\rH_{X,\gB,q}$ is lower semibounded, then its spectrum is not discrete. In particular, if the operator $\rH_{X,\gB}:=\rH_{X,\gB,\bold{0}}$ is lower semibounded, then its spectrum is not discrete.
\end{proposition}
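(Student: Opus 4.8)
The plan is to exhibit, directly and variationally, a finite upper bound for all the min--max levels of $\rH_{X,\gB,q}$, which is incompatible with a purely discrete spectrum. Since $\rH_{X,\gB,q}$ is assumed lower semibounded, Theorem \ref{th:povzner'} guarantees it is self-adjoint, and by the corollary following it the form $\gt_{X,\gB,q}=\overline{\gt^0_{X,\gB,q}}$ is the closed form of $\rH_{X,\gB,q}$. In particular the variational (min--max) principle applies to $\rH_{X,\gB,q}$ through the form $\gt_{X,\gB,q}$ and its core $\gt^0_{X,\gB,q}$; this is precisely where lower semiboundedness is used, since without it the quadratic form is not bounded below and the min--max machinery is unavailable.

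The key observation is that the $\delta'$-part of the energy form, namely $\sum_k |f(x_k+)-f(x_k-)|^2/\gB_k$, vanishes identically on continuous functions. Hence on the subspace of continuous, compactly supported $W^{1,2}$-functions the form reduces to $\gt_{X,\gB,q}[f]=\int_{\R_+}(|f'|^2+q|f|^2)\,dx$, i.e.\ to the form of $-\frac{d^2}{dx^2}+q$ with a bounded potential, and the troublesome interaction strengths $\gB_k$ and interval lengths $\gd_k$ disappear entirely from the estimate. This is also where $q\in L^\infty(\R_+)$ enters, since it yields $\int q|f|^2\le \|q\|_{L^\infty(\R_+)}\,\|f\|_{L^2}^2$.

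I would then build the trial functions. Using $b=+\infty$, so that the configuration extends to infinity, choose pairwise disjoint intervals $[a_n,b_n]\subset\R_+$ with $a_n\to+\infty$ and $b_n-a_n\to+\infty$, and let $g_n$ be the continuous, piecewise-linear trapezoidal bump equal to $1$ on $[a_n+1,b_n-1]$, supported in $[a_n,b_n]$, with unit-width linear ramps. Each $g_n$ lies in $\dom(\gt^0_{X,\gB,q})$, the jump terms vanish by continuity, $\int|g_n'|^2=2$, while $\|g_n\|_{L^2}^2\ge b_n-a_n-2\to\infty$; consequently $\gt_{X,\gB,q}[g_n]/\|g_n\|_{L^2}^2\le 2/\|g_n\|_{L^2}^2+\|q\|_{L^\infty(\R_+)}\to\|q\|_{L^\infty(\R_+)}$. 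Because the supports are disjoint and the $g_n$ are continuous, for $f=\sum_{j\le n}c_jg_j$ there is no coupling and $\gt_{X,\gB,q}[f]=\sum_j|c_j|^2\gt_{X,\gB,q}[g_j]\le M\|f\|_{L^2}^2$ with $M:=\|q\|_{L^\infty(\R_+)}+1$ for $n$ large. Thus $\Span\{g_1,\dots,g_n\}$ is an $n$-dimensional subspace on which the Rayleigh quotient is $\le M$, so every min--max level $\mu_n(\rH_{X,\gB,q})$ satisfies $\mu_n\le M$.

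Finally I would invoke the dichotomy of the variational principle for the lower semibounded operator $\rH_{X,\gB,q}$: if its spectrum were purely discrete, the levels $\mu_n$ would necessarily tend to $+\infty$, whereas the uniform bound $\mu_n\le M$ forces $\inf\sigma_{\ess}(\rH_{X,\gB,q})\le M<\infty$, i.e.\ $\sigma_{\ess}(\rH_{X,\gB,q})\neq\emptyset$ and the spectrum is not discrete. The case $q=0$ (the ``in particular'' assertion for $\rH_{X,\gB}$) is the same computation with $M$ replaced by an arbitrary $\varepsilon>0$. The only delicate points are organizational rather than computational: one must check that the trial functions genuinely sit in the form domain and that lower semiboundedness, hence self-adjointness via Theorem \ref{th:povzner'}, is in force so that min--max may legitimately be applied; the estimates themselves are elementary once continuity is used to annihilate the $\delta'$-terms. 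An alternative, in the spirit of the rest of the section, would be to deduce the conclusion from Theorem \ref{th:km_sa'}(iii) by showing that the Jacobi matrix $B_{X,\gB}$ cannot have discrete spectrum, but the continuous-trial-function argument is shorter and sidesteps the factorization \eqref{eq:fac1}.
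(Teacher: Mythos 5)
Your variational route is genuinely different from the paper's proof, and its computational core is sound, but as written it has one real gap: the step where you certify that your trapezoidal trial functions can be fed into the min--max principle. You invoke the Corollary following Theorem \ref{th:povzner'} to conclude that $\overline{\gt^0_{X,\gB,q}}$ is the closed form of $\rH_{X,\gB,q}$; but that Corollary assumes the \emph{form} $\gt^0_{X,\gB,q}$ is lower semibounded, whereas your hypothesis is lower semiboundedness of the \emph{operator}. These are not interchangeable here: $\gt^0_{X,\gB,q}$ extends the quadratic form $f\mapsto(\rH^0_{X,\gB,q}f,f)$ from $\dom(\rH^0_{X,\gB,q})$ to the strictly larger domain \eqref{eq:3.13}, where the jumps $f(x_k+)-f(x_k-)$ are unconstrained, and an extension of a lower semibounded quadratic form to a larger domain need not stay bounded below. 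Nothing in the paper supplies the implication ``operator semibounded $\Rightarrow$ form semibounded'': Theorem \ref{th:povzner'} and its Corollary go in the opposite direction, Theorem \ref{th:km_sb'} and Lemma \ref{lem:nec} concern the form itself, and Theorem \ref{th:km_sa'}(ii) characterizes semiboundedness of the operator through $B_{X,\gB}$ only under $\gd^*<\infty$, which Proposition \ref{prop:discr3} does not assume. So, as it stands, you have not shown that your $g_n$ lie in the form domain of $\rH_{X,\gB,q}$, and the min--max levels cannot be estimated on them.

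The gap is fixable, precisely because your trial functions are continuous and $b=+\infty$: every compact interval contains only finitely many $x_k$, so each trapezoid $g_n$ can be approximated in $W^{1,2}$ by smooth, compactly supported functions that are constant in a neighbourhood of each $x_k$ in their support. Such functions satisfy all the conditions in \eqref{eq:h_b2}, hence lie in $\dom(\rH^0_{X,\gB,q})$, and \eqref{eq:3.14} gives for them exactly $\int(|f'|^2+q|f|^2)\,dx$. Since $\rH_{X,\gB,q}=\overline{\rH^0_{X,\gB,q}}$ is self-adjoint and lower semibounded, it coincides with the Friedrichs extension of $\rH^0_{X,\gB,q}$, whose form is the form-norm closure of this restriction; because the approximating functions are continuous, the jump terms vanish and form-norm convergence reduces to $W^{1,2}$-convergence, so each $g_n$ does belong to the form domain with form value $\int(|g_n'|^2+q|g_n|^2)\,dx$. (Equivalently, and more simply, run the whole argument with these smooth, locally constant trial functions, staying inside $\dom(\rH^0_{X,\gB,q})$.) With this repair your proof is correct and differs substantially from the paper's: there the statement is reduced to $q\equiv\bold{0}$, and purely discrete spectrum is shown to force, via Proposition \ref{prop:discr1}(ii), infinitely many negative intensities, whence $\kappa_-(\rH_{X,\gB})=\kappa_-(\gB)=\infty$ by Theorem \ref{th:kappa-}; but a lower semibounded operator with purely discrete spectrum cannot have infinitely many negative eigenvalues, since they would accumulate at a finite point. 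Your approach buys independence from the Kac--Krein criterion behind Proposition \ref{prop:discr1} and from the factorization \eqref{eq:fac1} behind Theorem \ref{th:kappa-}, at the cost of having to settle the form-domain issue above.
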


There is a gap between necessary and sufficient conditions in Theorem \ref{cor:3.78}. Indeed, the next result shows that condition \eqref{Intro2.2_N} is only sufficient and in cases when \eqref{Intro2.2_N} is not satisfied the discreteness of $\sigma(\rH_{X,\gB,q})$ depends on $q$ and $\gB$. 
In particular, the spectrum of the Hamiltonian $\rH_{X,\beta, q}$ might be discrete although the spectrum of the corresponding Neumann realization $\rH_{X,q}^N$ is not.
\begin{proposition}[\cite{KosMal12}]\label{prop:new}
Let $X=\{x_k\}^{\infty}_1\subset\R_+$ be such that $\gd^*<\infty$ and
%
%
   \begin{equation}\label{new:01}
\inf_{k\in\N} \gd_{2k-1} =: \varepsilon_0>0 \qquad \text{and}\qquad \lim_{k\to\infty}d_{2k}=0.
\end{equation}
Let $q$ satisfy  \eqref{eq:b_klmn} and  Molchanov's condition \eqref{Intro2.1}. If $\gB$ satisfies \eqref{eq:b_klmn} and 
     \begin{equation}\label{new:02}
\lim_{k\to\infty}d_{2k}\beta_{2k-1}=0,
     \end{equation}
then the spectrum $\sigma(\rH_{X,\beta, q})$ of the Hamiltonian $\rH_{X,\beta, q}$ is purely discrete.
  \end{proposition}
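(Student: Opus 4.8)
The plan is to prove that $\inf\sigma_{\ess}(\rH_{X,\gB,q})=+\infty$. Since $\gd^*<\infty$ and both $q$ and $\gB$ obey \eqref{eq:b_klmn}, Theorem \ref{th:km_sb'}(i) shows that $\rH_{X,\gB,q}$ is self-adjoint, lower semibounded, and coincides with the operator generated by the closed form $\gt_{X,\gB,q}$; for such an operator purely discrete spectrum is equivalent to $\inf\sigma_{\ess}=+\infty$, which I would establish by a localisation (decomposition-principle) argument. Before decoupling I record a sign reduction. Evaluating the second inequality in \eqref{eq:b_klmn} at the odd index gives $\tfrac{1}{\gB_{2k-1}^-}\le C_1\min\{\gd_{2k-1},\gd_{2k}\}=C_1\gd_{2k}$ for all large $k$, since $\gd_{2k}\to0\le\varepsilon_0\le\gd_{2k-1}$ by \eqref{new:01}. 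Were $\gB_{2k-1}<0$ infinitely often, this would force $\gd_{2k}\gB_{2k-1}\le-1/C_1$ along a subsequence, contradicting \eqref{new:02}. Hence $\gB_{2k-1}>0$ eventually, and \eqref{new:02} yields $\bigl(\gd_{2k}\gB_{2k-1}\bigr)^{-1}\to+\infty$.

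Next I would regroup $\R_+$ into the blocks $Y_k:=[x_{2k-2},x_{2k}]$, each built from a \emph{long} part $L_k=[x_{2k-2},x_{2k-1}]$ with $|L_k|=\gd_{2k-1}\ge\varepsilon_0$ and a \emph{short} part $S_k=[x_{2k-1},x_{2k}]$ with $|S_k|=\gd_{2k}\to0$, carrying a single interior $\delta'$-junction of strength $\gB_{2k-1}>0$ at $x_{2k-1}$. Splitting the integral and the jump sum in $\gt_{X,\gB,q}$ along this partition gives the exact identity
\[
\gt_{X,\gB,q}[f]=\sum_{k}\gt_k[f]+\sum_{k}\frac{|f(x_{2k}+)-f(x_{2k}-)|^2}{\gB_{2k}},
\]
where $\gt_k$ is the form on $Y_k$ with free endpoints and the interior $\delta'$ at $x_{2k-1}$; its operator $A_k$ has compact resolvent. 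When $\gB_{2k}>0$ the even (block-junction) terms are nonnegative and may be dropped; when $\gB_{2k}<0$ they are bounded below by $-2C_1\gd_{2k}\bigl(|f(x_{2k}+)|^2+|f(x_{2k}-)|^2\bigr)$ using $\tfrac{1}{\gB_{2k}^-}\le C_1\gd_{2k}$. Since $f(x_{2k}+)$ is an endpoint value on the long side of $Y_{k+1}$ and $f(x_{2k}-)$ is reached through $S_k$, trace inequalities on intervals combined with the smallness of $\gd_{2k}$ and of $\gd_{2k}\gB_{2k-1}$ let me absorb these negative contributions into $\tfrac12\gt_k$ plus a vanishing multiple of the local $L^2$-norm, yielding
\[
\gt_{X,\gB,q}[f]\ \ge\ \sum_{k}\Bigl(\tfrac12\,\gt_k[f]-\varepsilon_k\|f\|^2_{L^2(Y_k)}\Bigr),\qquad \varepsilon_k\to0.
\]

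It then remains to show $m_k:=\min\sigma(A_k)\to+\infty$. Fix $g$ on $Y_k$ with $\|g\|_{L^2(Y_k)}=1$ and $\gt_k[g]\le E$. As $q_-$ is infinitesimally form bounded with respect to $-\tfrac{d^2}{dx^2}$ on the fixed-length interval $L_k$ (where $\|q_-\|_{L^1(L_k)}\le C_0\gd^*$), and Molchanov's condition \eqref{Intro2.1} gives $\int_{L_k}q\to+\infty$, the localised Molchanov estimate (as used for Theorem \ref{th:molchanov} and Theorem \ref{th:disc_N}) forces the lowest Neumann eigenvalue $\lambda_k^L$ of $\tau_q$ on $L_k$ to diverge. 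Hence $\|g\|^2_{L^2(L_k)}\le E/\lambda_k^L\to0$ and $\int_{L_k}|g'|^2$ stays bounded, so by the trace inequality $|g(x_{2k-1}-)|^2$ is bounded. Consequently $\|g\|^2_{L^2(S_k)}\to1$ on the shrinking interval $S_k$, which through the elementary bound $1\le\gd_{2k}\bigl(2|g(x_{2k-1}+)|^2+2\gd_{2k}\int_{S_k}|g'|^2\bigr)$ forces $|g(x_{2k-1}+)|^2\gtrsim\gd_{2k}^{-1}$. Therefore the interior jump term satisfies
\[
\frac{|g(x_{2k-1}+)-g(x_{2k-1}-)|^2}{\gB_{2k-1}}\ \gtrsim\ \frac{1}{\gd_{2k}\,\gB_{2k-1}}\ \to\ +\infty,
\]
contradicting $\gt_k[g]\le E$; thus $m_k\to+\infty$. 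Combining with the previous display, $\gt_{X,\gB,q}[f]\ge(\tfrac12 m_k-\varepsilon_k)\|f\|^2$ for $f$ supported on $Y_k$, and the decomposition principle gives $\inf\sigma_{\ess}(\rH_{X,\gB,q})\ge\liminf_k(\tfrac12 m_k-\varepsilon_k)=+\infty$, i.e.\ the spectrum is purely discrete.

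The main obstacle is to make this decoupling quantitative and uniform in $k$. One must upgrade the integral Molchanov condition \eqref{Intro2.1} to genuine divergence of the Neumann ground-state energies $\lambda_k^L$ on the intervals $L_k$ (so that concentration of $q$ cannot create a cheap localised state), and simultaneously verify that the sign-indefinite junctions at the even points $x_{2k}$ are absorbed with errors $\varepsilon_k\to0$ by correctly balancing the three independent smallness inputs $\gd_{2k}\to0$, $\tfrac{1}{\gB_{2k}^-}\le C_1\gd_{2k}$, and $\gd_{2k}\gB_{2k-1}\to0$. Once these uniform estimates are in place, the diverging block energies $m_k$ dominate both the halved forms and the vanishing $L^2$-errors, and the conclusion follows.
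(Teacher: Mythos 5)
First, a caveat on the comparison itself: this survey states Proposition \ref{prop:new} without proof, quoting it from \cite{KosMal12}, so your attempt can only be measured against the methodology the paper develops around it (the energy form $\gt_{X,\gB,q}$, comparison with the Neumann realization $\rH_{X,q}^N$, and Molchanov-type local estimates). Your argument is exactly in that spirit, and its logical skeleton is correct: Theorem \ref{th:km_sb'}(i) gives self-adjointness and lower semiboundedness; the sign reduction $\gB_{2k-1}>0$ for large $k$ (from the second condition in \eqref{eq:b_klmn} combined with \eqref{new:02}) is right; the block decomposition into long/short pieces $L_k\cup S_k$ with the interior jump at $x_{2k-1}$ is an exact identity; and the core mechanism is precisely the reason the proposition holds: a normalized low-energy state must concentrate on $S_k$ because the localized Molchanov estimate (the engine behind Theorems \ref{th:molchanov} and \ref{th:disc_N}, valid here since $\varepsilon_0\le|L_k|\le\gd^*$ and $q_-$ obeys the first condition in \eqref{eq:b_klmn}) drives the Neumann ground energy of $L_k$ to $+\infty$; concentration on the shrinking $S_k$ forces $|g(x_{2k-1}+)|^2\gtrsim\gd_{2k}^{-1}$, so the jump term costs $\gtrsim(\gd_{2k}\gB_{2k-1})^{-1}\to+\infty$ by \eqref{new:02}.

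Two steps need more care than your sketch admits, though neither is fatal. First, in absorbing the negative even junctions, the trace estimate for $f(x_{2k}-)$ taken through $S_k$ alone yields $2C_1\gd_{2k}\bigl(\tfrac{2}{\gd_{2k}}\|f\|^2_{L^2(S_k)}+2\gd_{2k}\|f'\|^2_{L^2(S_k)}\bigr)$, whose $L^2$-coefficient is a constant, not an $\varepsilon_k\to0$; either settle for bounded $\varepsilon_k$ (which is all you need, since $\tfrac12 m_k$ swallows any fixed constant), or estimate $|f(x_{2k}-)|\le|f(x_{2k-1}-)|+|J_k|+\sqrt{\gd_{2k}}\,\|f'\|_{L^2(S_k)}$, where $J_k$ is the jump at $x_{2k-1}$, and use $\gd_{2k}\gB_{2k-1}\to0$ to absorb the resulting multiple of $|J_k|^2/\gB_{2k-1}$ into $\gt_k$ --- presumably what you intended by invoking the smallness of $\gd_{2k}\gB_{2k-1}$. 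Second, your lower bound $|g(x_{2k-1}+)|^2\gtrsim\gd_{2k}^{-1}$ uses $\gd_{2k}^2\int_{S_k}|g'|^2\to0$, but you only assert boundedness of $\int_{L_k}|g'|^2$; the bound on $S_k$ follows from the same energy estimate once $q_-$ is absorbed there via the first condition of \eqref{eq:b_klmn}. With these repairs (and, if you wish to avoid justifying the decomposition principle for this class of operators, replacing it by the min--max comparison of $\rH_{X,\gB,q}$ with the direct sum $\bigoplus_k\bigl(\tfrac12 A_k-\varepsilon_k\bigr)$, which your global form inequality already provides), the proof is complete modulo the standard localized Molchanov lemma that you explicitly defer.
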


{\bf Continuous spectrum.}
\begin{theorem}[\cite{KosMal12}]\label{thContSpec}
Assume that $q\in L^1_{\loc}(\R_+)$ and the first
condition in \eqref{eq:b_klmn}  is satisfied. Then $\sigma_{\ess}(\rH_{X,\beta,q})=\sigma_{\ess}(\rH_{X,q}^N)$ if
     \begin{equation}\label{eq:3.21B}
 \frac{|\beta_k|^{-1}}{\min\{d_k, d_{k+1}\}} \to 0 \quad
\text{as}\quad k\to \infty.
       \end{equation}

If, in addition,
     \begin{equation}\label{eq:3.21}
\lim_{k\to \infty}\frac{1}{d_k}\int_{x_{k-1}}^{x_k}|q(x)|dx = 0,
     \end{equation}
then
      \begin{equation}\label{1.12}
\sigma_{\ess}(\rH_{X,\beta,q})=\sigma_{\ess}(\rH_{X,q}^N) =
\sigma_{\ess}(\rH_{X}^N).
     \end{equation}
     \end{theorem}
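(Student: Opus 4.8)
The plan is to obtain both equalities in \eqref{1.12} from the stability of the essential spectrum under relatively form-compact perturbations (Weyl's theorem), using the decomposition of the energy form noted after \eqref{eq:HN}: the Hamiltonian $\rH_{X,\beta,q}$ is the form sum of the Neumann realization $\rH_{X,q}^N$ and the coupling form, while $\rH_{X,q}^N$ is the form sum of the free Neumann operator $\rH_X^N=\rH_{X,0}^N$ and the potential form. I would work under the standing assumption $d^*<\infty$ of this subsection, since it enters the trace bounds below. The first step is to note that \eqref{eq:3.21B} gives $|\beta_k|^{-1}=o(\min\{d_k,d_{k+1}\})$, hence in particular $\frac{1}{\beta_k^-}\le C_1\min\{d_k,d_{k+1}\}$; together with the assumed first inequality in \eqref{eq:b_klmn} this is exactly \eqref{eq:b_klmn}, so by Theorem \ref{th:km_sb'} (and its $\beta\equiv\infty$ limit) both $\rH_{X,\beta,q}$ and $\rH_{X,q}^N$ are lower semibounded and self-adjoint. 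The entire argument can therefore be run at the level of the closed forms, whose domains sit in $W^{1,2}(\R_+\setminus X)$.

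For the first equality I would show that the coupling form $\gb[f]:=\sum_k\frac{|f(x_k+)-f(x_k-)|^2}{\beta_k}$ is $\rH_{X,q}^N$-form-compact. The engine is the one-sided trace estimate on $I_k=(x_{k-1},x_k)$, namely $|f(x_k\mp)|^2\le\frac{1}{d_k}\|f\|_{L^2(I_k)}^2+2\|f\|_{L^2(I_k)}\|f'\|_{L^2(I_k)}$; inserting $|\beta_k|^{-1}=\varepsilon_k\min\{d_k,d_{k+1}\}$ with $\varepsilon_k\to0$ and using $d^*<\infty$ yields $\frac{|f(x_k-)|^2}{|\beta_k|}\le C\varepsilon_k\big(\|f\|_{L^2(I_k)}^2+\|f'\|_{L^2(I_k)}^2\big)$, and similarly for $f(x_k+)$ on $I_{k+1}$. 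Writing $\gb=\gb_{\le N}+\gb_{>N}$, the tail $\gb_{>N}$ then carries relative form bound $\le C\sup_{k>N}\varepsilon_k\to0$, while the head $\gb_{\le N}$ involves finitely many boundary traces, each of which is a bounded functional on the form domain, so after sandwiching by $(\rH_{X,q}^N+M)^{-1/2}$ it becomes a finite-rank operator. Thus the sandwiched coupling operator is a norm limit of finite-rank operators, i.e.\ compact; in particular $\gb$ has relative form bound $<1$ and the resolvent difference of $\rH_{X,\beta,q}$ and $\rH_{X,q}^N$ is compact, giving $\sigma_{\ess}(\rH_{X,\beta,q})=\sigma_{\ess}(\rH_{X,q}^N)$.

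For the second equality I would run the identical scheme with the potential form $\gt_q[f]:=\int_{\R_+}q|f|^2\,dx$ perturbing $\rH_X^N$. Here the two-sided trace bound $\sup_{I_k}|f|^2\le\frac{1}{d_k}\|f\|_{L^2(I_k)}^2+2\|f\|_{L^2(I_k)}\|f'\|_{L^2(I_k)}$ gives $\int_{I_k}|q||f|^2\le\big(\sup_{I_k}|f|^2\big)\int_{I_k}|q|$, and \eqref{eq:3.21} supplies precisely the vanishing factor $\eta_k:=\frac{1}{d_k}\int_{I_k}|q|\to0$ needed to repeat the head/tail splitting. This shows $\gt_q$ is $\rH_X^N$-form-compact, whence $\sigma_{\ess}(\rH_{X,q}^N)=\sigma_{\ess}(\rH_X^N)$; chaining the two equalities yields \eqref{1.12}. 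The Dirichlet condition at $x_0=0$ is a rank-one modification and does not affect any of the essential spectra involved.

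The step I expect to be the main obstacle is the relative \emph{compactness} (as opposed to mere boundedness) of the two perturbations. Because $d_k\to0$ is permitted, the local trace constants $d_k^{-1}$ blow up, and the whole argument hinges on the decay rates in \eqref{eq:3.21B} and \eqref{eq:3.21} dominating this blow-up uniformly in $k$, so that $\sup_{k>N}\varepsilon_k$ and $\sup_{k>N}\eta_k$ tend to zero and the tails become negligible in relative form norm. A second delicate point, which is exactly where $d^*<\infty$ and the uniform lower semiboundedness of the individual Neumann blocks are used, is the verification that each boundary functional $f\mapsto f(x_k\pm)$ extends continuously to $\dom\big((\rH_{X,q}^N+M)^{1/2}\big)$, so that $(\rH_{X,q}^N+M)^{-1/2}\delta_{x_k}\in L^2(\R_+)$ and the head operators are genuinely finite-rank.
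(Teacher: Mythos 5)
The survey itself contains no proof of this theorem --- it is quoted from \cite{KosMal12} --- so your proposal can only be measured against the method the survey indicates, namely the remark following \eqref{eq:3.8} (repeated before Theorem \ref{th:disc_N}) that $\rH_{X,\gB,q}$ is to be treated as a form perturbation of the Neumann realization $\rH_{X,q}^N$ of \eqref{eq:HN}. Your two-step scheme --- the coupling form $\gb$ relatively form-compact with respect to $\rH_{X,q}^N$, the potential form relatively form-compact with respect to $\rH_{X}^N$, and the form version of Weyl's theorem applied twice --- is exactly that philosophy, and your tail estimates are the right ones. Moreover, your decision to reinstate $d^*<\infty$ is not optional bookkeeping: the constant in your tail bound $\frac{|f(x_k\mp)|^2}{|\gB_k|}\le C\varepsilon_k\bigl(\|f\|_{L^2(I_k)}^2+\|f'\|_{L^2(I_k)}^2\bigr)$, $I_k=(x_{k-1},x_k)$, is of size $\max\{2,(d^*)^2\}$, and the theorem as printed in the survey (with $d^*<\infty$ omitted) is in fact false. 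Indeed, take $q=0$, $\gB_k\equiv\gB_0<0$ and $d_k\to\infty$: then \eqref{eq:3.21B} holds, yet truncating the single-interaction bound state $\sgn(x-x_k)\,\mathrm{e}^{-2|x-x_k|/|\gB_0|}$ to $(x_{k-1},x_{k+1})$ yields a singular sequence showing $-4/\gB_0^2\in\sigma_{\ess}(\rH_{X,\gB})$, whereas $\rH_{X}^N\ge 0$. So the hypothesis you added is genuinely needed (the survey's statement evidently dropped it), and your proof is more careful than the statement it proves.

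The one step that does not work as written is the head of the \emph{second} perturbation. For the coupling form the head $\sum_{k\le N}\gB_k^{-1}|f(x_k+)-f(x_k-)|^2$ is a finite sum of squares of functionals bounded on the form domain, so the sandwiched operator is finite rank, as you say. But the head of the potential perturbation, $\int_0^{x_N}|q(x)|\,|f(x)|^2\,dx$, is \emph{not} a finite-rank form, so ``repeating the head/tail splitting'' verbatim breaks down at this point. The fix is compactness rather than finite rank: on $[0,x_N]$ the relevant piece of the form domain, $\bigoplus_{k\le N}W^{1,2}(I_k)$, embeds compactly into $\bigoplus_{k\le N}C(\overline{I_k})$, so for any sequence $f_n$ weakly null in the form norm one has $\sup_{[0,x_N]}|f_n|\to 0$, whence $\int_0^{x_N}|q|\,|f_n|^2\,dx\le\|q\|_{L^1(0,x_N)}\sup_{[0,x_N]}|f_n|^2\to 0$; since the sandwiched head operator is nonnegative, this weak-to-norm convergence gives its compactness, and the norm approximation by the tails (with vanishing bound $\sup_{k>N}\eta_k$) then proceeds exactly as in your first step. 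With this replacement --- and with your correct observations that \eqref{eq:3.21B} together with the first inequality in \eqref{eq:b_klmn} restores the full hypothesis \eqref{eq:b_klmn}, that the trace functionals are bounded on $\dom\bigl((\rH_{X,q}^N+M)^{1/2}\bigr)$ because $q_-$ is uniformly form bounded on the Neumann blocks, and that the Dirichlet condition at the origin is a rank-one resolvent perturbation --- the argument is complete and is, in essence, the argument of the cited source.
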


Noting that the spectrum of  $\rH_{X,q}^N$ is pure point, we can construct various examples of operators $\rH_{X,\gB,q}$ with exotic essential spectra. In particular, \eqref{1.12} implies that the structure of $\sigma_{\ess}(\rH_{X,\gB,q})$ depends only on a "geometry"\ of $X$.

\begin{corollary}[\cite{KosMal12}]
Let the assumptions of Theorem \ref{thContSpec} be satisfied. Assume also that $q$ satisfies \eqref{eq:3.21} and $\lim_{k\to\infty}d_k = 0$.  Then
     \begin{equation}\label{1.13}
\sigma_{\ess}(\rH_{X,\beta,q})
=\{0\},
   \end{equation}
i.e. the spectrum of   $\rH_{X,\beta,q}$  is pure  point and accumulates only at $0$ and $\infty$.
\end{corollary}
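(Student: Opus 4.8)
The plan is to reduce the computation of $\sigma_{\ess}(\rH_{X,\beta,q})$ to that of the \emph{free} Neumann operator $\rH_X^N$ and then to read the latter off from the explicit Neumann eigenvalues on each subinterval. Since by hypothesis both \eqref{eq:3.21B} and \eqref{eq:3.21} are in force together with the first condition in \eqref{eq:b_klmn}, all the assumptions required for the full conclusion \eqref{1.12} of Theorem \ref{thContSpec} hold, and hence
\[
\sigma_{\ess}(\rH_{X,\beta,q})=\sigma_{\ess}(\rH_{X,q}^N)=\sigma_{\ess}(\rH_X^N),
\]
where $\rH_X^N$ is the $q\equiv 0$ case of \eqref{eq:HN}. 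It therefore suffices to show that $\sigma_{\ess}(\rH_X^N)=\{0\}$ and then to upgrade this to the claimed pure point structure.

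Next I would compute $\sigma(\rH_X^N)$ directly from the direct sum decomposition $\rH_X^N=\bigoplus_{k}\rH_k^N$, where $\rH_k^N$ is the Neumann realization of $-d^2/dx^2$ on the interval $(x_{k-1},x_k)$ of length $\gd_k$. The spectrum of each block is explicit, $\sigma(\rH_k^N)=\{(\pi n/\gd_k)^2:\ n=0,1,2,\dots\}$, so $\sigma(\rH_X^N)=\overline{\bigcup_k\sigma(\rH_k^N)}$. The constant function is a Neumann eigenfunction with eigenvalue $0$ in every block, so $0$ is an eigenvalue of infinite multiplicity and $0\in\sigma_{\ess}(\rH_X^N)$. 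To exclude positive points from the essential spectrum I would invoke $\gd_k\to 0$: the smallest positive Neumann eigenvalue of the $k$-th block equals $(\pi/\gd_k)^2\to\infty$, so for every $M>0$ only finitely many blocks carry a positive eigenvalue below $M$, and each contributes finitely many. Hence $\sigma(\rH_X^N)\cap(0,M]$ is finite for each $M$, i.e. the positive part of the spectrum consists of isolated eigenvalues of finite multiplicity accumulating only at $+\infty$; thus $\sigma_{\ess}(\rH_X^N)\cap(0,\infty)=\emptyset$ and $\sigma_{\ess}(\rH_X^N)=\{0\}$, which together with the displayed reduction yields $\sigma_{\ess}(\rH_{X,\beta,q})=\{0\}$, i.e. \eqref{1.13}.

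It remains to read off the pure point statement. Because $\sigma_{\ac}(\rH_{X,\beta,q})$ and $\sigma_{\rm{sc}}(\rH_{X,\beta,q})$ are always contained in $\sigma_{\ess}(\rH_{X,\beta,q})=\{0\}$, and a single point of zero Lebesgue measure can support neither an absolutely continuous part nor a (necessarily non-atomic) singular continuous part, both are empty; $0$ is then an eigenvalue and every other spectral point is an isolated eigenvalue of finite multiplicity. Since $\rH_{X,\beta,q}$ is unbounded from above while its only finite essential point is $0$, these discrete eigenvalues can accumulate only at $0$ and at $+\infty$, so the spectrum is pure point and accumulates precisely at $0$ and $\infty$; self-adjointness of $\rH_{X,\beta,q}$, needed for these statements, is guaranteed by the results of Section \ref{ss:3.2} (e.g. Theorem \ref{prop_01}). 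The only step that is not pure bookkeeping is this final spectral-theoretic exclusion of continuous spectrum concentrated at the isolated essential point $\{0\}$; everything else is the explicit Neumann computation made possible by $\gd_k\to 0$.
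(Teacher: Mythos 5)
Your argument for \eqref{1.13} itself is correct and is essentially the reduction the paper intends: the hypotheses of the corollary are exactly those needed for the full conclusion \eqref{1.12} of Theorem \ref{thContSpec}, and since $\rH_X^N=\bigoplus_k \rH^N_k$ with $\sigma(\rH^N_k)=\{(\pi n/\gd_k)^2:\ n=0,1,2,\dots\}$, the condition $\gd_k\to 0$ pushes every positive Neumann level to $+\infty$ (for each $M>0$ only finitely many blocks contribute eigenvalues in $(0,M]$), leaving $0$ — an eigenvalue of infinite multiplicity — as the only essential spectral point. Your exclusion of absolutely continuous and singular continuous spectrum supported on the single point $\{0\}$ is also sound, since neither type of spectral measure can live on a one-point set.

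The one genuine soft spot is in the final accumulation claim. First, a harmless overstatement: $0\in\sigma_{\ess}(\rH_{X,\gB,q})$ does not make $0$ an eigenvalue of $\rH_{X,\gB,q}$ — what \eqref{1.12} transfers is the essential spectrum, not eigenvalues — but pure pointness only requires that eigenvectors span, so nothing is lost. More substantively, your reason for excluding accumulation points other than $0$ and $+\infty$ ("unbounded from above while its only finite essential point is $0$") is a non sequitur: a self-adjoint operator that is unbounded from below can perfectly well have $\sigma_{\ess}=\{0\}$ and discrete eigenvalues tending to $-\infty$. To close this you need lower semiboundedness of $\rH_{X,\gB,q}$, which does follow from the standing hypotheses: \eqref{eq:3.21B} yields $1/|\gB_k|\le C_1\min\{\gd_k,\gd_{k+1}\}$ for all $k\ge k_0$, so conditions \eqref{eq:b_klmn} hold after modifying finitely many $\gB_k$; Theorem \ref{th:km_sb'}(i) then makes the modified Hamiltonian lower semibounded, and the original differs from it by finitely many rank-one form perturbations of the type \eqref{eq:3.6}, each infinitesimally form bounded, so semiboundedness (hence the impossibility of eigenvalues accumulating at $-\infty$) is preserved. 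With that one addition your proof is complete and matches the paper's route.
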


\subsection{Negative spectrum} In contrast to the case of $\delta$-interactions, the number of negative squares is determined by the number of negative intensities.

\begin{theorem}[\cite{GolOri10, KosMal10}]\label{th:kappa-}
If the operator $\rH_{X,\gB}$ is self-adjoint, then
\be\label{eq:kappa-}
\kappa_-(\rH_{X,\gB})=\kappa_-(\gB).
\ee
In particular, the operator $\rH_{X,\gB}$ is nonnegative if and only if  $\gB_k\ge 0$, $k\in \N$.
\end{theorem}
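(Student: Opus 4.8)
The plan is to exploit the correspondence between the Hamiltonian $\rH_{X,\gB}$ and the Jacobi matrix $B_{X,\gB}$ established in Theorem \ref{th:km_sa'}(iv), which gives $\kappa_-(\rH_{X,\gB})=\kappa_-(B_{X,\gB})$ whenever $\rH_{X,\gB}$ is self-adjoint (note $q\equiv\mathbf{0}$ here, and self-adjointness forces $\gd^*<\infty$ to be irrelevant since Theorem \ref{prop_01}(i) already guarantees self-adjointness on $\R_+$). Thus the entire problem reduces to computing $\kappa_-(B_{X,\gB})$ and showing it equals $\kappa_-(\gB)$, the number of negative entries in the sequence $\gB$. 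The crucial tool is the factorization \eqref{eq:fac1}--\eqref{eq:fac2}, namely
\[
B_{X,\gB}=R_X^{-1}(I+U)D_{X,\gB}^{-1}(I+U^*)R_X^{-1}.
\]

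First I would observe that $R_X$ is a positive diagonal operator and $I+U$ is boundedly invertible on $l^2(\N)$ (its inverse being $\sum_{j\ge 0}(-U)^j$, which converges since $U$ is a shift; more carefully, $I+U$ is injective with dense range and one works with the associated sesquilinear form or with finite truncations). Writing $S:=(I+U^*)R_X^{-1}$, the factorization reads $B_{X,\gB}=S^*D_{X,\gB}^{-1}S$ with $S$ invertible. By Sylvester's law of inertia — in its form for congruences by invertible operators, or equivalently by applying it to finite sections and passing to the limit using the invariance of $\kappa_-$ under such congruence — the number of negative squares is preserved:
\[
\kappa_-(B_{X,\gB})=\kappa_-\big(S^*D_{X,\gB}^{-1}S\big)=\kappa_-\big(D_{X,\gB}^{-1}\big).
\]
Since $D_{X,\gB}^{-1}=\bigoplus_k \diag(\gd_k^{-1},\gB_k^{-1})$ is diagonal with the $\gd_k^{-1}>0$ contributing no negative eigenvalues, its negative part counts exactly the indices $k$ with $\gB_k<0$. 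Hence $\kappa_-(D_{X,\gB}^{-1})=\kappa_-(\gB)$, and chaining the equalities yields \eqref{eq:kappa-}. The final clause on nonnegativity is the special case $\kappa_-(\gB)=0$.

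The main obstacle will be making the congruence-invariance argument rigorous in the infinite-dimensional setting, since Sylvester's law is elementary only for finite matrices and $\kappa_-$ here is defined via the spectral projection $E_{B_{X,\gB}}(-\infty,0)$ rather than by counting eigenvalues. The cleanest route is to truncate: for each $N$ consider the compression $B^{(N)}=P_N B_{X,\gB} P_N$ together with the corresponding finite blocks $S^{(N)}$, $D^{(N)}$, verify the factorization survives truncation up to a controlled boundary term, apply finite-dimensional Sylvester, and then invoke monotonicity of $\kappa_-$ under compressions plus the self-adjointness of $B_{X,\gB}$ to pass $N\to\infty$. Alternatively, one can work directly at the level of the quadratic form $(B_{X,\gB}f,f)=(D_{X,\gB}^{-1}Sf,Sf)$ and use that $S$ maps the form domain bijectively onto the relevant domain, so that the maximal dimension of a subspace on which the form is negative is unchanged. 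I expect the bookkeeping in the truncation — controlling the off-diagonal coupling introduced by $I+U$ at the cutoff — to be the only genuinely technical point; everything else is structural.
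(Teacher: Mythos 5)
Your proposal follows the paper's own proof exactly: the paper likewise combines Theorem \ref{th:km_sa'}(iv) with the factorization \eqref{eq:fac1}--\eqref{eq:fac2}, asserting without further detail that the latter yields $\kappa_-(B_{X,\gB})=\kappa_-(\gB)$, so your Sylvester/truncation argument supplies precisely the details the paper leaves implicit. One small correction: $I+U$ is \emph{not} boundedly invertible on $l^2(\N)$ (the spectrum of the unilateral shift is the closed unit disk, so $-1\in\sigma(U)$ and the Neumann series does not converge in norm); only your hedged fallback — injectivity with dense range together with finite sections or the form-level congruence — is the valid route, and that is indeed what makes the congruence-invariance step rigorous.
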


\begin{proof}
By Theorem \ref{th:km_sa'}(iv), we get $\kappa_-(\rH_{X,\gB}) = \kappa_-(B_{X,\gB})$. On the other hand, it follows from  the factorization \eqref{eq:fac1}--\eqref{eq:fac2} that $\kappa_-(B_{X,\gB})= \kappa_-(\gB)$. Combining both equalities we complete the proof.
\end{proof}

\begin{remark}
The equality \eqref{eq:kappa-} was observed in \cite{AlbNiz03b} in the special case when $|X|=N<\infty$ and all intensities are negative.  In the case $\gd_*>0$, Theorem \ref{th:kappa-}  was established in \cite{GolOri10}. The assumption $\gd_*>0$ was removed in \cite{KosMal10} by using a different method.
\end{remark}

Finally, let us mention that Theorem \ref{th:kappa-} enables us to give a different proof of Corollary \ref{prop:discr3}.

\begin{proof}[Proof of Corollary \ref{prop:discr3}]
Clearly, it suffices to prove Corollary \ref{prop:discr3} in the case $q\equiv 0$. By Proposition \ref{prop:discr1}(ii), if  the spectrum of $\rH_{X,\gB}$ is purely discrete, then $\kappa_-(\gB)=\infty$. Therefore, by Theorem \ref{th:kappa-}, $\kappa_-(\rH_{X,\gB})=\infty$. However, if $\rH_{X,\gB}$ is lower semibounded, then the negative spectrum of $\rH_{X,\gB}$ has at least one finite accumulation point. This contradiction completes the proof.
\end{proof}

{\em Further notes:} In \cite{Niz03}, \cite{AlbNiz06}, \cite{BraNiz11}, Nizhnik with co-authors introduced $\delta'$-interactions on sets of a Lebesgue measure zero, for example, on Cantor type sets. In these papers, the self-adjointness and basic spectral properties of these operators have been analyzed.  A different approach to analyze the spectral properties of Hamiltonians with $\delta'$-interactions on Cantor type sets is proposed in \cite{EckKosMalTes12}.


%
%
%
%
%
%


{\bf Acknowledgments.}
The authors are grateful to Rostyslav Hryniv for the careful reading of the manuscript and helpful hints with respect to the literature.
We are also grateful to Gerald Teschl and the anonymous referee for useful remarks.



\end{document}